\newtheorem{thm}{Theorem}[section]
\newtheorem{lem}[thm]{Lemma}
\newtheorem{conj}[thm]{Conjecture}
\newtheorem{prop}[thm]{Proposition}
\newtheorem{corr}[thm]{Corollary}
\newtheorem{rem}[thm]{Remark}
\newtheorem*{sthm}{Theorem}
\def\IC{\mathbb{C}}
\def\ID{\mathbb{D}}
\def\IN{\mathbb{N}}
\def\IR{\mathbb{R}}
\def\IV{\mathbb{V}}
\def\IZ{\mathbb{Z}}
\def\DD{{\cal D}}
\def\II{{\cal I}}
\def\JJ{{\cal J}}
\def\KK{{\cal K}}
\def\NN{{\cal N}}
\def\OO{{\cal O}}
\def\SS{{\cal S}}
\def\WW{{\cal W}}
\def\XX{{\cal X}}
\def\YY{{\cal Y}}
\def\ZZ{{\cal Z}}
\def\SU{\mathrm{SU}}
\def\Tr{\mathop{\mathrm{Tr}}\nolimits}
\def\dim{{\rm dim}}
\def\af{\mathfrak{a}}
\def\gf{\mathfrak{g}}
\def \zf{\mathfrak{z}}
\def \nf{\mathfrak{n}}
\def \slf{\mathfrak{sl}}
\def \Zf{\mathfrak{Z}}
\def \eg{{\it e.g.}}
\def \ie{{\it i.e.}}
\def \cf{{\it cf.}}
\def \id{\mathbbm{1}}
\def \KT{\mathrm{KT}}
\def \rk{\mathrm{rk}}
\newcommand{\Hi}[1]{\mathrm{H}^{\frac{\infty}{2}+#1}}
\def\D{\mathcal{D}^{ch}}
\def\W{\mathbf{W}}
\def\V{\mathbf{V}}
\title{Free field realisation of the chiral universal centraliser}
\author{Christopher Beem and Sujay Nair}
\affiliation{Mathematical Institute, University of Oxford, Woodstock Road, Oxford, OX2 6GG, UK}
\abstract{In the TQFT formalism of Moore--Tachikawa for describing Higgs branches of theories of class $\mathcal{S}$, the space associated to the unpunctured sphere in type $\mathfrak{g}$ is the universal centraliser $\Zf_G$, where $\mathfrak{g}=Lie(G)$. In more physical terms, this space arises as the Coulomb branch of pure $\mathcal{N}=4$ gauge theory in three dimensions with gauge group $\check{G}$, the Langlands dual. In the analogous formalism for describing chiral algebras of class $\mathcal{S}$, the vertex algebra associated to the sphere has been dubbed the \emph{chiral universal centraliser}. In this paper, we construct an open, symplectic embedding from a cover of the Kostant--Toda lattice of type $\mathfrak{g}$ to the universal centraliser of $G$---extending a classic result of Kostant. Using this embedding and some observations on the Poisson algebraic structure of $\Zf_G$, we propose a free field realisation of the chiral universal centraliser for any simple group $G$. We exploit this realisation to develop free field realisations of chiral algebras of class $\mathcal{S}$ of type $\mathfrak{a}_1$ for theories of genus zero with $n=1,\ldots,6$ punctures. These realisations make generalised $S$-duality completely manifest, and the generalisation to $n\geqslant7$ punctures is conceptually clear, though technically burdensome.}
\begin{document}
\maketitle
\flushbottom

\section{\label{sec:intro}Introduction}

The occurrence of vertex operator algebra (VOA) structures within the operator algebras of supersymmetric quantum field theories in various dimensions (see, \eg, \cite{Beem:2013sza, Beem:2014rza, Gaiotto:2017euk, Costello:2018swh, Costello:2019, Costello:2020ndc}) has been a major theme in the study of such systems in recent years, and has led to powerful new approaches to the analysis of these (usually strongly coupled) theories. Furthermore, physical intuition can potentially provide new insights into the study of VOAs arising through these correspondences at a purely formal level, and indeed developments in this area have led to a rich dialogue between physics and pure mathematics (see, \eg, \cite{Arakawa:2018egx,Creutzig:2017qyf,Creutzig:2018lbc,Costello:2018swh,Xie:2019yds,Xie:2019vzr,Creutzig:2021ext,Beem:2022mde} for an extremeley incomplete selection).

A surprising property of many VOAs arising in a physical context is their \emph{geometric} nature. This is perhaps most clearly exemplified through the Higgs Branch Conjecture of \cite{Beem:2017ooy}, which implies that the VOAs arising in connection with four-dimensional $\mathcal{N}=2$ SCFTs are \emph{chiral quantisations} of their Higgs branches (\cf\ \cite{Arakawa:2012c2}). A related, but less well formalised, idea is that the VOAs arising in this setting admit certain (generalised) free field realisations (FFRs) the structure of which is closely connected with the Poisson geometry of the Higgs branch \cite{Beem:2019tfp,Beem:2019snk, Beem:2021jnm}. Physically, these free field realisations should be thought of as encoding the operator algebra of an SCFT in terms of degrees of freedom that survive in the effective field theory on the moduli space after (partial) Higgsing. This intuition (and the examples that bear it out) has thus far been primarily restricted to the setting of VOAs associated to four-dimensional $\mathcal{N}=2$ SCFTs, but it seems likely that the principle holds somewhat more generally.

In this paper, we will be investigating the application of this free field philosophy to a family of VOAs that are exotic from a four-dimensional perspective, but potentially more conventional in the three-dimensional setting. These are the \emph{chiral universal centralisers} defined in \cite{Arakawa:2018egx}, which were previously identified as the VOAs associated to the \emph{sphere} in the setting of chiral algebras of class $\mathcal{S}$ \cite{Beem:2014rza}. The associated variety of this VOA is the \emph{universal centraliser} $\Zf_G$, the definition of which we recall in Section \ref{sec:universal_centraliser}---this is the holomorphic-symplectic variety associated to the sphere by the Moore--Tachikawa topological field theory for class $\mathcal{S}$ Higgs branches \cite{Moore:2011ee}. While in the context of class $\mathcal{S}$ the sphere is a somewhat pathological case, the universal centraliser appears in a more physically sensible way as the Coulomb branch of pure $\mathcal{N}=4$ gauge theory in three dimensions. Thus the chiral quantisation of this space might naturally be expected to appear as a boundary VOA in the sense of \cite{Costello:2019} for pure gauge theory.

Our approach to constructing a candidate free field realisation for the chiral universal centraliser will follow the general theme of \cite{Beem:2019tfp}. We start with a detailed understanding of the Poisson geometry of the universal centraliser, an introduction to which we provice in Section \ref{sec:universal_centraliser}. In Section \ref{sec:kostant_toda_lattice}, we give an explicit characterisation of a map, $\varphi$, from the Kostant--Toda lattice to a Zariski-open subset of $\Zf_G$. Such a map was known to Kostant for the case when $G$ is of adjoint type, and in Theorem~\ref{thm:Kostant_level_set} we construct a generalisation of~\cite[Theorem 2.6]{Kostant:1979195} for any semi-simple $G$. This gives rise to Propositions~\ref{prop:phi_open_immersion} and \ref{prop:varphi-symplectic}, in which we show that $\varphi$ is an open immersion and symplectic.

Altogether, this gives a finite-dimensional analogue of a free field realisation for the Poisson algebra of functions $\mathbb{C}[\Zf_G]$. In Section \ref{sec:chiralising_centraliser} we propose a chiralisation of this construction in terms of a lattice vertex algebra that we call the chiral Kostant--Toda lattice. In particular, we define a sub-VOA of this lattice VOA (by specifying a weak generating set) that we conjecture to match the chiral universal centraliser as defined by Arakawa. In defining this sub-VOA, we encounter an (apparently novel) generalisation of the Miura transform, and in Remark~\ref{rem:deformed_Miura} we explain how that the usual Miura transform---realising principal $\WW$-algebras in products of Heisenberg vertex algebras---can be generalised to a family of such embeddings based on the lower central series of the upper nilpotent subalgebra of $\gf$.

We verify that our construction reproduces the chiral universal centraliser (defined \emph{via} Drinfel'd--Sokolov reduction) explicity for the case $G={\rm SL}_2$, and we also study the $G={\rm SL}_3$ case in detail and confirm that the VOA according to our prescription is indeed a chiral quantisation of universal centraliser. Here a direct comparison to the definition in terms of DS reduction would require a detailed analysis of said reduction, and this has not appeared in the literature to date so far as the authors are aware. Some interesting structure regarding gradings and simplicity of these VOAs become apparent in our examples, and we elaborate on this in Section~\ref{subsec:interlude}, where we define a ``more universal'' centraliser that returns the universal centraliser by central quotient.

An additional benefit of our free field construction is that for chiral algebras of class $\mathcal{S}$ corresponding to physical theories (so with three or more punctures), the sphere free field realisation can act as the starting point for free field realisations that treat all punctures on equal footing (and so manifest generalised $S$-duality at the level of FFRs). In Section \ref{sec:FF_genus_zero} we consider the case of $\mathfrak{a}_1$ theories with various numbers of punctures. We give free field realisations for these vertex algebras that incorporate the sphere FFR, which leads to a very uniform construction. Analogous realisations for other class $\mathcal{S}$ families should arise in much the same way. We conclude in Section \ref{sec:conclusions} with several comments and open questions. The proof of Theorem~\ref{thm:Poisson_generation}, regarding Poisson generation in the ${\rm SL}_N$ case, is relegated to Appendix \ref{app:proof_Poisson}.

The OPE computations in this paper were performed using the \texttt{OPEdefs} package, which was developed by Kris Thielemans \cite{Thielemans:1994er}, for \texttt{Mathematica}.

\section{\label{sec:universal_centraliser}The universal centraliser}

The universal centraliser admits several equivalent definitions, some of which arise naturally by way of three- and four-dimensional physics. In this section we will review these definitions and some important properties of the universal centraliser, and also establish notation for what follows.

\subsection{\label{subsec:defns_univ}Background and definitions}

Let $G$ be a simple algebraic Lie group and $\gf=\mathrm{Lie}(G)$. Let $\gf^\ast$ be the linear dual of $\gf$, and using the Killing form $(\cdot,\cdot)$ one may identify $\gf$ with $\gf^\ast$ via the isomorphism $x\mapsto (x,\cdot)$ for $x\in\gf$.

Furthermore, let $B\subset G$ be a Borel subgroup with splitting $B=N\rtimes T$ for a maximal torus $T$ and unipotent group $N$. The Lie algebra $\gf$ admits a Cartan decomposition $\gf= \mathfrak{n_+}\oplus \mathfrak{t}\oplus \mathfrak{n_-}$, where $\mathfrak{t}=\mathrm{Lie}~T$ and $\mathfrak{n}_+ = \mathrm{Lie}~N$. (For future reference, we also denote by $N_-$ the lower unipotent group with $\mathfrak{n}_- = \mathrm{Lie}~N_-$.)

We denote adjoint and coadjoint actions of $G$ on $\gf$ by $\mathrm{Ad}$ and $\mathrm{Ad}^\ast$, respectively. Similarly, $\mathrm{ad}$ and $\mathrm{ad}^\ast$ denote the respective $\gf$-actions. The stabiliser, $G_x$, of an element $x\in\gf^\ast$ is the subgroup of $G$ that leaves $x$ invariant under the coadjoint action,
\begin{equation}
\label{eq:stabiliser_def}
    G_x = \{g\in G |\, \mathrm{Ad}^\ast_g\, x=x \}~.
\end{equation}
The Zariski open subset $\gf^\ast_{\rm reg}\subset \gf^\ast$ is defined to be the set of all elements whose centralisers have dimension equal to the rank of $\gf$. This is the complement of the union of non-principal nilpotent orbits in $\gf^\ast$, or alternatively, the union of all non-nilpotent orbits along with the principal nilpotent orbit in $\gf^\ast$.

Let $T^\ast G$ be the total space of the cotangent bundle of $G$; one has the trivialisation $T^\ast G\cong \gf^\ast\times G$ using the realisation of $\gf$ by left-invariant vector fields on $G$. On the cotangent bundle $T^\ast G$ there are two commuting, Hamiltonian $G$ actions, with moment maps $\mu_L$ and $\mu_R$ given by
\begin{equation}
\label{eq:T*G_moment_map}
    \mu_L(g,x) = x~,\qquad \mu_R(g,x) = \mathrm{Ad}_g^*\,x~,
\end{equation}
for $(g,x)\in G\times \gf^*\cong T^\ast G$. Let $\mu_{\rm diag} = \mu_L - \mu_R$ be the moment map for the diagonal $G$ action. The universal centraliser is then defined as the following diagonal Hamiltonian reduction \cite{Bezrukavnikov:2003},\!\footnote{Note that here our $\Zf_G$ is denoted by $\Zf^G_\gf$.}
\begin{equation}
\label{eq:centraliser_as_diagonal_DS}
	\Zf_G = \big(\mu_{\rm diag}^{-1}(0)\cap (\gf^\ast_{\rm reg}\times G)\big)/\!\!/G_{\rm diag}~.
\end{equation}
This definition gives the universal centraliser the structure of a Poisson variety (and in fact a symplectic variety). Furthermore, $\Zf_G$ is naturally a scheme over the orbit space $\gf^\ast_{\rm reg}/\!\!/G\cong \gf^*/\!\!/G$. We let $(P_i)_{i=1}^{\rk~\gf}$ denote the generators of the $\IC[\gf^\ast]^G$ (the \emph{Harish-Chandra centre}) corresponding to the fundamental invariants of $\gf$.

In terms of class $\SS$ physics, $\Zf_G$ can be formally identified as the Higgs branch associated to an unpunctured sphere. While there is no physical SCFT in four-dimensions for which this is the Higgs branch, one can still interpret such an object in terms of, \eg, the Moore--Tachikawa TQFT.\!\footnote{Or alternatively, in terms of compactification of the $(2,0)$ theory on a Riemann surface of finite size, as in \cite{Gaiotto:2011xs}.} In that setting, the operation of closing full punctures on the UV curve is implemented by performing Kostant--Whittaker (KW) reduction on the associated symplectic variety with respect to the Hamiltonian $G$-actions associated to the closed punctures. The sphere can then be obtained, for instance, by closing two punctures on the cylinder, which itself is associated to $T^\ast G$. This leads to another definition of $\Zf_G$ as the two-sided KW reduction of the cotangent bundle $T^*G$, as seen in~\cite{Ginzburg:2018Nil}. This perspective will also have the most natural link to the chiralisation of $\Zf_G$.

Let us make this characterisation more concrete. Let $\Delta$ denote the set of (positive) simple roots of $\gf$, and let
\begin{equation}
\label{eq:principal_lowering_def}
	f = \sum_{\alpha\in\Delta}e_{-\alpha}~,
\end{equation}
be the sum of the negative simple roots. By the Jacobson--Morozov theorem, such an $f$ can be completed to an $\slf_2$-triple $(e,h,f)$. Let $\chi\in\gf^*$ be the (Killing) dual of $f$.

The moment maps on $T^*G$ can be restricted to $\nf_-^*$-valued moment maps by means of the composition $\bar{\mu}_{L,R}:T^*G \rightarrow \gf^* \twoheadrightarrow \gf^*/\mathfrak{b}_+^*\cong \nf^*_-$, where $\mathfrak{b}_+ = Lie(B)$ is our Borel subalgebra. The maximal (upper) unipotent subgroups of the left and right $G$-actions are denoted by $N_L$ and $N_R$, respectively. The KW realisation of the universal centraliser then takes the following form,
\begin{equation}
\label{eq:double_KW_reduction}
    \Zf_G\cong T^* G \overset{~\chi}{\,/\!\!/\,} (N_L\times N_R) = (\bar{\mu}^{-1}_L(\chi)\cap\bar{\mu}^{-1}_R(\chi))/\!\!/(N_L\times N_R)~,
\end{equation}
which again encodes $\Zf_G$ as a symplectic variety.

Now let $\gf^e\subset \gf$ be the $\mathrm{ad}_e$-invariant subspace. The \emph{Slodowy slice} $S_f\subset\gf^\ast_{\rm reg}\subset\gf^\ast$ is the image of $f+\gf^e$ under the Killing isomorphism, or more concretely,
\begin{equation}
\label{eq:slodowy_def}
    S_f = \chi_f + \{(x,\cdot)\in \gf^* |\,x\in\gf^e \}~.
\end{equation}
All $G$-orbits in $\gf^\ast_{\rm reg}$ intersect $S_f$ transversally~\cite{Kostant:1978}, so in particular,
\begin{equation}
\label{eq:greg_from_slodowy}
    \gf^\ast_{\rm reg} = G\,\cdot S_f~.
\end{equation}
This structure leads to Kostant's isomorphism $\gf^*\,/\!\!/\,G \cong S_f$. At the level of varieties (schemes), performing KW reduction on a Hamiltonian $G$-space $X$ (with respect to the principal nilpotent conjugacy class) is equivalent to taking the fibre-product of $X$ with $S_f$ over $\gf^*$, see~\cite{Ginzburg:2008HC}. We can perform two-sided KW reduction in stages, and after taking just the reduction by the left action we have
\begin{equation}
\label{eq:first_KW_reduction}
	T^*G\overset{\chi}{\,/\!\!/\,}N_L \cong T^*G\!\!\!\underset{\mu_L,\gf^*}{\times}\!\!\!S_f = G\times S_f~.
\end{equation}
The variety $G\times S_f$ (with the symplectic structure induced by KW reduction) is also known as the equivariant Slodowy slice \cite{Losev1, Losev2}. This inherits the right-action moment map, and performing the second reduction then gives
\begin{equation}
\label{eq:second_KW_reduction}
\begin{split}
	\Zf_G\cong(G\times S_f)\overset{\chi}{\,/\!\!/\,}N_R &\cong (G\times S_f)\underset{\mu_R,\gf^*}{\times}S_f \\
	&\cong \{(g,s)\in G\times S_f \,|\, \mathrm{Ad}^*_g s \in S_f\}~.
\end{split}
\end{equation}
This gives a definition of $\Zf_G$ as the group scheme centralising points in $S_f$, so in particular as a group scheme over $S_f\cong \IC^{\rk\,\gf}$. $S_f$ can then be thought of as the base of an integrable system on $\Zf_G$ with action variables given by the (restriction to $S_f$ of the) generators of the Harish-Chandra centre.

The universal centraliser of type $G$ also arises in \emph{three-dimensional} physics as the Coulomb branch of pure $\NN=4$ gauge theory with gauge group $\check{G}$, the Langlands dual of $G$. (This is the three-dimensional mirror for the circle compactification of the class $\SS$ theory associated to the two-sphere.) The Coulomb branch of pure $\SU(2)$ gauge theory was identified with the Atiyah--Hitchin manifold in~\cite{Seiberg:1996nz}, and based in part upon this and further evidence from \cite{Argyres:1994xh,Martinec:1995by}, Teleman predicted in~\cite{Teleman:2014jaa} that the Coulomb branch of pure $\mathcal{N}=4$ gauge theory for any $\check{G}$ could be identified with the (spectrum of the) Borel--Moore equivariant homology ring of the affine Grassmannian $Gr_{\check{G}}$, which had been computed earlier by Bezrukavnikov--Finkelberg--Mirkovi\'c (BFM)~\cite{Bezrukavnikov:2003}. This definition, in terms of the affine Grassmannian, is confirmed by the construction of three-dimensional Coulomb branches by Braverman--Finkelberg--Nakajima (BFN)~\cite{Nakajima:2015txa,Braverman:2016wma, Braverman:2017ofm, Braverman:2016pwk}.\!\footnote{A physics-motivated computation of the Coulomb branches of $\NN=4$ gauge theories by abelianisation techniques appeared contemporaneously in~\cite{Bullimore:2015lsa}.}

For completeness, we recall this definition here. Let $\check{G}(\OO)= {\rm Hom}(\ID,G)$ and $\check{G}(\KK) = {\rm Hom}(\ID^\times , G)$ be the spaces of homorphisms from the formal disc, $\ID= \mathrm{Spec}~\IC[\![t]\!]$ and the formal punctured disc, $\ID^\times = \mathrm{Spec}~\IC(\!(t)\!)$, respectively, to $\check{G}$. The affine Grassmannian is the quotient,
\begin{equation}
\label{eq:affine_grassmannian}
    \mathrm{Gr}_{\check{G}} = \check{G}(\KK)/\check{G}(\OO)~.
\end{equation}
The universal centraliser is then identified with (the spectrum of) the $\check{G}(\OO)$-equivariant Borel--Moore homology,
\begin{equation}
\label{eq:universal_centraliser_from_affine_grassmannian}
    \Zf_G \cong \mathrm{Spec}~ H_{\bullet}^{\check{G}(\OO)}(\mathrm{Gr}_{\check{G}})~.
\end{equation}
From this definition, realising the symplectic structure on $\Zf_G$ is a more elaborate procedure than in the previous classical definitions.

\subsection{\label{subsec:eg_rank_one}Examples in rank one}

To make these spaces a bit more tangible, we consider in detail the two rank-one examples: $G={\rm SL}_2$ and $G={\rm PGL}_2$.

By the Killing isomorphism, $S_f$ is identified with the elements of $\slf_2$ of the form
\begin{equation}
\label{eq:sl2_slodowy_slice}
	\begin{pmatrix}
	\,0\, & \,S\,\\
	\,1\, & \,0\,
	\end{pmatrix}
	~,\quad S\in\IC~.
\end{equation}
The most general element of ${\rm SL}_2$ that commutes with such a matrix is of the form
\begin{equation}
\label{eq:sl2_centraliser_matrix}
	\sqrt{-1}\begin{pmatrix}
	Y & S X\\
	X & Y
	\end{pmatrix}
	~,\quad X,Y,S\in \IC~,
\end{equation}
where the overall factor of $\sqrt{-1}$ is purely conventional. Therefore, $\Zf_{{\rm SL}_2}\subset\mathbb{A}^3$ can be realised as an affine variety with coordinate ring
\begin{equation}
\label{eq:sl2_centraliser_coordinate_ring}
	I_{{\rm SL}_2} = \frac{\IC[X,Y,S]}{\langle SX^2 - Y^2 - 1\rangle}~,
\end{equation}
which is the presentation appearing in~\cite{Bezrukavnikov:2003}.

The Poisson bracket can be computed by carrying out the two sided KW reduction of $T^*{{\rm SL}_2}$.\!\footnote{One can also obtain the Poisson brackets from the blow-up computation of~\cite{Bezrukavnikov:2003}.}
\begin{equation}
\label{eq:sl2_centraliser_poisson_brackets}
	\{S,X\} = Y~,\quad \{S,Y\}= SX~,\quad \{Y,X\}=\frac{1}{2}X^2~,
\end{equation}
with all other Poisson brackets vanishing. We observe that as a Poisson algebra, $I_{{\rm SL}_2}$ is generated by just $S$ and $X$. This point will arise again later when we discuss the general case for ${\rm SL}_N$.

Now for the adjoint case $\Zf_{{\rm PGL}_2}$, we have
\begin{equation}
\label{eq:pgl2_centraliser_quotient}
	\Zf_{{\rm PGL}_2} \cong Z{({\rm SL}_2})\backslash\Zf_{{\rm SL}_2}~,
\end{equation}
where the left action of $Z({{\rm SL}_2})= \IZ/2\IZ$ is given by left multiplication by $-\id$ on the fibres. On functions this action is given by
\begin{equation}
\label{eq:pgl2_center_action}
	X\mapsto -X~,\quad Y\mapsto - Y~.
\end{equation}
Thus,
\begin{equation}
\label{eq:pgl2_centraliser_relations}
	I_{{\rm PGL}_2} \cong \frac{\IC[S,X^2,XY,Y^2]}{\langle SX^2 - Y^2 -1\rangle}\cong \frac{ \IC[S,A,B,C]}{\langle B^2 - AC,SA - C-1\rangle}~.
\end{equation}
The Poisson structure can be obtained by pullback along the quotient map $\Zf_{{\rm SL}_2} \rightarrow Z({{\rm SL}_2})\backslash \Zf_{{\rm SL}_2} \cong \Zf_{{\rm PGL}_2}$.

\subsection{\label{subsec:eq_rank_two}Example in rank two}

It will be useful in what follows to also have an example in rank two at our disposal. Here we choose the universal centraliser for $G={\rm SL}_3$.

Once again by using the Killing isomorphism, the Slodowy slice $S_f$ can be identified as an affine subspaces in $\gf=\slf_3$ of the form
\begin{equation}
\label{eq:sl3_slodowy_slice}
	\begin{pmatrix}
	0 & \frac12 S & W\\
	1 & 0 &\frac12 S \\
	0 & 1 & 0
	\end{pmatrix}
	~,\quad S,W\in \IC~.
\end{equation}

Any element of ${\rm SL}_3$ that stabilises a point in the slice can be written as
\begin{equation}
\label{eq:sl3_centraliser_matrix}
	\begin{pmatrix}
	Z & WX + \frac12 SY & \frac14 S^2X+ W Y \\
	Y & \frac12 SX +  Z &  W X + \frac12 S Y \\
	X & Y & Z
	\end{pmatrix}~,
\end{equation}
with $X$, $Y$, and $Z$ satisfying the following determinant relation,
\begin{equation}
\label{eq:sl3_determinant_relation}
\begin{split}
	D_{{\rm SL}_3}=  &-\frac{1}{8} S^3 X^3 + W^2 X^3 + \frac{1}{2} S W X^2 Y + \frac{1}{2} S^2 X Y^2 + W Y^3 - \frac{1}{4} S^2 X^2 Z\\ & - 3 W X Y Z - S Y^2 Z + \frac{1}{2} S X Z^2 + Z^3 \overset{!}{=}1~.
\end{split}
\end{equation}
Thus, we can identify $\Zf_{{\rm SL}_3}$ as an affine variety in $\mathbb{A}^5$ with coordinate ring
\begin{equation}
\label{eq:sl3_centraliser_coordinate_ring}
	I_{{\rm SL}_3} \cong \frac{\IC[S,W,X,Y,Z]}{\langle D_{{\rm SL}_3}-1\rangle}~.
\end{equation}
The Poisson structure can again be computed from the two-sided KW reduction, giving
\begin{alignat}{3}
\label{eq:sl3_Poisson}
	\{S,X\} &= -Y~,&\quad \{S,Y\} &= -Z-\frac{1}{2} SX~,&\quad \{S,Z\} &= -SY - WX~,\\
	\{W,X\} &= -Z +\frac{1}{6} S X~,&\quad \{W,Y\} &= -WX -\frac{1}{3}S Y~,&\quad \{W,Z\} &= - WY +\frac{1}{6} S Z -\frac{1}{4} S^2 X~,\\
	\{Y,X\} &=  \frac{2}{3}X^2~,& \quad \{Z,X\} &= -\frac{5}{6} XY ~,&\quad \{Z,Y\} &= -\frac{1}{3} Y^2 + \frac{1}{6} XZ -\frac{1}{4} SX^2~.
\end{alignat}
Observe again that $I_{{\rm SL}_3}$ is generated, as a Poisson algebra, by the generators $S,W$ of the Harish-Chandra centre and the coordinate $X$ that is conjugate to the bottom left corner of the ${\rm SL}_3$ matrix. The algebraic generator $Z$ is a shift of the Poisson bracket $\{S,Y\}$, but of course one is free to use the generating set $\big(S,W,X,\{S,X\},\{S,\{S,Y\}\}\big)$ instead.

\subsection{\label{subsec:sln_poisson_structure}Poisson algebraic structure for \texorpdfstring{${\rm SL}_N$}{SLN}}

In this section, we will codify some patterns we have observed in the examples of ${\rm SL}_2$ and ${\rm SL}_3$. We generalise these to statements about ${\rm SL}_N$ and discuss their further generalisation to arbitrary simple algebraic Lie groups.

We realise ${\rm SL}_N$ as the closed subgroup of ${\rm GL}_N$ satisfying the unit determinant condition. Let $(g_{ij})_{i,j=1}^{N}$ be the linear coordinates on ${\rm GL}_N$ that are conjugate to the matrix entries. These restrict to linear coordinates functions on ${\rm SL}_N$.

In both of the examples ${\rm SL}_2$ and ${\rm SL}_3$, the ring $I_G$ was generated by the Slodowy slice generators and the matrix entries in the bottom row of the group element. This observation generalises as follows.

\begin{prop}\label{prop:ring_generators_sln}
Let $s\in S_f$. An element, $g$, of $\,{\rm SL}_N$ that stabilises $s$ is completely fixed by specifying the matrix entries of the bottom row of $g$.
\end{prop}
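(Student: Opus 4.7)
The first step is to rephrase the stabiliser condition using the Killing form identification $\gf \cong \gf^*$, under which the coadjoint action becomes the ordinary adjoint action. Thus $g$ stabilises $s \in S_f$ if and only if $g s g^{-1} = s$ as matrices, i.e.\ $g$ commutes with $s$. The task therefore reduces to showing that the bottom-row map $g \mapsto e_N^T g$ (where $e_N$ is the last standard basis vector of $\IC^N$) is injective on the centraliser of $s$ in $\mathrm{SL}_N$. For this I would exploit a cyclic vector property of $s$. Using the principal $\slf_2$-triple in which $e$ has entries $1$ on the superdiagonal, the Slodowy slice consists of matrices of the form $s = f + \sum_{i=1}^{N-1} c_i\, e^i$, with $1$'s on the subdiagonal, zeros on the diagonal, and constant entries $c_i$ on the $i$-th superdiagonal.

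The heart of the argument is the observation that $e_N^T$ is a left cyclic vector for $s$: the row vectors $e_N^T,\, e_N^T s,\, \ldots,\, e_N^T s^{N-1}$ are linearly independent, and hence form a basis of the space of row vectors. This follows by a direct induction on $k$: because of the $1$'s on the subdiagonal, one has $e_N^T s^k = e_{N-k}^T + (\text{linear combination of } e_{N-k+1}^T,\ldots,e_N^T)$, so the collection is triangular with respect to the reversed standard basis.

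Given this, the proposition follows quickly. Since $g$ commutes with every power of $s$, we have $e_N^T s^k g = e_N^T g\, s^k$ for each $k = 0, \ldots, N-1$. As $k$ varies, the left-hand sides amount to evaluating $g$ on a basis of row vectors and therefore determine $g$ completely, while the right-hand sides depend only on the bottom row $e_N^T g$ and on the (fixed) element $s$. Hence specifying the bottom row pins down all remaining matrix entries uniquely. No substantive obstacle arises here: the regularity of $s$, which underlies the cyclic vector property, is built into the definition of the Slodowy slice via $S_f \subset \gf^*_{\rm reg}$. Equivalently, one may invoke the classical fact that the centraliser of a regular matrix consists of polynomials in that matrix, and then deduce injectivity of the bottom-row linear functional from the same cyclic vector argument.
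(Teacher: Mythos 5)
Your proof is correct and rests on the same key computation as the paper's: the subdiagonal of $1$'s in $s$ forces $e_N^T s^k = e_{N-k}^T + (\text{terms supported on } e_{N-k+1}^T,\ldots,e_N^T)$, so the bottom rows of $\id, s, \ldots, s^{N-1}$ are unitriangular with respect to the reversed standard basis. The paper packages this fact as an explicit basis $(v_i)$ of $\mathrm{ker}\,[-,s]$ satisfying $(v_i)_{Nj}=\delta_{ij}$, whereas you use it to exhibit $e_N^T$ as a cyclic covector and deduce injectivity of the bottom-row map from the identity $e_N^T s^k g = e_N^T g\, s^k$; these are dual formulations of essentially the same argument.
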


In other words, if $g_{ij}$ are the coordinate functions conjugate to the matrix entries, then $I_{{\rm SL}_N}$ is generated, as a ring, by the restrictions of the fundamental invariants to $S_f$, $(P_i)_{i=1}^{N}$, and by $(g_{Ni})_{i=1}^{N}$.

\begin{proof}
Let $g\in {\rm SL}_N$ and $s\in S_f$, then the condition $\mathrm{Ad}^\ast_g s = s$ is equivalent to requiring $[g,s]=0$, where we think of $g$ as a matrix of $\mathfrak{gl}_N$.

The kernel of $[-,s]$ in $\mathfrak{gl}_N$ is $N$-dimensional since $S_f\subset \mathfrak{gl}_N$ is regular. The proposition will follow if we can show that there is a basis, $(v_i)_{i=1}^{N}$, of $\mathrm{ker}~[-,s]$ consisting of matrices $v_i$ such that $(v_i)_{Nj}=\delta_{ij}$. In words, $v_i$ is an element of $\mathrm{ker}~[-,s]$ where the $i$th entry along the bottom row is equal to one and the rest of the bottom row entries vanish.

We can explicitly construct such a basis. First, let $v_N = \id$,~\ie, the identity matrix. We also fix $v_{N-1}= s$. It is clear that both $v_{N-1}$ and $v_{N}$ commute with $s$ and satisfy the required conditions on the bottom row.

Now consider the matrix product, $s^k$, the entries $(s^k)_{Ni}$ are zero for $i<N-k$ and for $i=N-K+1$. The entry $(s^k)_{N,N-k}=1$ and the entries $s^k_{Ni}$ for $k>N-k$ are generically nonzero. By induction, we can take linear combinations,
\begin{equation}
\label{eq:sln_commuting_matrices_from_slodowy}
	s^k + \sum_{j=0}^{k-2}c_j s^j~,
\end{equation}
for $c_j\in \IC$ to set all the bottom entries to zero, except the $(N,N-k)$ entry. We set $v_{N-k}$ equal to this linear combination.

This provides a linearly independent set $(v_i)_{i=1}^N$ which all lie in the kernel of $[-,s]$, since $s$ commutes with all powers $s^k$. Therefore the $v_i$ form a basis of $\mathrm{ker}~[-,s]$ and we are done.
\end{proof}

Later on this Proposition will yield a significant simplification in the procedure for computing the ring $I_G$ in terms of Kostant--Toda lattice generators. There is also another simplification to be had.

\begin{thm}\label{thm:Poisson_generation}
The Poisson algebra $I_{{\rm SL}_N}$ is Poisson generated by $(P_i)_{i=1}^{\rk\,\gf}$ and the generator $X =g_{N1}$.
\end{thm}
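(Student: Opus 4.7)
My strategy combines Proposition~\ref{prop:ring_generators_sln} with an inductive computation. By that proposition it suffices to show that each coordinate $g_{N,j+1}$ with $1\leqslant j\leqslant N-1$ lies in the Poisson subalgebra generated by $X=g_{N,1}$ and the fundamental invariants $(P_i)_{i=1}^{N-1}$. I would do this using the quadratic Casimir $P_2=\tfrac12\Tr(s^2)$ on $S_f$, arguing that iterated Poisson brackets $\{P_2,-\}$ shift the bottom-row index by one modulo previously-constructed data; the ${\rm SL}_2$ and ${\rm SL}_3$ computations in~\eqref{eq:sl2_centraliser_poisson_brackets} and~\eqref{eq:sl3_Poisson} already exhibit this shift structure explicitly.

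The key identity to establish is
\begin{equation*}
\{P_2,\,g_{N,j}\} \;=\; c_j\, g_{N,j+1} \;+\; R_j\bigl(P_1,\ldots,P_{N-1},\,g_{N,1},\ldots,g_{N,j}\bigr),\qquad 1\leqslant j\leqslant N-1,
\end{equation*}
with $c_j\neq 0$ and $R_j$ polynomial. To derive it I would work in the equivariant Slodowy slice realisation $T^\ast G\overset{\chi}{/\!\!/}N_L \cong G\times S_f$ of~\eqref{eq:first_KW_reduction}, in which the Hamiltonian flow of $P_2$ (a function of $s$ alone) acts on the $G$-factor by one-sided multiplication by an element of the centraliser of $s$ proportional to $s$ itself. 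Since the principal $s$ of the slice has its only nonzero bottom-row entry at position $(N,N-1)$, the resulting Hamiltonian derivative of $g_{N,j}$ is, up to a nonzero constant, the matrix entry $g_{N-1,j}$. Expanding $g=\sum_{i=1}^{N}g_{N,i}\,v_i(s)$ in the basis constructed in the proof of Proposition~\ref{prop:ring_generators_sln} then yields $g_{N-1,j}=\sum_i g_{N,i}(v_i)_{N-1,j}(s)$, so the problem reduces to determining the matrix entries $(v_i)_{N-1,j}$ as polynomials in the Slodowy coordinates. Using $v_{N-k}=s^k+\sum_{\ell=0}^{k-2}c_\ell s^\ell$ together with the strictly lower-triangular part of $s$ (which has $s_{i+1,i}=1$ and is otherwise zero below the diagonal), a short argument shows that $(s^m)_{N-1,j}$ vanishes identically for $m<N-1-j$; consequently $(v_i)_{N-1,j}=0$ for $i>j+1$ and $(v_{j+1})_{N-1,j}=1$ exactly. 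The remaining contributions with $i\leqslant j$ combine into the polynomial $R_j$ of the desired form.

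Given this identity the induction is immediate: rearranging produces $g_{N,j+1}$ as a Poisson polynomial in $X$ and $(P_i)$ at each step, using the inductive hypothesis for the arguments of $R_j$. The principal technical obstacle I foresee is making the identification of the Hamiltonian flow of $P_2$ on $\Zf_G$ with one-sided multiplication by $s$ fully precise, since the further constraint $\mathrm{Ad}^\ast_g s\in S_f$ implementing the second $N_R$-reduction in~\eqref{eq:double_KW_reduction} can introduce Slodowy-dependent corrections to the naive flow---this is visible, for example, in the discrepancy between $\{S,Z\}=-SY-WX$ from~\eqref{eq:sl3_Poisson} and the matrix entry $-g_{2,3}=-WX-\tfrac12 SY$ read off from~\eqref{eq:sl3_centraliser_matrix} (though this particular $j=N$ case lies outside the range of the recursion). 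Verifying that such corrections contribute only to $R_j$ and not to the leading coefficient $c_j$ should follow from the Slodowy-grading argument sketched above, but requires careful bookkeeping in arbitrary rank.
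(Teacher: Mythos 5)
Your proposal runs on the same engine as the paper's Appendix~\ref{app:proof_Poisson}: the Poisson bracket with the quadratic invariant acts on the bottom row of $g$ as the principal lowering operator, shifting the column index by one modulo lower-index entries multiplied by polynomials in the $P_k$ (compare your key identity with \eqref{eq:P1_raising} and Lemma~\ref{lem:form_of_extension}; note the paper indexes the quadratic invariant as $P_1$). The implementations differ, though. The paper never invokes the Hamiltonian flow on $\Zf_G$ directly; it works upstairs in the ambient space $\gf^*_L\times\gf^*_R\times G$, constructs the invariant section $\sigma$ extending functions off the slice $S_f\times S_f\times G$, and proves by induction that $\sigma(g_{Ni})=\widetilde g_{Ni}+\sum_j K_{ij}\widetilde g_{Nj}$ with $K_{ij}$ in the slice ideal $\II$, so that the correction polynomials arise from $\{\sigma(P_{1,R}),K_{ij}\}$ reduced modulo $\II$. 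You instead work downstairs, identify the flow of the quadratic invariant with one-sided multiplication by $s$, and obtain the corrections $R_j$ by expanding $g_{N-1,j}$ in the basis $v_i(s)$ from the proof of Proposition~\ref{prop:ring_generators_sln}; your vanishing statement $(s^m)_{N-1,j}=0$ for $m<N-1-j$ is correct (each factor of $s$ lowers the column index by at most one) and yields a cleaner closed form for $R_j$ than the paper obtains.

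The step you leave open---that the reduced flow is \emph{exactly} multiplication by $s$, with no corrections from the second $N_R$-reduction---is precisely what the paper's section $\sigma$ is built to control, and it closes without the bookkeeping you anticipate. Extend the quadratic invariant to the ambient space as the full invariant $\tfrac12\Tr(\mu_L^2)$ rather than its restriction to $\chi+\mathfrak b^*$. Being a Casimir of the Kirillov--Kostant bracket on $\gf^*_L$, and Poisson-commuting with all of $\IC[\gf^*_R]$, it commutes with every generator of $\II$ and hence with all the correction terms $K_{ij}$; the terms $K_{ij}\{P_1(\mu_L),\widetilde g_{Nj}\}$ vanish on the slice because $K_{ij}\in\II$. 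Therefore $\{P_1,g_{Nj}\}_{\Zf_G}=\{P_1(\mu_L),\widetilde g_{Nj}\}|_{S_f\times S_f\times G}=\pm(sg)_{Nj}$ on the nose, for every $j$. The ``discrepancy'' you cite as evidence of corrections is in fact consistent with this: recomputing $\{S,Z\}$ directly from the Kostant--Toda expressions \eqref{eq:sl3_matrix_generators_classical} gives $-\tfrac12 SY-WX=-g_{23}$, so the entry $-SY-WX$ in \eqref{eq:sl3_Poisson} appears to be a typo rather than a genuine deviation from the flow. With the Casimir observation supplied, your argument is complete and, if anything, more transparent than the appendix.
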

\begin{proof}
The proof of this theorem relies upon machinery that will be introduced in later sections. As such we relegate it to Appendix~\ref{app:proof_Poisson}.
\end{proof}

We expect a generalisation of these results to any simple algebraic Lie group $G$ to hold as well. To justify this, let us characterise this preferred element $X$ in more representation theoretic terms. By the Peter--Weyl decomposition, the coordinate ring of $G$ decomposes into $G-G$ bi-modules according to
\begin{equation}
\label{eq:peter_weyl_functions}
	\IC[G] \cong \bigoplus_{\lambda\in P^+} V_{\lambda}\otimes V_{\lambda}^*~,
\end{equation}
where $P^+$ is the set of integral dominant weights of $\gf$, $V_{\lambda}$ is the finite-dimensional representation of $\gf$ with highest weight $\lambda$ and $V_{\lambda}^*$ is the contragredient dual representation. The vector space of functions on $T^*G\cong \gf^*\times G$ therefore takes the form
\begin{equation}
\label{eq:peter_weyl_cotangent_functions}
	\IC[T^*G] = \mathrm{Sym}~\gf~\otimes \bigoplus_{\lambda\in P^+} V_\lambda\otimes V_{\lambda^*}~.
\end{equation}
Let $V_{F}$ be the lowest dimensional faithful representation of $G$. If $G$ is not of type $D_{2n}$ then $V_{F}$ is irreducible, so let us make this assumption for the moment. Since $V_F$ is faithful, every irreducible representation $V_\lambda$ will appear in the decomposition of the tensor powers of $V_F$. In light of the Peter-Weyl decomposition, this means that one can choose as generators of $\IC[G]$ the functions transforming in the representation $V_F\otimes V_F^*$ in \eqref{eq:peter_weyl_functions}. For example, if $G={\rm SL}_n$ then $V_F$ is the defining representation. One usually picks the linear coordinates $\{g_{ij}\}$---conjugate to the matrix entries---to be generators of $\IC[{\rm SL}_n]$, and indeed these constitute precisely the term $V_F\otimes V_F^*$ in \eqref{eq:peter_weyl_functions}.

Let $v_F\in V_F$ be the highest-weight vector, then $V_F$ is generated, as a $\gf$-module by $v_F$. Now, suppose $\widetilde{X}$ is the function in $\IC[G]\subset\IC[T^*G]$ transforming in the highest weight state $v_F\otimes v_F^*$. Then $\widetilde{X}$ is not in the ideal generated by $\overline{\mu}_L$ and $\overline{\mu}_R$, and moreover it is $N_L\times N_R$ invariant---since it is the highest-weight state. As a result, $\widetilde{X}$ will descend unmodified to the two-sided KW reduction of $T^*G$. For ${\rm SL}_N$, $\widetilde{X}$ is the linear coordinate function conjugate to the matrix entry $g_{N1}$, and after KW reduction it will become the function $X$ introduced in Theorem~\ref{thm:Poisson_generation}.

The bottom row $(g_{Ni})_{i=1}^N$ that algebraically generate (along with the fundamental invariants) $I_{\rm SL_N}$ are identified with the vector subspace $v_F\otimes V_{F}^*$ of $V_{F}\otimes V_{F}^*$. This provides a more abstract characterisation of the algebraic generators of $I_{\rm SL_N}$. For general $G$ these functions are $N_L$ invariant and can be made $N_R$ invariant by corrections that are linear in the generators of $v_f\otimes V_F^*$. We should therefore expect a correspondence between (a subset) of the algebraic generators of $I_G$ and the weight vectors of $v_F\otimes V_F^*$.

The Poisson algebra $\IC[T^*G]$ has two Poisson-commuting subalgebras, each isomorphic to $\mathrm{Sym}~\gf$, arising from the moment maps for the left and right $G$-actions. The subspace $V_F\otimes V_F^*\subset \IC[T^*G]$ is generated by the (Poisson) action of the left and right moment maps on $\widetilde{X}$, and since $V_F\otimes V_F^*$ generates $\IC[G]$, we can choose $\widetilde{X}$ and the generators of the left and right moment map as Poisson generators of $\IC[T^*G]$. After two-sided KW reduction, what remains of the moment maps are the generators of $\IC[\gf^*]^G\cong \IC[S_f]$. It seems reasonable, then, to conjecture that $X$ and $\IC[S_f]$, which are in an appropriate sense the reductions of the Poisson generators of $\IC[T^*G]$, will be Poisson generators of the reduced algebra. 

Let us now return to the case where $G$ is of type $D_{2N}$. If $G$ is not simply connected, then $Z(G)$ is cyclic and our previous description still holds and $V_F$ is irreducible. If $G$ is of type $D_{2N}$ and simply connected then $V_F$ is not irreducible---it is the Dirac spinor, $V_{\rm s}\oplus V_{\rm c}$, where $V_{\rm s}$ and $V_{\rm c}$ are the smallest spin representations of $\rm Spin(N)$. In this case, we must pick two such highest-weight states, $X_s$ and $X_c$ in representations $V_{\rm s}\otimes V_{\rm s}^*$ and $V_{\rm c}\otimes V_{\rm c}^*$ respectively. Similarly, the $N_L$ invariant subspace is now $(v_{\rm s}\otimes V_{\rm s}^*)\oplus (v_{\rm c} \otimes V_{\rm c}^*)$, where $v_s$ and $v_c$ are highest weight states of $V_{\rm s}$ and $V_{\rm c}$ respectively.

Putting this all together, we have the following conjectures on the algebraic and Poisson algebraic structures of $I_G$.

\begin{conj}
If $G$ is not simply connected, or if $G$ is not of type $D_{2N}$, then $I_G$ is generated, as a ring, by  the fundamental invariants $(P_{i})_{i=1}^{\rk\,\gf}$ and generators $(g_i)_{i=1}^{\dim\,V_F}$ which are in one-to-one correspondence with the weight vectors of $ V_F^*$.

If $G$ is simply connected and of type $D_{2N}$, $I_G$ is generated, as a ring, by the fundamental invariants $(P_{i})_{i=1}^{\rk\,\gf}$ and two sets of generators $(g_{{\rm s}i})_{i=1}^{\dim\,V_{\rm s}}$, and $(g_{{\rm c}i})_{i=1
}^{\dim\, V_{\rm c}}$. The generators $g_{{\rm s}i}$ (repectively $g_{{\rm c}i}$) are in one-to-one correspondence with the weight vectors of $ V_{\rm s}^*$ (repectively $ V_{\rm c}^*$).
\end{conj}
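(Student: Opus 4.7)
The plan is to generalise the proof of Proposition~\ref{prop:ring_generators_sln} by embedding $G$ into the faithful representation $V_F$ (respectively $V_{\rm s} \oplus V_{\rm c}$ in the simply-connected $D_{2N}$ case). Through the Peter--Weyl identification~\eqref{eq:peter_weyl_functions}, every matrix coefficient of $G$ lies in some $V_\lambda \otimes V_\lambda^*$; because $V_F$ is faithful (and irreducible outside the simply-connected $D_{2N}$ case), every dominant weight $\lambda$ appears as a summand of some tensor power of $V_F$, so $\IC[G]$ is generated as a ring by the matrix coefficients of $V_F \otimes V_F^*$. Thus the task reduces to showing that, after two-sided KW reduction, these matrix coefficients can all be expressed as polynomials in the $\dim V_F$ highest-weight functions associated to $v_F \otimes V_F^*$ together with the Harish-Chandra generators $(P_i)$.

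Working in the realisation~\eqref{eq:second_KW_reduction}, I would fix a generic $s \in S_f$ and study $\rho_F(s) \in \mathrm{End}(V_F)$. Since $S_f$ meets every regular coadjoint orbit and a faithful representation of a generic regular element has distinct eigenvalues, $\rho_F(s)$ is regular semisimple in $\mathrm{End}(V_F)$ for generic $s$, so its commutant is $\dim V_F$-dimensional and spanned by $\{\id, \rho_F(s), \rho_F(s)^2, \ldots, \rho_F(s)^{\dim V_F - 1}\}$. Mimicking the explicit construction of Proposition~\ref{prop:ring_generators_sln}---where one takes linear combinations of $\id, s, s^2, \ldots$ designed to realise each standard basis vector in the ``bottom row''---I would extract a basis of the $\rho_F(s)$-commutant whose projection onto the $v_F \otimes V_F^*$-component realises each standard basis vector of $V_F^*$ in turn. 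This shows that $g \in G_s$ is completely determined by its $v_F \otimes V_F^*$-components $(g_i)$, with $\dim V_F - \rk\,\gf$ polynomial relations among the $g_i$ arising from the equations cutting out $G \hookrightarrow \mathrm{GL}(V_F)$.

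To descend to $\IC[\Zf_G]$ I would use that the $g_i$, being highest-weight matrix coefficients, are automatically $N_L$-invariant; the $N_R$-invariance required for the second KW reduction is arranged by adding corrections polynomial in the $(P_j)$, paralleling the $\tfrac12 S X$-type shifts visible in the $\slf_3$ Poisson relations~\eqref{eq:sl3_Poisson}. Higher Peter--Weyl blocks $V_\lambda \otimes V_\lambda^*$ then enter $I_G$ only through polynomial combinations of the $g_i$ and $P_j$ via the Clebsch--Gordan decomposition of $V_F^{\otimes k}$, completing the ring generation statement.

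The principal obstacle is converting the fibrewise claim---that $\IC[G_s]$ is generated by the $(g_i)$ and $(P_j|_s)$ for each generic $s$---into a ring-theoretic statement about $I_G$ over $\IC[S_f]$: one must control how the polynomial expressions vary as $s$ moves, and check that no additional generators or relations are needed on non-generic fibres of $\Zf_G \to S_f$. In the simply-connected $D_{2N}$ case, the commutant argument must be run separately on each summand of $V_F = V_{\rm s} \oplus V_{\rm c}$, and both families of generators are genuinely required because the centre $\IZ/2 \oplus \IZ/2$ of $\Spin(4N)$ acts nontrivially and independently on each summand, so centre-sensitive matrix coefficients cannot be assembled from a single spinor block.
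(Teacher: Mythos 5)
First, a point of comparison: the paper does not prove this statement. It is stated as a conjecture, supported only by the ${\rm SL}_N$ case (Proposition~\ref{prop:ring_generators_sln}) and by the heuristic Peter--Weyl discussion preceding it. Your proposal is essentially an attempt to upgrade that heuristic to a proof, and its framing (faithfulness of $V_F$ implies $\IC[G]$ is generated by the matrix coefficients of $V_F\otimes V_F^*$; the highest-weight coefficients are $N_L$-invariant and should descend to the two-sided KW reduction after corrections) tracks the paper's own motivating discussion closely rather than departing from it.

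The genuinely new step you add --- generalising the commutant argument of Proposition~\ref{prop:ring_generators_sln} by asserting that $\rho_F(s)$ has distinct eigenvalues for generic $s\in S_f$, so that its commutant in $\mathrm{End}(V_F)$ is the $\dim V_F$-dimensional algebra spanned by its powers --- fails outside type $A$ and the minuscule cases, i.e.\ precisely where a proof is actually needed. Whenever $V_F$ has a weight of multiplicity greater than one (the $26$ of $F_4$ has zero weight with multiplicity $2$; the adjoint representations, which are the smallest faithful representations of $E_8$ and of the adjoint forms of $A_{N\geqslant 2}$, $C_{N\geqslant 3}$, $E_6$, $E_7$, have zero weight with multiplicity $\rk\,\gf$), the operator $\rho_F(s)$ has repeated eigenvalues for \emph{every} semisimple $s$, its commutant in $\mathrm{End}(V_F)$ has dimension $\sum_\mu m_\mu^2>\dim V_F$ and is not generated by powers of $\rho_F(s)$, and the projection of that commutant onto the $v_F\otimes V_F^*$ block cannot be injective for dimension reasons. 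One would instead have to prove injectivity of this projection directly on the $\rk\,\gf$-dimensional subvariety $G_s\subset\mathrm{GL}(V_F)$, which is a different and harder statement that your argument does not address. Beyond this, the obstacles you flag yourself are real and unresolved: the $N_R$-invariant corrections are asserted by analogy with the $\slf_3$ computation rather than constructed, and the passage from generation on generic fibres of $\Zf_G\to S_f$ to ring generation of $I_G$ over $\IC[S_f]$ (including the non-generic and nilpotent fibres) is exactly the kind of gap that keeps this statement at the level of a conjecture in the paper.
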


\begin{conj}\label{conj:poisson_generation}
If $G$ is not simply connected, or if $G$ is not of type $D_{2N}$, then $I_G$ is Poisson generated by the fundamental invariants $\big(P_i\big)_{i=1}^{\rk\,\gf}$ and $X$.

If $G$ is simply connected and of type $D_{2N}$, \ie, if $G= \mathrm{Spin}(2N)$, then $I_G$ is Poisson generated by the fundamental invariants $\big(P_i\big)_{i=1}^{\rk\,\gf}$, $X_s$, and $X_c$.
\end{conj}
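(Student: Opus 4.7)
The plan is to leverage the algebraic generation asserted in the preceding conjecture and then argue by explicit Poisson computation in Kostant--Toda coordinates that iterated brackets with the fundamental invariants produce all the required ring generators. Granting the first conjecture, the task reduces to showing that the generators $(g_i)_{i=1}^{\dim V_F}$ (those in bijection with weight vectors of $V_F^*$) lie in the Poisson algebra generated by $(P_k)_{k=1}^{\rk\gf}$ and $X$, with the obvious modification for the $\mathrm{Spin}(4N)$ case where $X_s$ and $X_c$ play the role of $X$.

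The guiding heuristic, already articulated just before the statement, is that on the unreduced $T^*G$ the Peter--Weyl subspace $V_F\otimes V_F^*$ is Poisson-generated from $\widetilde X$ by the two-sided moment map $\gf$-action, while after two-sided KW reduction the only remnants of these moment maps are the fundamental invariants in $\IC[S_f]$. To make this precise, I would work in the Kostant--Toda uniformization $\varphi$ constructed in Section~\ref{sec:kostant_toda_lattice}, where the $P_k$ are the Kostant--Toda Hamiltonians and their Poisson flows translate the angle-type variables in a computable Toda form. One then computes $\{P_k, X\}$ explicitly and argues by induction on the principal-$\slf_2$ grading (equivalently the height of weights in $V_F^*$) that every $g_i$ is reached by some nested Poisson bracket of $X$ with the $P_k$; since $\varphi$ is an open immersion, Poisson generation on its image suffices to conclude Poisson generation of all of $I_G$. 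The $\mathrm{Spin}(4N)$ case then decouples into two parallel arguments: because the weight sets of $V_{\rm s}^*$ and $V_{\rm c}^*$ are disjoint, no iterated Casimir bracket can produce a function with a weight component from the ``wrong'' spinor, so $X_s$ and $X_c$ are each required and each suffices within its own sector.

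The main obstacle is a multiplicity- and dimension-counting issue. Since there are only $\rk\gf$ Casimirs but generally $\dim V_F > \rk\gf$ weight vectors in $V_F^*$, no single Poisson bracket with a $P_k$ can reach all $g_i$, so one must rely on iterated nested brackets together with polynomial combinations of the $P_k$. Verifying that there are enough algebraically independent nested brackets to span the entire set of ring generators is the essential content of the conjecture; for $\mathrm{SL}_N$ this is handled by direct Kostant--Toda calculation in Appendix~\ref{app:proof_Poisson}, and a uniform proof for arbitrary simple $G$ would presumably proceed by an analogous but type-independent argument exploiting the cyclicity of $V_F^*$ (or of each spinor representation in the $\mathrm{Spin}(4N)$ case) as a module for an appropriately defined Toda-type action generated by the Hamiltonian flows of the $P_k$.
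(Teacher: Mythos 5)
The statement you are addressing is stated in the paper as a \emph{conjecture}, and the paper offers no proof of it: its only support is the heuristic discussion preceding it (the Peter--Weyl decomposition of $\IC[T^\ast G]$, the survival of the highest-weight function $\widetilde{X}$ under two-sided Kostant--Whittaker reduction, and the observation that the left and right moment maps reduce to $\IC[S_f]$), together with an actual proof in the special case $G={\rm SL}_N$ (Theorem~\ref{thm:Poisson_generation}, proved in Appendix~\ref{app:proof_Poisson}). Your proposal faithfully reproduces that heuristic and honestly flags that the key step --- showing the iterated brackets actually reach every ring generator --- is unverified; as written it is therefore a proof sketch with an acknowledged gap, not a proof, and it does not close the conjecture any more than the paper does. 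That is not a defect relative to the paper, but you should be explicit that nothing beyond ${\rm SL}_N$ is actually established.

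Two substantive corrections to your account of the ${\rm SL}_N$ case. First, Appendix~\ref{app:proof_Poisson} is \emph{not} a direct Kostant--Toda calculation: it works upstairs in the ambient Poisson variety $\gf^\ast_L\times\gf^\ast_R\times G$ and its coisotropic reduction, constructing an explicit section $\sigma$ of the quotient $\IC[\XX]\to\IC[S_f\times S_f\times G]$ and tracking the $N_L\times N_R$-invariant extensions of the bottom-row coordinates; the Kostant--Toda embedding $\varphi$ plays no role there. Your alternative route through $\varphi$ (computing $\{P_k,X\}$ in the $b_i,\gamma_i$ variables and using that an identity of global functions holding on the dense open image of $\varphi$ holds globally) is legitimate in principle but is a genuinely different, and likely messier, computation. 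Second, your ``main obstacle'' --- that $\rk\,\gf$ Casimirs must somehow produce $\dim V_F>\rk\,\gf$ generators --- is resolved in the paper not by counting independent nested brackets among all the $P_k$, but by a ladder structure using the \emph{single} quadratic Casimir: equation~\eqref{eq:P1_raising} shows $\{\sigma(P_{1,R}),\widetilde{g}_{Ni}\}=\widetilde{g}_{N,i+1}+(\text{corrections in }\II)$, because the sum of negative simple roots in $\sigma(P_{1})$ acts as a lowering operator stepping through the weight vectors of $V_F^\ast$. The genuine technical content is Lemma~\ref{lem:form_of_extension}, which controls the triangular form of the invariant extensions so that the corrections can be re-expressed in terms of previously obtained generators and the $P_k$. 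A type-independent proof of the conjecture would need an analogue of this control for general $G$ (and, for ${\rm Spin}(4N)$, your observation that the disjointness of the weight supports of $V_{\rm s}^\ast$ and $V_{\rm c}^\ast$ forces two starting generators is consistent with the paper's reasoning but is likewise not proved there).
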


\section{\label{sec:kostant_toda_lattice}The universal centraliser and the Kostant--Toda lattice}

The geometric structure that apparently underlies the free field realisations in~\cite{Beem:2019tfp} is the existence of some particularly simple Zariski open subset(s) $U$ of the Higgs branch $X$ in terms of which the global functions on $X$ can be realised via the inclusion $\IC[X]\hookrightarrow \IC[U]$. These subsets are generally identified (including holomorphic symplectic structure) with cotangent bundles of the form $(T^\ast\IC^\times)^n\times(T^\ast\IC)^m$ with canonical Poisson structures,\!\footnote{As in \cite{Beem:2019tfp}, we take our standard Poisson structure on $T^\ast\IC^\times$ to be $\{p,e\}=e$ where $p$ is the fibre coordinate and $e$ is the base coordinate.} so these can be thought of as Darboux (Zariski) neighbourhoods in $X$. In practice, the expressions for the images of generators of $\IC[X]$ in terms of $\IC[U]$ can often be inferred using symmetry data, such as weights under the contracting $\IC^\ast$ action on $X$ and the action of additional global symmetries. Such ``finite-dimensional free field realisations'' form the foundation for a chiral analogue, where the associated VOA is realised in terms of $\beta\gamma$ systems and isotropic lattice bosons that naturally chiralise $\IC[U]$. (We will say more about these isotropic lattice VOAs in Section \ref{sec:chiralising_centraliser}.)

In previous examples in the literature, the open set $U$ has been defined with reference to a (partial) Higgsing of some global symmetry acting on $X$. However, it is not at all clear that there is anything fundamental about this particular origin for such Darboux neighborhoods, and a more general story may certainly exist. Indeed, for the case of the universal centraliser, there are no global symmetries to speak of, so a new starting point is needed.

Fortunately, it has been known since work of Kostant (see~\cite{Kostant:1979195}) that the integrable system of the Kostant--Toda lattice of type $\gf$ and the universal centraliser $\Zf_G$ (for $G$ of adjoint type) are intimately connected, with the former having precisely the Darboux structure needed. In particular, the level sets of the Kostant--Toda lattice (as an integrable system) can be identified with (open) subsets of stabilisers of elements in $\gf^\ast_{\rm reg}$. Recently, Crooks has provided a complex-analytic embedding of an open dense subset of the Kostant--Toda lattice inside the universal centraliser~\cite{Crooks:2020}, still with the caveat that $G$ be of adjoint type.

In this section, we will adapt this construction to the algebraic setting and extend the domain of the morphism to the full Kostant--Toda lattice. Generalising further to allow for any simple $G$ (including, \eg, both adjoint and simply-connected forms of the group) will require us to replace the Kostant--Toda lattice with a certain covering space that we define. Our main result is the construction of an open immersion from the (cover of the) Kostant--Toda lattice to the universal centraliser of $G$, and thus a finite-dimensional free field realisation of the universal centraliser.

\subsection{\label{subsec:KT_Lattice}The Kostant--Toda lattice}

Consider a dynamical system of $N$ particles on a line, with Hamiltonian
\begin{equation}
\label{eq:KT_Hamiltonian}
	\mathcal{H}_{KT} = \sum_{i=1}^{N} \frac{p_i^{2}}{2} +\sum_{i=1}^{l}  e^{\psi_i(\mathbf{q})}~,
\end{equation}
where $\psi_i(\mathbf{q})$ are linear functions of the canonical position coordinates $\mathbf{q}=(q_1,q_2,\dots,q_N)$. There is a natural inner product on the vector space spanned by the $\psi_i$~\cite{Kostant:1978}. When this inner product space agrees with the root system of some real Lie algebra $\gf_{\IR}$ endowed with the Killing form, the $\psi_i$ define a set of roots for $\gf_{\IR}$, and in this case the representation theory of $\gf_{\IR}$ can be leveraged to integrate Hamilton's equations. Indeed, the system is integrable in the classical sense.

The complexified phase space $\mathrm{KT}_\gf$ of the Kostant--Toda Lattice of type $\gf_{\IR}$ can be identified (adopting the notation of the previous section) with an affine subspace of $\gf^\ast\colonequals\gf_\IC^\ast$,
\begin{equation}
\label{eq:KT_phase_space}
	\mathrm{KT}_\gf = \chi + \mathfrak{t}^* +\sum_{\alpha\in\Delta} \gf^*_{\alpha}\backslash\{0\}~.
\end{equation}
Let us denote by $(\alpha_i)_{i=1}^{\rk~\gf}$ the simple roots of $\gf$, by $e_{\alpha_i}$ the corresponding simple root elements of $\mathfrak{n}_+$, and by $h_{i}\in\mathfrak{t}$ the corresponding Cartan elements. Let $h_i^*$ and $e_{\alpha_i}^*$ be the Killing duals to $h_i$ and $e_{\alpha_i}$, respectively. An element $y\in \KT_{\gf}$ is of the form
\begin{equation}
\label{eq:point_in_KT}
	y = \chi + \sum_{i=1}^{\rk~\gf}b_i(y)h_i^* + \sum_{i=1}^{\rk~\gf} \gamma_i(y) e_{\alpha_i}^*~,
\end{equation}
so $b_i: \KT_{\gf}\rightarrow \IC$ and $\gamma_i:\KT_{\gf} \rightarrow \IC^\times$ furnish a basis of linear coordinates. The coordinate ring is then simply
\begin{equation}
\label{eq:KT_coordinate_ring}
	\IC[\KT_\gf]\cong\IC[b_i,\gamma^{\pm1}_{i}\,|\,i=1,\dots,\rk~\gf]~.
\end{equation}
The Poisson structure descends from the Kirillov--Kostant--Souriau bracket on $\IC[\gf^*]\cong\mathrm{Sym}\,\gf$ (see, \eg, \cite{Kostant:1979195}), and we have
\begin{equation}
\label{eq:KT_Poisson}
	\{b_i,\gamma_j^{\pm}\} = \delta_{ij} \gamma_{i}^{\pm}~,
\end{equation}
with all other brackets vanishing.\!\footnote{Note that the $b_i$, $\gamma_i$ are \textit{linear} duals of $h_i^*$ and $e_{\alpha_i}^*$, respectively. That is $b_i(h^\ast_j)=\delta_{ij}$ and $\gamma_i(e^\ast_{\alpha_j}) = \delta_{ij}$, so in particular $b_i\neq h_i$ under the canonical identification $\gf=(\gf^\ast)^\ast$.} Thus we identify $\KT_{\gf}\cong T^*(\IC^\times)^{\rk~\gf}$ as symplectic varieties. The structure of a complex integrable system on $\KT_\gf$ is then manifested through the action variables $P_i|_{\KT_\gf}$ for $i=1,2,\dots,\rk~\gf$, which are the restrictions of the generators of the Harish-Chandra centre, $\IC[\gf^*]^G$ to $\KT_{\gf}$.

In the construction of~\cite{Ginzburg:2018}---and its chiralisation in~\cite{Arakawa:2018egx}---the structure of Higgs branches of class $\SS$ as $S_f$-schemes plays an integral role. Here, the structure morphism of $\KT_\gf$ as an $S_f$-scheme is the composition
\begin{equation}
\label{eq:KT_S_scheme}
\begin{split}
	\KT_\gf&\xrightarrow{P} \gf^*/\!\!/G\xrightarrow{\sim} S_f~,\\
	y &\mapsto (P_1(y),P_2(y),\dots P_{\rk\,\gf}(y))~,
\end{split}
\end{equation}
where the second morphism is Kostant's isomorphism $\gf^*/\!\!/G\cong S_f$. This structure morphism is thus exactly the map to the base of the aforementioned complex integrable system, up to the identification of the base with $S_f$.

We have deliberately labelled the Kostant--Toda lattice with the Lie algebra and not the group, whereas the universal centraliser depends on the choice of group $G$. In~\cite{Kostant:1979195,Crooks:2020}, the embedding from this lattice targets the universal centraliser of the \textit{adjoint} group. However, for the purposes of class $\SS$ constructions we expect to be mainly interested in the simply connected group $G$ with $\mathrm{Lie}~G = \gf$. To generalise the embedding appropriately, we then need to replace $\KT_\gf$ with a suitable cover. 

The root lattice of the Lie \textit{group} $G$ always contains the root lattice of the Lie algebra $\gf$ as a sublattice. Let $\Delta_G$ denote the simple roots of the Lie algebra $\gf$ embedded into the root lattice of the Lie group $G$. Such a simple root, $\alpha\in\Delta_G$, is then a character $\alpha:T\rightarrow \IC^\times$, defined by its action on the simple root generator $e_\alpha\in\gf$ as
\begin{equation}
\label{eq:group_roots}
	\mathrm{Ad}_t e_\alpha = \alpha(t)\, e_\alpha~,\qquad \forall t\in T~.
\end{equation}
Let $Z_G: T\rightarrow \big(\IC^\times\big)^{\rk\,\gf}$ be defined by
\begin{equation}
\label{eq:group_center_acts_on_torus}
\begin{split}
	t\mapsto (\alpha(t))_{\alpha\in\Delta_G}~,\qquad \forall t\in T~,\\
\end{split}
\end{equation}
This map is an isomorphism for adjoint $G$, but if $G$ has a nontrivial centre, $Z(G)$, then it is only surjective. For any $G$, $Z_G$ is \'etale and the fibre above a point has cardinality $|Z(G)|$.

Now consider the map $\KT_\gf\twoheadrightarrow\big(\IC^\times\big)^{\rk\,\gf}$ that projects to the base of the cotangent bundle in $\KT_\gf\cong T^*(\IC^\times)^{\rk~\gf}$. We define $\KT_G$ to be the pullback of this projection and the map $T$ given above,
\begin{equation}
\label{eq:pullback_square}
	\begin{tikzcd}
	\KT_G \ar[d]\ar[r]& \KT_\gf \ar[d,twoheadrightarrow] \\
	T \ar[r,"Z_G"] & \big(\IC^{\times}\big)^{\rk\,\gf}
	\end{tikzcd}~.
\end{equation}
The structure sheaf of $\KT_G$ has the structure of an $\OO_{(\IC^\times)^{\rk\,\gf}}$-module given by
\begin{equation}
	\OO_{\KT_G} = \OO_{\KT_\gf}\otimes_{\OO_{(\IC^\times)^{\rk\,\gf}}} \OO_{T}~,
\end{equation}
with the $\OO_{(\IC^\times)^{\rk\,\gf}}$-module structure on $\OO_{T}$ coming from restriction of scalars along $Z_G^\#:\OO_{(\IC^\times)^{\rk\,\gf}}\rightarrow \OO_T$. Notice that we have the identification
\begin{equation}
\label{eq:KTG_is_cotangent}
\begin{split}
	\KT_G &= \{(y,t)\,|\, y\in \KT_\gf~,~t\in T~,~(\gamma_1(y),\gamma_2(y),\dots,\gamma_{\rk\,\gf}(y))= Z_G(t)\}~,\\
	&\cong \{(b_1(y),b_2(y),\dots,b_{\rk\,\gf}(y),t)\in (\IC\times\IC^\times)^{\rk\,\gf}\,\}~,
\end{split}
\end{equation}
and so $\KT_G\cong \KT_\gf$ as varieties. (In fact, by a linear redefinition of the coordinate functions $b$, this isomorphism can be realised as one of symplectic varieties as well.)

However, the canonical projection $\pi_G:\KT_G\rightarrow \KT_\gf$, coming from the pullback \eqref{eq:pullback_square} is not an isomorphism. Indeed, by standard properties of the pullback $\pi_G$ is \'etale, since $Z_G$ is \'etale. We should therefore treat $\KT_G$ as a covering space of $\KT_\gf$ by way of $\pi_G$.

From its definition \eqref{eq:pullback_square}, we also have a canonical projection
\begin{equation}\label{eq:piT}
	\pi_T:\KT_G\rightarrow T~.
\end{equation}
Since $\pi_T$ is a base change of the surjective morphism $\KT_\gf\twoheadrightarrow (\IC^\times)^{\rk\,\gf}$ (and self-evidently from \eqref{eq:KTG_is_cotangent}), it too is surjective.

By construction, the fibre of $\pi_G$ above a point in $\KT_\gf$ will contain $|Z(G)|$ many points. These fibres admit a $Z(G)$ action which makes $\pi_G:\KT_G\rightarrow \KT_\gf$ into a $Z(G)$-torsor. In particular, let $y\in \KT_\gf$. The points in the fibre $\pi_G^{-1}(y)$ are mapped bijectively by $\pi_T$ to a $Z(G)$ orbit in $T$. Let $\tilde{y}\in \pi_G^{-1}(y)$ and let $z\in Z(G)$. The $Z(G)$ action is defined uniquely by specifying
\begin{equation}\label{eq:ZGactionKTG}
	z \cdot \tilde{y} \coloneqq	 \tilde{y}'~,\quad\text{such that}\quad \pi_T(\tilde{y}') = z\cdot \pi_T(\tilde{y})~,\quad \pi_G(\tilde{y}) = \pi_G(\tilde{y}')~.
\end{equation}
By construction, this gives the following:
\begin{lem}\label{lem:piT_equivariant}
The map $\pi_T:\KT_G\rightarrow T$ is $Z(G)$-equivariant.
\end{lem}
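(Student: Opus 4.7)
The plan is to observe that once the $Z(G)$-action on $\KT_G$ is shown to be well defined, the equivariance in the lemma is essentially tautological from the defining condition \eqref{eq:ZGactionKTG}: the very definition of $z\cdot \tilde{y}$ requires $\pi_T(z\cdot \tilde{y}) = z\cdot \pi_T(\tilde{y})$.

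First I would verify that $Z(G)\subseteq \ker Z_G$. For any $z\in Z(G)$ and any simple root $\alpha\in \Delta_G$, the relation $\mathrm{Ad}_z e_{\alpha} = \alpha(z)\,e_\alpha$ together with $\mathrm{Ad}_z = \mathrm{id}$ force $\alpha(z) = 1$. Hence $Z_G(z) = (1,\ldots,1)$, and $Z_G(zt) = Z_G(z)Z_G(t) = Z_G(t)$ for all $t\in T$.

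Next I would check existence and uniqueness of the point $\tilde{y}'$ implicit in \eqref{eq:ZGactionKTG}. Using the explicit description \eqref{eq:KTG_is_cotangent}, a point $\tilde{y}\in \KT_G$ is a pair $(y,t)\in \KT_\gf\times T$ satisfying $(\gamma_i(y))_i = Z_G(t)$. For any $z\in Z(G)$, the pair $(y, zt)$ satisfies the same constraint by the previous paragraph and so defines a point of $\KT_G$; this is the unique candidate for $\tilde{y}'$ since a point in the pullback is determined by its images under $\pi_G$ and $\pi_T$ (equivalently, $\pi_G$ is \'etale with fibres in bijection with $Z(G)$, so fixing $\pi_G(\tilde{y}')$ and $\pi_T(\tilde{y}')$ pins down $\tilde{y}'$). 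Thus the assignment $z\cdot (y,t) \colonequals (y, zt)$ is well defined and furnishes the action promised in \eqref{eq:ZGactionKTG}.

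Finally, with this explicit form of the action in hand, I would simply compute $\pi_T(z\cdot \tilde{y}) = \pi_T(y,zt) = zt = z\cdot \pi_T(\tilde{y})$, which is the desired $Z(G)$-equivariance. The only nontrivial input is the identity $Z(G)\subseteq \ker Z_G$; once that is established, every other step is essentially a bookkeeping exercise with the pullback description of $\KT_G$.
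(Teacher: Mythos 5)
Your proof is correct and follows essentially the same route as the paper: the paper defines the $Z(G)$-action by the condition in \eqref{eq:ZGactionKTG} and asserts the lemma ``by construction,'' and you simply make explicit the well-definedness checks (that $\alpha(z)=1$ for central $z$, so that $(y,zt)$ again lies in the pullback and is the unique candidate for $z\cdot\tilde{y}$) that the paper leaves implicit.
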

In light of this, the projection $\pi_G:\KT_G\rightarrow \KT_\gf$ is just the natural projection to $Z(G)$ orbits, $\KT_G\rightarrow \KT_G/\!\!/Z(G)$, where $\KT_\gf$ is identified with the quotient space $\KT_G/\!\!/Z(G)$. Letting $(y,t)\in\KT_G$, with $y\in \KT_\gf$ and $t\in T$
\begin{equation}
	\pi_G(y,t) = \chi + \sum^{\rk\,\gf}_{i}b_i(y)h^*_i + \sum_{\alpha\in\Delta} \alpha(t)e_{\alpha}^*~.
\end{equation}
Since the map $\pi_G$ is \'etale, we can pullback the symplectic form on $\KT_\gf$ to a symplectic form on $\KT_G$, endowing its coordinate ring in particular with a Poisson bracket. Furthermore, $\KT_G$ can be made into an $S_f$-scheme with structure morphism
\begin{equation}
	\KT_G\xrightarrow{\pi_G}\KT_\gf\xrightarrow{P} \gf^*/\!\!/G\xrightarrow{\sim}S_f~.
\end{equation}

We are primarily interested in the case when $G$ is simply-connected; in this case the functions on $\KT_G$ have a more concrete presentation. Let $\big(\alpha_i\big)_{i=1}^{\rk\,\gf}$ be the simple roots of $G$ and $a_{ij}$ be the matrix elements of the Cartan matrix of $\gf$. Let $\widetilde{a}_{ij}$ be the matrix elements of the inverse Cartan matrix. Then $\OO_{\KT_\gf}(\KT_\gf)\cong \IC[b_{\alpha_i},\gamma_{\alpha_i}\,|\, i=1,\dots,\rk\,\gf]$ and
\begin{equation}
	\OO_{\KT_G}(\KT_G) \cong \IC[b_{\alpha_i}, \widetilde{\gamma}_{\alpha_i} | i=1,\dots,\rk\,\gf]~,
\end{equation}
where
\begin{equation}
	\widetilde{\gamma}_{\alpha_i} = \prod_{j=1}^{\rk\,\gf} \gamma^{\widetilde{a}_{ij}}_{\alpha_j}~.
\end{equation}
The map $\pi_G$, at the level of rings is,
\begin{equation}
	b_{\alpha_i} \mapsto b_{\alpha_i}~, \quad \gamma_{\alpha_i} \mapsto \prod_{i=1}^{\rk\,\gf} \widetilde{\gamma}_{j}^{a_{ij}}~,\qquad i=1,\dots,\rk~\gf~.
\end{equation}
The Poisson structure inherited by $\KT_G$ is just the natural extension of the Poisson bracket on $\KT_\gf$ to the $\widetilde{\gamma}_{\alpha_i}$ by imposing the chain rule.

When $G$ is simply connected, we can also give a more concrete presentation of $\pi_T$. Let $(t_i)_{i=1}^{\rk\,\gf}$ be the generators of the root lattice of $G$. Since the root lattice of $G$ is isomorphic to the weight lattice of the Lie algebra, $\gf$, these generators can be identified with the fundamental weights, $(\omega_i)^{\rk\,\gf}_{i=1}$, of $\gf$. Let $V_{i}$ be the finite dimensional representation of $\gf$ with highest weight $\omega_i$, and let $v_i$ be the highest weight vector. Then, if $h\in T$
\begin{equation}
    h v_i = t_i(h) v_i~.
\end{equation}

The $t_i$ are good coordinates on the torus and the map $\pi_T$ is induced by the ring homomorphism,
\begin{equation}
\label{eq:piT-concrete}
\begin{split}
	\pi_T^\#:\IC[t_1^{\pm 1},\dots,t^{\pm1}_{\rk\,\gf}]&\rightarrow \IC[b_{\alpha_i},\widetilde{\gamma}^{\pm 1}_{\alpha_j}\,|\,i,j=1,\dots\rk\,\gf]\\
	t_i &\mapsto \widetilde{\gamma}_{\alpha_i}=\prod_{j=1}^{\rk~\gf} \gamma_{\alpha_j}^{\widetilde{a}_{ij}}
\end{split}
\end{equation}
Suppose $G$ is neither simply-connected nor of adjoint type. Then $G$ is a quotient of the simply-connected group, $G_{sc}$, by some central subgroup $Z'\subset Z(G_{sc})$ with $Z'\neq Z(G_{sc})$. Coordinates on $G$ are then realised by passing to $Z'$ invariants of the coordinate ring of the simply-connected group.

\subsection{\label{subsec:sl2_KT_example}The embedding in rank one}

To make this abstract construction a bit more concrete, we consider in detail the two rank-one examples: $G={\rm SL}_2$ and $G={\rm PGL}_2$. We will then see directly how these Kostant--Toda lattices embed into the corresponding universal centralisers.

The $\slf_2$ Kostant--Toda lattice, thought of as a subset of $\slf_2$ using the Killing form, consists of all elements of the form
\begin{equation}
	y =
	\begin{pmatrix}
	0 & 0 \\
	1 & 0
	\end{pmatrix}
	+
	\begin{pmatrix}
	b(y) & 0 \\
	0 & -b(y)
	\end{pmatrix}
	+
	\begin{pmatrix}
	0 & \gamma(y) \\
	0 & 0
	\end{pmatrix}
	~,
\end{equation}
with $b(y) \in \IC$ and $\gamma(y)\in \IC^\times$. The coordinate ring is $\IC[b,\gamma^{\pm}]$, with non-vanishing Poisson brackets
\begin{equation}
	\{b,\gamma^{\pm}\} = \pm\gamma^{\pm}~.
\end{equation}

For $G={\rm PGL}_2$ the maximal torus is isomorphic to $\IC^\times$ and the simple root, $\alpha:\IC^\times\rightarrow \IC^\times$, is just the identity morphism. Following the construction above, this means that the structure sheaf $\OO_{\KT_{{\rm PGL}_2}}$ is isomorphic to $\OO_{\KT_{\slf_2}}$ as $\OO_{\IC^\times}$-modules. The global sections are just $\IC[b,\gamma^{\pm 1}]=\OO_{\KT_{\slf_2}}(\KT_{\slf_2})$ and the map $\pi_{PGL_2}:\KT_{\rm PGL_2}\rightarrow \KT_{\slf_2}$ is an isomorphism. Here we see directly that with $\rm PGL_2$ being of adjoint type, the covering space $\KT_{\rm PGL_2}$ is exactly identified with $\KT_{\slf_2}$. The structure morphism to $S_f$ is given by $b^2+\gamma$.

Alternatively, for $G={\rm SL}_2$ the simple root $\alpha$ is given by $\alpha:\IC^\times\xrightarrow{z^2} \IC^\times$.  The structure sheaf of $\KT_G$ is $\OO_{\KT_G} =\OO_{T}\otimes_{\OO_{\IC^\times}} \OO_{\KT_\gf}$, with global sections
\begin{equation}
	\OO_{\KT_G}(\KT_G) \cong\frac{ \IC[h^{\pm 1}]\otimes\IC[b,\gamma^{\pm 1}]}{\langle h^2 - \gamma\rangle}\cong \IC[b,\gamma^{\pm\frac{1}{2}}].
\end{equation}
which has the obvious module structure over $\IC[b,\gamma^{\pm 1}]$. The map $\pi_{{\rm SL}_2}$ on rings is just the natural inclusion $\pi_{{\rm SL}_2}^\#: \IC[b,\gamma^{\pm 1}]\hookrightarrow \IC[b,\gamma^{\pm\frac{1}{2}}]$. The Poisson structure, inherited from $\KT_\gf$ is
\begin{equation}
	\{b,\gamma^{\pm\frac{1}{2}}\} = \pm\frac{1}{2}\gamma^{\pm\frac{1}{2}}~.
\end{equation}
Clearly, with a rescaling of $b$ we have $\KT_{{\rm SL}_2}\cong T^*\IC^\times\cong \KT_{{\rm PGL}_2}$ including Poisson structures. However, unlike the adjoint case, here our construction gives $\KT_{{\rm SL}_2}$ as a $2$-cover of $\KT_{\slf_2}$ along $\pi_{{\rm SL}_2}$. The structure morphism for $\KT_{{\rm SL}_2}$ is $b^2+ \gamma^{\frac{1}{2}}\gamma^{\frac{1}{2}}$.

Now we turn to the universal centralisers for these two groups. From Section~\ref{subsec:eg_rank_one} we know that $\Zf_{{\rm SL}_2}\subset\mathbb{A}^3$ can be realised as an affine variety with coordinate ring
\begin{equation}
	I_{{\rm SL}_2} = \frac{\IC[X,Y,S]}{\langle SX^2 - Y^2 - 1\rangle}~.
\end{equation}
Similarly, the adjoint version, $\Zf_{\rm PGL_2}$, has coordinate ring
\begin{equation}
\label{eq:pgl2_relations}
	I_{\rm PGL_2} \cong \frac{\IC[S,X^2,XY,Y^2]}{\langle SX^2 - Y^2 -1\rangle}\cong \frac{ \IC[S,A,B,C]}{\langle B^2 - AC,SA - C-1\rangle}~.
\end{equation}

Let $U_{X}\subset \Zf_{{\rm SL}_2}$ be the elementary open set defined by localising on the ideal $\langle X \rangle$. In this open subset, the relation $SX^2-Y^2-1$ can be solved for $S$ according to
\begin{equation}
	S = \frac{1}{X^2} + \frac{Y^2}{X^2}~.
\end{equation}
So $\OO_{\Zf_{{\rm SL}_2}}(U_X)\cong \IC[Y,X^{\pm 1}]$.

Constructing an open immersion $\KT_{{\rm S}_2}\rightarrow \Zf_{{\rm SL}_2}$ is straightforward in this case. The open subset $U_X$ can be identified with $\KT_G$ as Poisson schemes over $S_f$ via the map
\begin{equation}
\label{eq:sl2_finite_free_field_XY}
	X \mapsto \gamma^{-\frac{1}{2}}~,\quad Y\mapsto  -b\gamma^{-\frac{1}{2}}~.
\end{equation}
The Slodowy slice coordinate function is then realised as follows,
\begin{equation}
\label{eq:sl2_finite_free_field_S}
	S \mapsto b^2 + \gamma^{\frac{1}{2}} \gamma^{\frac{1}{2}} = b^2 + \gamma~.
\end{equation}
This, of course, agrees with the structure morphism of $\KT_{{\rm SL}_2}$, so this morphism is an open immersion of (Poisson) $S_f$-schemes.

Now let $U_A$ be the elementary open set obtained by localising $I_{\rm PGL_2}$ at the ideal $\langle A\rangle $. In this open set, the relations \eqref{eq:pgl2_relations} can be solved according to
\begin{equation}
	C = \frac{B^2}{A}~,\quad S = \frac{1}{A} + \frac{B^2}{A^2}~.
\end{equation}
Therefore, $U_A \cong \mathrm{Spec}~\IC[B,A^{\pm1}]\cong T^{*}\IC^{\times}$ as schemes. We are free to pick generators $\frac{B}{A}$ and $A^{\pm 1}$ for the functions on $U_A$. The Poisson bracket is then determined by
\begin{equation}
	\left\{\frac{B}{A},A^{\pm}\right\} = \pm A^{\pm}~,
\end{equation}
so we can identify $U_A$ with $T^*\IC^\times$ as Poisson schemes. Now to embed $\KT_{\rm PGL_2}$, we make the following identifications
\begin{equation}
	A \mapsto \gamma^{-1}~, \quad \frac{B}{A} \mapsto b~.
\end{equation}
This extends to
\begin{equation}
	C \mapsto  b^2 \gamma^{-1}~, \quad S \mapsto b^2 + \gamma
\end{equation}
These identifications induce an open immersion of $S_f$ schemes $\KT_{\rm PGL_2}\rightarrow \Zf_{\rm PGL_2}$.

Since the open subsets $U_X$ and $U_A$ are both isomorphic to $T^*\IC^{\times}$ as Poisson varieties, one might at first inquire as to whether there is any reason to introduce a covering space after all. Indeed, one may construct the open immersion $\KT_{\rm PGL_2} \rightarrow \Zf_{{\rm SL}_2}$, with image $U_X$, by making the identificationes
\begin{equation}
	X \mapsto \gamma^{-1}~,\quad Y \mapsto -\frac{1}{2}b\gamma^{-1}~,\quad S \mapsto \frac{1}{4}b^2 + \gamma^2~.
\end{equation}
This map is an open immersion of Poisson schemes, however it fails to be a morphism of $S_f$-schemes.
Namely, the image of $S$ is not $b^2+\gamma$. Insisting that the open immersion must be a morphism of $S_f$-schemes is natural---it preserves some Lie theoretic data, namely the $\gamma$ can be thought of as linear function on $\gf^*$. Furthermore, this ensures that the inclusion $I_{{\rm SL}_2}\hookrightarrow \IC[\KT_{{\rm SL}_2}]$ is a map of $\IC[\gf^*]^G$-modules; this structure appears to be of general importance in the construction of Higgs branches for theories of class $\SS$.

\subsection{\label{subsec:morphism_tech}The embedding in the general case and some technicalities}

For higher rank cases, the relationship between $\KT_G$ and $\Zf_G$ is more intricate. In this section we provide an abstract description of an open immersion $\KT_G\xrightarrow{\varphi}\Zf_G$. For $G$ of adjoint type, this map was described in less concrete terms in \cite{Kostant:1979195}, and our analysis follows and extends that of \cite{Crooks:2020}.

We decompose $\varphi$ into morphisms $\beta:\KT_G\rightarrow S_f$ to the base and $\lambda:\KT_G\rightarrow G$ to the fibre,~\ie, for $y\in\KT_G$, $\lambda(y)\in G_{\beta(y)}$. Let $\mathfrak{b}^*$ be the image of the Borel subalgebra $\mathfrak{b}=\mathfrak{t}+\mathfrak{n}_+$ under the Killing isomorphism. Kostant~\cite{Kostant:1978} tells us that there is an isomorphism of affine varieties $\psi:N\times S_f\xrightarrow{\sim}\chi+\mathfrak{b}^*$ realised as
\begin{equation}
	(n,s) \mapsto \mathrm{Ad}^*_n s~.
\end{equation}
Additionally, we have canonical projections $\pi_{S_f}: N\times S_f\rightarrow S_f$ and $\pi_{N}: N\times S_f \rightarrow N$. The Kostant--Toda lattice is a subset of $\chi+\mathfrak{b}^*$, so we can define a map
\begin{equation}
\begin{split}
	\beta:\mathrm{KT}_\gf &\rightarrow S_f~,\\
	y &\mapsto \pi_{S_f}(\psi^{-1}(y))~.
\end{split}
\end{equation}
Recall the generators $P_i$ of $\IC[\gf^*]^G$ whose restriction to $\KT_{\gf}$ define the map $P:\KT_\gf\rightarrow \gf^*/\!\!/G$. By~\cite[Theorem 2.6]{Kostant:1979195} this is surjective. The following gives an important compatibility result between these various maps.
\begin{prop}\label{prop:beta_dominant} (\cite[Proposition 23]{Crooks:2020})
Consider the triangle
\begin{equation}
	\begin{tikzcd}
	\KT_{\gf} \arrow[rr,"\beta"]\arrow[dr,twoheadrightarrow, "P",swap] & & S_f\arrow[dl,"\sim"] \\
	&  \mathfrak{g}^*/\!\!/G &
	\end{tikzcd}~,
\end{equation}
where the diagonal right arrow is the Kostant isomorphism. This diagram commutes, so in particular $\beta$ is surjective.
\end{prop}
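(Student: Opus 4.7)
The plan is to reduce the statement to two facts about Kostant's work that have already been invoked in the excerpt: the isomorphism $\psi:N\times S_f\xrightarrow{\sim}\chi+\mathfrak{b}^*$, and Kostant's theorem that the fundamental invariants are algebraically independent and surject onto the adjoint quotient. The commutativity is then essentially a definition-chase, and surjectivity of $\beta$ follows formally.

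First I would unpack $\beta$. For $y\in\KT_\gf\subset\chi+\mathfrak{b}^*$, write $\psi^{-1}(y)=(n,s)$ with $n\in N$ and $s=\beta(y)\in S_f$; by the definition of $\psi$ this means $y=\mathrm{Ad}^*_n s$. The key observation is that $y$ and $\beta(y)$ therefore lie in the same $N$-orbit, and in particular in the same $G$-orbit, under the coadjoint action. Since each $P_i$ is $G$-invariant, we obtain $P_i(y)=P_i(\beta(y))$ for every $i$, so the tuple of invariants $P(y)\in\gf^*/\!\!/G$ agrees with $P(\beta(y))$. By Kostant's theorem \cite{Kostant:1978}, the restriction of $P$ to $S_f$ is precisely the Kostant isomorphism $S_f\xrightarrow{\sim}\gf^*/\!\!/G$, so $P(\beta(y))$ is the image of $\beta(y)$ under the diagonal arrow in the triangle. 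This is exactly the required commutativity.

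For surjectivity of $\beta$, I would invoke \cite[Theorem 2.6]{Kostant:1979195}, which asserts that $P:\KT_\gf\twoheadrightarrow\gf^*/\!\!/G$ is surjective. Since the Kostant isomorphism $S_f\xrightarrow{\sim}\gf^*/\!\!/G$ is a bijection, and the triangle commutes, $\beta$ is forced to be surjective as well.

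I do not expect any serious obstacle here: the whole argument is a short consequence of Kostant's two classical results. The only point that requires a moment of care is ensuring that $\psi^{-1}(y)$ really does exist for every $y\in\KT_\gf$, i.e.\ that $\KT_\gf$ is genuinely contained in $\chi+\mathfrak{b}^*$; this is immediate from the definition \eqref{eq:KT_phase_space}, since $\mathfrak{t}^*+\sum_{\alpha\in\Delta}\gf^*_\alpha\subset\mathfrak{b}^*$. Everything else is formal.
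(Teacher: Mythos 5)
Your argument is correct. Note that the paper does not actually supply a proof of this proposition---it is quoted verbatim from \cite[Proposition 23]{Crooks:2020}---so there is nothing internal to compare against; but your reasoning is exactly the expected one: by the definition $\beta(y)=\pi_{S_f}(\psi^{-1}(y))$ one has $y=\mathrm{Ad}^*_n\,\beta(y)$ for some $n\in N$, so $y$ and $\beta(y)$ lie in the same coadjoint orbit, the $G$-invariants $P_i$ agree on them, and since the diagonal arrow $S_f\xrightarrow{\sim}\gf^*/\!\!/G$ is precisely the restriction of $P$ to $S_f$ (Kostant), the triangle commutes; surjectivity of $\beta$ then follows from the surjectivity of $P$ established in \cite[Theorem 2.6]{Kostant:1979195} together with bijectivity of the Kostant isomorphism. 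No gaps.
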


The structure morphism for $\KT_\gf$ as an $S_f$-scheme as introduced previously is the composition of the lower route around this diagram, so can be identified with the map $\beta$.

The morphism $\beta:\KT_\gf\rightarrow S_f$ can be lifted to a morphism $\KT_G\rightarrow S_f$ via the composition $\KT_G\xrightarrow{\pi_G}\KT_\gf\xrightarrow{\beta}S_f$, which is also surjective. In an abuse of notation, we denote this composition as $\beta:\KT_G\rightarrow S_f$. This $\beta$ again agrees with the structure morphism of $\KT_G$ as an $S_f$-scheme.

For the time being, let us assume $G$ is of adjoint type (subsequently we will describe the generalisation of this story to non-adjoint $G$, and the more general $\KT_G$ will make an appearance). Then the map $\beta$ gives us a foliation of $\KT_\gf$ over the Slodowy slice, and the fibre above a point $s\in S_f$ can be shown to be isomorphic to a Zariski open subset of the $G$-stabiliser of $s$ according to \cite[Theorem 2.6]{Kostant:1979195}. These isomorphisms of fibres, taken collectively over $S_f$, leads to an isomorphism of $S_f$-schemes between $\KT_\gf$ and a Zariski open subset of $\Zf_G$ that can be made completely explicit. We now review and elaborate upon this construction following \cite{Crooks:2020} with an eye towards the generalisation to general $G$.

As we said before, the phase space $\KT_\gf$ has a foliation
\begin{equation}
	\KT_\gf = \bigsqcup_{p\in\gf^*/\!\!/G} P^{-1}(p)~.
\end{equation}
Each leaf of this foliation can be related to the centraliser of a particular element in $\chi+\mathfrak{t}^*$ as follows. Consider the diagram,
\begin{equation}
	\begin{tikzcd}
	\KT_\gf \ar[r,twoheadrightarrow,"P"] & \gf^*/\!\!/G  & \chi + \mathfrak{t}^* \ar[l,twoheadrightarrow,"\sigma_{t}",swap]
	\end{tikzcd}
	~,
\end{equation}
where $\sigma_t$ is the composition $\chi+\mathfrak{t}^* \twoheadrightarrow \mathfrak{t}^*/\!\!/W\xrightarrow{\sim} \gf^*/\!\!/G$. Let $y\in \KT_{\gf}$ and $p=P(y)$, then the fibre $\sigma_t^{-1}(p)$ is nonempty. Let $\theta_p$ be the point in this fibre such that $\theta_p-\chi$ lies in the fundamental Weyl chamber.\!\footnote{%
Note that $y\mapsto\theta_p$ does not define an algebraic morphism between $\KT_\gf$ and $\chi+\mathfrak{t}^*$, since the fundamental domain is not algebraic. Nevertheless, the construction will telescope and $\theta_p$ itself will not appear in the final definition of the open immersion.}
Then $\theta_p$ and $y$ are $N$-conjugate, since $P(y)=P(\theta_p)$ and both elements are in $\chi+\mathfrak{b}^*$. Furthermore, as $\beta(y)$ and $y$ are also $N$-conjugate, $\theta_p$ must be $N$-conjugate to $\beta(y)$. Let $\delta_p\in N$ be the unique group element such that $\mathrm{Ad}_{\delta (p)}^* \theta_p = \beta(y)$. This element is unique since $\theta_p\in\chi+\mathfrak{b}^*\cong N\times S_f$, and can be written as follows,
\begin{equation}
\label{eq:N_group_element}
	\delta_p = \pi_N(\psi^{-1}(\theta_p))^{-1}~.
\end{equation}
Now let $w_0$ be the longest element of the Weyl group,~\ie, the \emph{Cartan element} of $W(G)$. There is a unique uplift of $w_0$ to a $\widetilde{w}_0\in N_G(T)$, the normaliser of $T$, such that $Ad_{\widetilde{w}_0}e_{\alpha} = e_{w_0 \alpha}$ for any simple root $\alpha$ and any associated generator $e_\alpha \in \nf_+$. Consider the Zariski open subset $G^*\coloneqq N \widetilde{w}_0 TN$, which is just the big Bruhat cell in $G$. For $x\in\gf^\ast_{\rm reg}$, let $G^*_x$ be the Zariski open subset of $G_x$ defined by the intersection $G^*_x\coloneqq G^*\cap G_x$.

\begin{thm}\label{thm:Kostant_level_set} \cite[Theorem 2.6]{Kostant:1979195}
Let $p\in \gf^*/\!\!/G$ and let $\theta_p$ be as above. The morphism
\begin{equation}\nonumber
\begin{split}
	K_p: G^*_{\theta_p}\cong P^{-1}(p)\\
	n^\prime\widetilde{w}_0h n\mapsto \mathrm{Ad}^*_{n}\theta_p~,
\end{split}
\end{equation}
where $n^\prime,n\in N$ and $h\in T$, is an isomorphism of algebraic varieties.
\end{thm}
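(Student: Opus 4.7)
The plan is to show $K_p$ is an isomorphism by producing an explicit algebraic inverse. The argument breaks into three pieces: well-definedness of $K_p$, construction of a candidate inverse $K_p^{-1}$, and verification that the two maps are mutually inverse algebraic morphisms.

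For well-definedness, given $g = n'\widetilde{w}_0 h n \in G^*_{\theta_p}$, the stabiliser identity $\mathrm{Ad}^*_g \theta_p = \theta_p$ rearranges to $\mathrm{Ad}^*_n \theta_p = \mathrm{Ad}^*_{h^{-1}\widetilde{w}_0^{-1}(n')^{-1}}\theta_p$. The left-hand side lies in the $N$-invariant Kostant slice $\chi + \mathfrak{b}^*$ since $\theta_p$ does. The right-hand side, after $(n')^{-1}$ moves $\theta_p$ within $\chi + \mathfrak{b}^*$, has $\widetilde{w}_0^{-1}$ applied, which interchanges the two Borels and carries the $\chi$-component (supported on simple-negative-root duals) into a combination of simple-positive-root duals. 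Matching the two descriptions forces the non-simple positive-root components of $\mathrm{Ad}^*_n \theta_p$ to vanish and pins the simple-positive-root coefficients to nonzero values controlled by $h$. Thus $\mathrm{Ad}^*_n \theta_p \in \chi + \mathfrak{t}^* + \sum_{\alpha\in\Delta}\gf^*_\alpha\setminus\{0\} = \KT_\gf$, and $G$-invariance of $P$ gives $P(\mathrm{Ad}^*_n \theta_p) = P(\theta_p) = p$.

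For the inverse, take $y \in P^{-1}(p)$. By Proposition~\ref{prop:beta_dominant}, $\beta(y) = \beta(\theta_p) =: s \in S_f$, so Kostant's slice theorem yields unique $n_y, n_\theta \in N$ with $\mathrm{Ad}^*_{n_y} s = y$ and $\mathrm{Ad}^*_{n_\theta} s = \theta_p$. Set $n(y) \coloneqq n_y n_\theta^{-1}$, so $\mathrm{Ad}^*_{n(y)}\theta_p = y$ automatically. To recover the remaining Bruhat factors, exploit the nonvanishing of the coordinates $\gamma_i(y)$: one solves uniquely for $h(y) \in T$ so that $\mathrm{Ad}^*_{\widetilde{w}_0 h(y)} y$ has $\chi$-component equal to that of $\theta_p$ (this uses, in the adjoint case, that the simple roots give an isomorphism $Z_G : T \to (\IC^\times)^{\rk\,\gf}$; for more general $G$ one works on the cover $\KT_G$ of Section~\ref{subsec:KT_Lattice}). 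The residual discrepancy then lies entirely within $\chi + \mathfrak{b}^*$, and a second application of Kostant's slice theorem produces a unique $n'(y) \in N$ with $\mathrm{Ad}^*_{(n'(y))^{-1}\widetilde{w}_0 h(y)} y = \theta_p$. Assembling, $g(y) \coloneqq n'(y)\widetilde{w}_0 h(y) n(y) \in G^*_{\theta_p}$ is the required preimage.

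Mutual inversion of $K_p$ and $g(\cdot)$ is forced by the uniqueness clauses in the Bruhat decomposition and in Kostant's slice theorem. Algebraicity of $K_p$ is immediate, and algebraicity of $g(\cdot)$ follows because each of $n(y), h(y), n'(y)$ is produced by algebraic operations: decomposition along $\gf^* = \mathfrak{t}^* \oplus \mathfrak{n}_+^* \oplus \mathfrak{n}_-^*$, inversion on the torus via roots, and solving triangular systems within the unipotent group $N$. The main technical obstacle is the well-definedness step, where from a single equality in $\gf^*$ one must simultaneously extract both the vanishing of the higher positive-root components of $\mathrm{Ad}^*_n \theta_p$ and the nonvanishing of its simple-root components. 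The crucial ingredient enabling this is the flip property of $\widetilde{w}_0$ on the Cartan decomposition, which also explains why the big Bruhat cell $N \widetilde{w}_0 T N$ is precisely the natural open subset of $G_{\theta_p}$ over which the Kostant--Toda level set is recovered.
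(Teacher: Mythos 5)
Your proposal is correct and follows essentially the same route as the paper's treatment: the paper cites Kostant for this statement but constructs precisely your inverse (via a double application of the Kostant slice isomorphism $\psi$, giving $\rho_N$ and $\widetilde{\rho}_N$, together with the torus element read off from the $\gamma_\alpha$-coordinates via $\pi_T$), and your well-definedness computation is the same $\widetilde{w}_0$-flip argument that appears in the paper's proof of the covering-space generalisation, Theorem~\ref{thm:Kx_surjective}. Your parenthetical caveat that uniqueness of $h$ requires $G$ adjoint (and otherwise passage to the cover $\KT_G$) is exactly the point the paper addresses there.
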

This theorem relates the fibres of $\KT_\gf$ over $\gf^*/\!\!/G$ to a Zariski open subset of a stabiliser subgroup of $G$; we next want to relate this stabiliser subgroup to a stabiliser subgroup of the corresponding point $\beta(y)$ in the Slodowy slice.

For a given $p\in \gf^*/\!\!/G$, Kostant's isomorphism gives a unique $s_p\in S_f$, and once again the previously defined $\delta_p$ is precisely such that $\mathrm{Ad}^*_{\delta_p} \theta_p = s_p$. Thus the stabilisers of $s_p$ and $p$ are related by
\begin{equation}
	G^*_{s_p} = \delta_pG^*_{\theta_p}\delta_p^{-1}~.
\end{equation}

We can therefore define a Zariski open set, $\Zf_G^*\coloneqq (G^*\times S_f)\cap \Zf_G$, which has foliation
\begin{equation}
	\Zf_G^* = \bigsqcup_{s\in S_f} G^*_s~,
\end{equation}
with each fibre isomorphic to $G^*_{\theta_s}$ under conjugation by $\delta_s$. By Theorem~\ref{thm:Kostant_level_set} we can therefore construct an open immersion
\begin{equation}
	\varphi:\KT_\gf \rightarrow \Zf_G~,
\end{equation}
by sending each leaf of the foliation $\KT_\gf \cong \bigsqcup_{p\in\gf^*/\!\!/G}P^{-1}(p)$ to the stabiliser $G^*_{\theta_p}$ and then conjugating by $\delta_p$. Each step is an isomorphism, so $\varphi$ is an open immersion with image $\Zf_G^*$.

Defining the morphism $\varphi$ requires an explicit realisation of the inverse $K_p^{-1}: P^{-1}(p) \rightarrow G_{\theta_p}^*$. To construct the inverse to $K_p$, we define a few more morphisms. Via the Kostant isomorphism, we have a map $\rho_N:\KT_\gf\rightarrow N$ given by the composition
\begin{equation}
	\KT_\gf\xrightarrow{\psi^{-1}}N\times S_f \xrightarrow{\pi_N} N~.
\end{equation}
Let $\widetilde{\rho}_N$ be the further composition
\begin{equation}
	\KT_\gf\xrightarrow{\mathrm{Ad}^*_{\widetilde{w}_0\pi_T(-)^{-1}}}\KT_\gf\xrightarrow{\psi^{-1}}N\times S_f\xrightarrow{\pi_N} N~,
\end{equation}
In other words,
\begin{equation}
\begin{split}
	\widetilde{\rho}_N:\KT_G&\rightarrow N \\
	y &\mapsto \pi_N(\psi^{-1}(\mathrm{Ad}_{\widetilde{w}_0\pi_T(y)^{-1}}^*y)) ~.
\end{split}
\end{equation}
Here, $\pi_T: \KT_\gf\rightarrow T$ arises from the identification $\pi_G: \KT_{G}\xrightarrow{\sim}\KT_\gf$, for adjoint $G$. The inverse $K_p^{-1}:P^{-1}(p)\rightarrow G_{\theta_p}^*$ is given by
\begin{equation}
\begin{split}
	K_p^{-1}(y) &= \bigg((\widetilde{\rho}_N(y)\delta_p)^{-1}\widetilde{w}_0(\pi_T(y))^{-1} \rho_N(y) \delta_p\bigg)~,\\
	&= \delta_p^{-1} \widetilde{\rho}_N(y)^{-1} \widetilde{w}_0(\pi_T(y))^{-1} \rho_N \delta_p~,
\end{split}
\end{equation}
for any $y\in P^{-1}(p)$. As this expression is a bit opaque, let us try and explain informally why it stabilises $\theta_p$. A more detailed analysis of its origin can be found in the proof of~\cite[Lemma 19]{Crooks:2020}.

Given $y\in P^{-1}(p)$ we want an element of $G^*$ that stabilises $\theta_p$. Every such element has the form $n^\prime\widetilde{w}_0hn$ for some $n,n^\prime\in N$ and $h\in T$. In terms of the ingredients appearing in $K_p^{-1}$, we have
\begin{equation}
\begin{split}
	n^\prime &= \big(\widetilde{\rho}_N(y)\delta_p\big)^{-1}~,\\
	h &= \big(\pi_T(y)\big)^{-1}~,\\
	n &= \rho_N(y) \delta_p~.
\end{split}
\end{equation}
The composite expression for $K_p^{-1}(y)$ acts on $\theta_p$ in the following way. First $\delta_p$ takes $\theta_p$ to $s_p$, then $\rho_N(y)$ takes $s_p$ to $y$---this follows from the definitions of $\rho_N(y)$ and $\delta_p$. Note that this is precisely the image of the map $K_p$.

Next the element $y$ gets mapped to another $y'\in P^{-1}(p)$ by $\widetilde{w}_0h$, where the form of $h$ is dictated by the requirement that this indeed lands in $\KT_\gf$. Finally, $\widetilde{\rho}_N(y)^{-1}$ is the unique element of $N$ that takes this $y'$ back to $s_p$, and $\delta_p^{-1}$ will conjugate $s_p$ back to $\theta_p$---stabilising $\theta_p$ as a result.

Now, for any $y\in \KT_\gf$, the morphism $\varphi$ is given by
\begin{equation}
\label{eq:adjoint_map_def}
\begin{split}
	\varphi(y) &= \big(\delta_p K_p^{-1}(y) \delta_p^{-1}, \beta(y)\big)~,\\
	&= \big(\widetilde{\rho}_N(y) \widetilde{w}_0 (\pi_T(y))^{-1} \rho_N(y), \beta(y)\big)~,
	\end{split}
\end{equation}
where $p = P(y)$. Note that the definition has telescoped, so that the map, $\lambda:\KT_\gf\rightarrow G^*$, which identifies leaves of $\KT_\gf$ and $\Zf_G$, does not depend explicitly on $p$, $\theta_p$ or $\delta_p$. Indeed, the map \eqref{eq:adjoint_map_def} is manifestly algebraic.

Now let us return to the case where $G$ is not necessarily of adjoint type. Then $p\in \gf^*/\!\!/G$ and $\theta_p$ can be defined as before, while the longest element of the Weyl group, $w_0\in W$, no longer has a unique uplift to $N_G(T)$---there is now a $Z(G)$-orbit worth of uplifts. We fix $\widetilde{w}_0\in N_G(T)$ to be the unique uplift of $w_0$ that lies in the connected component of the identity. The Zariski open $G^*\subset G$ is defined, as before, as the big Bruhat cell $N\widetilde{w}_0TN$ for this choice of $\widetilde{w}_0$, and if $x\in\gf^\ast_{\rm reg}$, $G^*_x \coloneqq G^*\cap G_x$.

The map $K_p:G^*_{\theta_p}\rightarrow P^{-1}(p)$ is no longer an isomorphism; the fibres of the map have cardinality $|Z(G)|$. Therefore, it cannot be inverted and the previous definition of $\lambda$ is not fit for purpose. To rectify the wituation we will pass to the cover $\KT_G$. We define the morphism $\overline{P}:\KT_G\rightarrow \gf^*/\!\!/G$ as the composite
\begin{equation}
	\KT_G\xrightarrow{\pi_G}\KT_\gf\xrightarrow{P}\gf^*/\!\!/G~,
\end{equation}
and for $p\in\gf^*/\!\!/G$, $\overline{P}^{-1}(p)$ is a cover of $P^{-1}(p)$ and we will want to define a lift $\overline{K}_p:G^*_{\theta_p}\rightarrow \overline{P}^{-1}(p)$ that is an isomorphism. For the purposes of constructing $\varphi$, we are actually interested in the putative inverse $\overline{K}_p^{-1}: \overline{P}^{-1}(p) \rightarrow G^*_{\theta_p}$, so in what follows we will construct $\overline{K}_p^{-1}$ and show that it is an isomorphism.

The fibres $\overline{P}^{-1}(p)$ have $|Z(G)|$ many disconnected components, all of which are isomorphic to $P^{-1}(p)$ as varieties. The strategy will be to identify each component of $\overline{P}^{-1}(p)$ with a component of $G^*_{\theta_p}$ in a $Z(G)$-equivariant way.

To define our inverse morphism, we need to lift $\rho_N$ and $\widetilde{\rho}_N$ to the cover, which can be done by precomposition. For convenience, we continue to use $\rho_N$ to denote the composition $\KT_G\xrightarrow{\pi_G}\KT_\gf\xrightarrow{\rho_N}N$, and $\widetilde{\rho}_N$ for the composition
\begin{equation}
	\KT_G\xrightarrow{\pi_G}\KT_\gf\xrightarrow{\mathrm{Ad}^*_{\widetilde{w}_0\pi_T(-)^{-1}}}\KT_\gf\xrightarrow{\psi^{-1}}N\times S_f\xrightarrow{\pi_N} N~.
\end{equation}
In other words,
\begin{equation}
\begin{split}
	\widetilde{\rho}_N:\KT_G&\rightarrow N \\
	y &\mapsto \pi_N(\psi^{-1}(\mathrm{Ad}_{\widetilde{w}_0\pi_T(y)^{-1}}^*\pi_G(y))) ~.
\end{split}
\end{equation}
Note that now the morphisms $\rho_N$ and $\widetilde{\rho}_N$ are both invariant on the $Z(G)$ orbits of $\KT_G$. For $\rho_N$ this is clear from the definition. For $\widetilde{\rho_N}$, note that the twisting by $\mathrm{Ad}^*_{\widetilde{w}_0(\pi_T(y))^{-1}}$ that enters the definition is insensitive to the centre, since the coadjoint action is blind to it.

We now define $\overline{K}_p^{-1}:\overline{P}^{-1}(p)\rightarrow G_{\theta_p}^*$, by the same expression as before,
\begin{equation}
\begin{split}
	\overline{K}_p^{-1}(y) &= \bigg((\widetilde{\rho}_N(y)\delta_p)^{-1}\widetilde{w}_0(\pi_T(y))^{-1} \rho_N(y) \delta_p\bigg)~,\\
	&= \delta_p^{-1} \widetilde{\rho}_N(y)^{-1} \widetilde{w}_0(\pi_T(y))^{-1} \rho_N \delta_p~,
\end{split}
\end{equation}
where now the definition applies for any $y\in \overline{P}^{-1}(p)$. For adjoint $G$, of course, this recovers the inverse $\overline{K}^{-1}_p$. Notice that the map $\pi_T$ \emph{is sensitive} to which connected component of $\KT_G$ $y$ lies in, and moreover, it is the only part of the map which is sensitive to this. Recalling the equivariance properties of Lemma~\ref{lem:piT_equivariant} we can anticipate that this map will target every connected component of $G^*_{\theta_p}$. We have the following.

\begin{thm}~\label{thm:Kx_surjective}
Let $p\in\gf^*/\!\!/G$, then $\overline{K}_p^{-1}:\overline{P}^{-1}(p)\rightarrow G^*_{\theta_p}$ is an isomorphism.
\end{thm}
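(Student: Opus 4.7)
The plan is to deduce that $\overline{K}_p^{-1}$ is an isomorphism from the corresponding adjoint-type result (Theorem~\ref{thm:Kostant_level_set}) by a descent argument along the $Z(G)$-torsor structures that are present on both the source and the target. The key observation is that the map is built from ingredients, all but one of which are already insensitive to the covering, and the one remaining ingredient $\pi_T$ is equivariant by Lemma~\ref{lem:piT_equivariant}.

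First I would identify the $Z(G)$-actions on both sides. On the source, $\overline{P}^{-1}(p)$ is a free $Z(G)$-set with quotient $P^{-1}(p)$ by restriction of the $Z(G)$-torsor structure on $\pi_G:\KT_G\to\KT_\gf$. On the target, since $Z(G)\subset T$ the Bruhat cell $G^* = N\widetilde{w}_0 T N$ is stable under left translation by $Z(G)$, and because central elements commute with any group element this translation preserves $G^*_{\theta_p}$. The quotient is the analogous Zariski-open subset of the stabiliser in the adjoint group, which we denote $G^{*,\mathrm{ad}}_{\theta_p}$.

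Next I would verify that $\overline{K}_p^{-1}$ is $Z(G)$-equivariant. In the explicit formula
\begin{equation}
\overline{K}_p^{-1}(y) = \delta_p^{-1}\,\widetilde{\rho}_N(y)^{-1}\,\widetilde{w}_0\,\pi_T(y)^{-1}\,\rho_N(y)\,\delta_p~,
\end{equation}
the factors $\delta_p$, $\rho_N(y)$, and $\widetilde{\rho}_N(y)$ are $Z(G)$-invariant by the discussion immediately preceding the theorem, while Lemma~\ref{lem:piT_equivariant} gives $\pi_T(z\cdot y) = z\,\pi_T(y)$ for $z\in Z(G)$. Using the centrality of $z$ to commute it freely through $\widetilde{w}_0$, $\rho_N(y)$, and $\delta_p$, one finds $\overline{K}_p^{-1}(z\cdot y) = z^{-1}\cdot\overline{K}_p^{-1}(y)$, so the map is equivariant for the inverse $Z(G)$-action on the target (equivalently, equivariant after composing with inversion, which is a diffeomorphism of the torsor). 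I would then check that the square
\begin{equation}
\begin{tikzcd}
\overline{P}^{-1}(p)\ar[r,"\overline{K}_p^{-1}"]\ar[d,twoheadrightarrow] & G^*_{\theta_p}\ar[d,twoheadrightarrow]\\
P^{-1}(p)\ar[r,"K_p^{-1}","\sim"'] & G^{*,\mathrm{ad}}_{\theta_p}
\end{tikzcd}
\end{equation}
commutes; this is immediate because after passing to the adjoint quotient $T\to T/Z(G)$, $\pi_T(y)^{-1}$ reduces to the torus factor appearing in the original construction of~\cite{Crooks:2020}, while the other factors are already defined at the level of $\KT_\gf$.

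Finally I would conclude: the bottom row is an isomorphism of varieties by Theorem~\ref{thm:Kostant_level_set}, both vertical maps are \'etale $Z(G)$-torsors, and the top row is a $Z(G)$-equivariant morphism of such torsors inducing that bottom isomorphism on quotients. Descent for torsors then forces $\overline{K}_p^{-1}$ itself to be an isomorphism of schemes. The main technical obstacle is the equivariance computation in Step~2: one must carefully track how the central factor $z$ passes through $\widetilde{w}_0$ and the unipotent pieces, and confirm that the $Z(G)$-action on $G^*_{\theta_p}$ induced by this formula really coincides (up to the choice of left vs.\ right) with the natural one fixed in Step~1. Everything else is a formal consequence of the adjoint-type theorem and the torsor structure.
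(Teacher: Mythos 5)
Your proposal is correct, and it reaches the conclusion by a route that is organised differently from the paper's, even though it rests on exactly the same two inputs: the adjoint-type isomorphism of Theorem~\ref{thm:Kostant_level_set} (via the explicit inverse of \cite[Lemma 19]{Crooks:2020}) and the $Z(G)$-equivariance of $\pi_T$ from Lemma~\ref{lem:piT_equivariant}. The paper argues element-wise: for surjectivity it takes an arbitrary $n'\widetilde{w}_0hn\in G^*_{\theta_p}$, imposes the stabilisation condition to derive $\alpha(h)=\gamma_\alpha(y)^{-1}$ for all simple roots, and uses the equivariance of $\pi_T$ to conclude that $h=\pi_T(\tilde y)^{-1}$ for a \emph{unique} $\tilde y$ in the fibre of $\pi_G$; injectivity is a separate short check that two preimages with the same image must coincide. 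You instead package both halves into a single descent statement: an equivariant morphism of $Z(G)$-torsors covering the adjoint isomorphism is itself an isomorphism. Your anti-equivariance computation $\overline{K}_p^{-1}(z\cdot y)=z^{-1}\cdot\overline{K}_p^{-1}(y)$ is right (centrality lets $z^{-1}$ slide out of the product), and twisting the target action by inversion is harmless for the torsor argument. What your approach buys is that you never have to re-derive the condition pinning down the torus component $h$ --- that work is entirely absorbed into the adjoint-type theorem --- at the cost of having to verify the torsor structure on $G^*_{\theta_p}\to G^{*,\mathrm{ad}}_{\theta_p}$ (which does hold, since $Z(G)\subset T$ preserves both the big cell and the stabiliser condition, and the coadjoint action factors through the adjoint group) and the commutativity of the square. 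The paper's proof is more self-contained and explicit; yours is cleaner and would generalise more readily to other central quotients, but both are at bottom the same observation that $\pi_T$ is the unique ingredient of $\overline{K}_p^{-1}$ that sees the covering.
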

\begin{proof}
The following is based on the proof of \cite[Lemma 19]{Crooks:2020}, modified to account for the subtleties of $G$ not being of adjoint type.

First, we establish surjectvity. Let $n^\prime\widetilde{w}_0hn\in G_{\theta_p}^*$ with $n^\prime,n\in N$ and $h\in T$. Then $y=Ad^*_n\theta_p$ is some element in $ P^{-1}(p)$ by Theorem \ref{thm:Kostant_level_set}. Since $n^\prime\widetilde{w}_0hn$ stabilises $\theta_p$, we have
\begin{equation}
	Ad^*_{n^\prime\widetilde{w}_0hn} \theta_p = Ad^*_{n^\prime\widetilde{w}_0h} y\overset{!}{=} \theta_p~.
\end{equation}
Concretely $\theta_p = \chi + \tau^*$ for some $\tau^*$ in the fundamental Weyl chamber and $y = \chi  + \tau' + \sum_{\alpha\in \Delta} \gamma_\alpha e_\alpha^*$. Therefore
\begin{equation}
\begin{split}
	\mathrm{Ad}^*_{n^\prime\widetilde{w}_0h} y &= \mathrm{Ad}^*_{n^\prime\widetilde{w}_0h}\big( \chi+\tau'+ \sum_{\alpha\in\Delta}\gamma_\alpha(y) e^*_\alpha \big)~,\\
	&= c_+ + \sum_{\alpha\in\Delta} \alpha(h) \gamma_\alpha(y) e_{w_0\alpha}^*~,
\end{split}
\end{equation}
with $c_+\in\mathfrak{b}^*$. Since this must be equal to $\theta_p = \chi + \tau$, we must have that
\begin{equation}
	\chi = \sum_{\alpha\in\Delta}\alpha(h)\gamma_\alpha(y) e^*_{w_0 \alpha}~,
\end{equation}
which implies
\begin{equation}
	\alpha(h) = \gamma_\alpha^{-1}(y)~.
\end{equation}
This means that $h$ can be written as $\big(\pi_T(\tilde{y})\big)^{-1}$ for some $\tilde{y}\in \pi^{-1}_G(y)$. Moreover, this choice of $\tilde{y}$ must be unique, since $\pi_T$ is $Z(G)$ equivariant from Lemma~\ref{lem:piT_equivariant}. Since $\theta_p$ and $y$ are in the same $N$-orbit of a point on the Slodowy slice, we must have that $n = \rho_N(\tilde{y})\delta_p$ for every $\tilde{y}\in\pi_G^{-1}(y)$.

Next, to fix $n^\prime$ we can follow, with slight modifications for non-adjoint $G$, the proof of~\cite[Lemma 19]{Crooks:2020} to establish that
\begin{equation}
	n^\prime = \big (\widetilde{\rho}_N(\tilde{y})\delta_p\big)^{-1}~.
\end{equation}
Thus, every element of $G^*_{\theta_p}$ looks like $\overline{K}_p^{-1}(\tilde{y})$ for some $\tilde{y}\in\overline{P}^{-1}(p)$, and $K^{-1}_p$ is surjective.

Now to establish injectivity, let $x,y\in \overline{P}^{-1}(p)$ such that $\overline{K}^{-1}_p(x)=\overline{K}^{-1}_p(y)=n^\prime h n\in G^*_{\theta_p}$. We see that $Ad^*_{n}\theta_p = \pi_G(x)=\pi_G(y)$ and so $x,y$ are both in the fibre of $\pi_G$ above a common point on $\KT_\gf$. But by Lemma~\ref{lem:piT_equivariant}, this means $\pi_T(x)$ and $\pi_T(y)$ will differ by an action of $Z(G)$, and so will $\overline{K}^{-1}_p(x)$ and $\overline{K}^{-1}_p(y)$. Thus, we conclude that $x\overset{!}{=}y$ if they map to the same point under $\overline{K}_p^{-1}$.
\end{proof}

To summarise, the map $\lambda:\KT_G\rightarrow G^*$ is defined by
\begin{equation}
	\lambda(y)\coloneqq \delta_p \overline{K}^{-1}_p(y)\delta^{-1}_p~,
\end{equation}
for $y\in \KT_G$ and $p=\overline{P}(y)$. Unravelling the definitions again gives the more compact and algebraic form,
\begin{equation}
\label{eq:lambda_definition}
\begin{split}
	\lambda:~ &\KT_G \rightarrow G^*\\
	& y \mapsto \widetilde{\rho}_N(y)^{-1}\widetilde{w_0}\big(\pi_T(y)\big)^{-1}\rho_N(y)~.
\end{split}
\end{equation}
For any $y\in \KT_G$, we have that $\lambda(y)\in G^*_{\beta(y)}$, and as an immediate corollary of Theorem~\ref{thm:Kx_surjective}, we have the following:
\begin{corr}\label{corr:lambda_iso}
Let $p\in \gf^*/\!\!/G$ and let $s_p\in S_f$ be the corresponding Slodowy slice element under Kostant's isomorphism. Then, the restriction of $\lambda$ to $\overline{P}^{-1}(p)$ gives an isomorphism
\begin{equation}
	\lambda: \overline{P}^{-1}(p) \xrightarrow{\sim} G_{s_p}^*~.
\end{equation}
\end{corr}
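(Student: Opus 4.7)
The plan is to derive the corollary directly from Theorem~\ref{thm:Kx_surjective} together with the relationship between $\theta_p$ and $s_p$ established earlier in the excerpt. Recall that by definition $\lambda(y) = \delta_p\, \overline{K}_p^{-1}(y)\, \delta_p^{-1}$ for $y\in\overline{P}^{-1}(p)$, so the restriction of $\lambda$ to this fibre factors as
\begin{equation*}
\lambda|_{\overline{P}^{-1}(p)} \;=\; c_{\delta_p} \circ \overline{K}_p^{-1},
\end{equation*}
where $c_{\delta_p}$ denotes conjugation by $\delta_p$. Theorem~\ref{thm:Kx_surjective} already supplies the fact that $\overline{K}_p^{-1}: \overline{P}^{-1}(p) \rightarrow G^*_{\theta_p}$ is an isomorphism of varieties, so what remains is to show that $c_{\delta_p}$ restricts to an isomorphism $G^*_{\theta_p} \xrightarrow{\sim} G^*_{s_p}$ of algebraic varieties.

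For the second step, I would verify two things. First, since $\mathrm{Ad}^*_{\delta_p}\theta_p = s_p$, the inner automorphism $c_{\delta_p}$ of $G$ carries the stabiliser $G_{\theta_p}$ bijectively onto $G_{s_p}$, with algebraic inverse $c_{\delta_p^{-1}}$. Second, $c_{\delta_p}$ preserves the big Bruhat cell $G^* = N\widetilde{w}_0 T N$: writing $g = n_1 \widetilde{w}_0 t n_2$ and using $\delta_p\in N$, one computes $\delta_p g \delta_p^{-1} = (\delta_p n_1)\widetilde{w}_0 t (n_2 \delta_p^{-1}) \in N\widetilde{w}_0 T N = G^*$, and similarly $c_{\delta_p^{-1}}(G^*)\subset G^*$. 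Intersecting with the first point gives the isomorphism $G^*_{\theta_p}\xrightarrow{\sim}G^*_{s_p}$, which also recovers the identity $G^*_{s_p} = \delta_p G^*_{\theta_p}\delta_p^{-1}$ already noted in the text.

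Composing these two isomorphisms yields the claim. I do not expect any genuine obstacle here: the content has been packaged into Theorem~\ref{thm:Kx_surjective}, and the remaining work is the essentially formal check that conjugation by an element of $N$ is an isomorphism of the big cell preserving the relevant stabiliser subgroups. The only mild care to be taken is in separating the two sources of stabilisation---the fact that $\delta_p$ intertwines $\theta_p$ and $s_p$, versus the fact that conjugation by $N$ preserves $G^*$---and checking that the composite $c_{\delta_p}\circ \overline{K}_p^{-1}$ indeed agrees with the closed-form expression \eqref{eq:lambda_definition} for $\lambda$ on the fibre $\overline{P}^{-1}(p)$.
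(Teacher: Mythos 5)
Your proposal is correct and follows essentially the same route as the paper, whose proof simply states that the result follows immediately from Theorem~\ref{thm:Kx_surjective} and the definition $\lambda(y)=\delta_p\,\overline{K}_p^{-1}(y)\,\delta_p^{-1}$. The details you supply---that conjugation by $\delta_p\in N$ carries $G^*_{\theta_p}$ isomorphically onto $G^*_{s_p}$ because $\mathrm{Ad}^*_{\delta_p}\theta_p=s_p$ and because $N$-conjugation preserves the big Bruhat cell---are exactly the facts the paper treats as already established (indeed it records $G^*_{s_p}=\delta_p G^*_{\theta_p}\delta_p^{-1}$ earlier in the section), so your write-up is a faithful, slightly more explicit version of the intended argument.
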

\begin{proof}
Follows immediately from Theorem~\ref{thm:Kx_surjective} and the definition of $\lambda$.
\end{proof}
Thus, we have the morphism
\begin{equation}
\label{eq:def_varphi}
\begin{split}
	\varphi: \KT_G &\rightarrow \Zf_G~,\\
	x &\mapsto (\lambda(x), \beta(x))~.
\end{split}
\end{equation}
Let $\Zf^*_G\subset G^*\times S_f$ be the closed subscheme,
\begin{equation}
	\Zf^*_G = \{ (g,s)\in G^*\times S_f | \mathrm{Ad}^*_gs = s\}~.
\end{equation}
Note that $\Zf_G^* =( G^*\times S_f )\cap \Zf_G$ inside $G\times S_f$, and so is Zariski open in $\Zf_G$. The open subset $\Zf^*_G$ can alternatively be characterised as the Hamiltonian reduction
\begin{equation}
	\Zf^*_G \cong \big(\mu_{diag}^{-1}(0)\cap (G^*\times\gf^\ast_{\rm reg})\big)/\!\!/G_{diag}~.
\end{equation}

\begin{prop}\label{prop:phi_open_immersion}
The morphism $\varphi:\KT_G\rightarrow \Zf_G$ is an open immersion of $S_f$-schemes, with image equal to $\Zf^*_G$
\end{prop}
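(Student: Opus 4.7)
The plan is to argue in three steps: first that $\Zf_G^*$ is Zariski open in $\Zf_G$; next that $\varphi$ factors through $\Zf_G^*$; and finally that the induced morphism $\KT_G \to \Zf_G^*$ is an isomorphism of $S_f$-schemes. The first step is immediate from the realisation of $\Zf_G$ as a closed subscheme of $G \times S_f$ via \eqref{eq:second_KW_reduction}: since $G^* = N \widetilde{w}_0 T N$ is the big Bruhat cell, $G^* \times S_f$ is Zariski open in $G \times S_f$, so $\Zf_G^*$ is its intersection with $\Zf_G$. The second step is essentially recorded in the remarks preceding the proposition: for every $y \in \KT_G$ one has $\lambda(y) \in G^*_{\beta(y)}$, so $\varphi(y) \in \Zf_G^*$, and $\beta$ agrees with the second projection on the target by construction.

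The main content lies in the third step. Corollary~\ref{corr:lambda_iso} already provides a fibrewise isomorphism over $S_f$: the fibre of $\beta$ at $s_p$ is $\overline{P}^{-1}(p)$, that of $\Zf_G^* \to S_f$ at $s_p$ is $G^*_{s_p}$, and $\varphi$ restricts to the corollary's isomorphism between them. To promote this fibrewise statement to a global one, I would construct an explicit algebraic inverse. Given $(g, s) \in \Zf_G^*$, uniqueness of the Bruhat decomposition on $G^*$ produces regular functions $n, n', h$ of $g$ with $g = n' \widetilde{w}_0 h n$, and I would set
\[
    \varphi^{-1}(g, s) \;=\; \bigl(\mathrm{Ad}^*_n s,\; h^{-1}\bigr)
\]
as a point of $\KT_\gf \times T$. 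This lies in the pullback $\KT_G$ of \eqref{eq:pullback_square}: the first entry sits in $\chi + \mathfrak{b}^*$ via the Kostant isomorphism $\psi$, while the relation $\alpha(h) = \gamma_\alpha^{-1}$ derived inside the proof of Theorem~\ref{thm:Kx_surjective} simultaneously forces each $\gamma_\alpha(\mathrm{Ad}^*_n s) \in \IC^\times$ (so the first entry in fact lies in $\KT_\gf$) and supplies the pullback compatibility $Z_G(h^{-1}) = (\gamma_\alpha(\mathrm{Ad}^*_n s))_\alpha$.

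Checking that $\varphi^{-1}$ is indeed inverse to $\varphi$ then reduces to unwinding the definitions of $\rho_N$, $\widetilde{\rho}_N$ and $\pi_T$, together with the auxiliary identity that $\mathrm{Ad}^*_g s = s$ rearranges as $\mathrm{Ad}^*_{\widetilde{w}_0 h n} s = \mathrm{Ad}^*_{(n')^{-1}} s$, pinning down $\widetilde{\rho}_N$ of the putative preimage as $(n')^{-1}$. The principal obstacle is ensuring algebraicity of $\varphi^{-1}$: the non-algebraic ingredients $\theta_p$ and $\delta_p$ used in constructing $\varphi$ must have telescoped out of the final formula for $\lambda$, as they have, so that the regular Bruhat data $(n, n', h)$ on the open cell $G^*$ are the only raw material needed to reconstruct $y$. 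Once this is verified, $\varphi$ factors as an isomorphism onto $\Zf_G^*$ composed with the open inclusion $\Zf_G^* \hookrightarrow \Zf_G$, which is precisely the claimed open immersion.
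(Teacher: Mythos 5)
Your proposal is correct, and its first two steps (openness of $\Zf^*_G$ inside $\Zf_G$, and the factoring of $\varphi$ through it) coincide with what the paper records immediately before and inside its proof. Where you genuinely diverge is in the third step: the paper argues purely leaf-by-leaf, invoking Corollary~\ref{corr:lambda_iso} to identify each fibre $\beta^{-1}(s_p)\cong\overline{P}^{-1}(p)$ with $G^*_{s_p}$ and then asserting that these fibrewise isomorphisms assemble into $\KT_G\cong\bigsqcup_{s}G^*_s=\Zf^*_G$, together with the commuting triangle over $S_f$; it never writes down a global inverse. You instead construct one explicitly from the Bruhat data on the big cell, $\varphi^{-1}(g,s)=(\mathrm{Ad}^*_n s,\,h^{-1})$ for $g=n'\widetilde{w}_0 h n$, which buys manifest algebraicity of the inverse---precisely the point the paper's leaf-by-leaf argument leaves implicit---at the cost of having to re-verify that this candidate lands in $\KT_G$ and inverts $\varphi$. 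On that verification, one remark: membership of $\mathrm{Ad}^*_n s$ in $\KT_\gf$ requires not only $\gamma_\alpha\neq 0$ (which, as you say, follows from the relation $\alpha(h)=\gamma_\alpha^{-1}$ in the proof of Theorem~\ref{thm:Kx_surjective}) but also the vanishing of all components along non-simple positive roots; the clean way to get both is to note that $\delta_p^{-1}g\delta_p\in G^*_{\theta_p}$ has Bruhat factors $(\delta_p^{-1}n',h,n\delta_p)$ and that $K_p(\delta_p^{-1}g\delta_p)=\mathrm{Ad}^*_{n\delta_p}\theta_p=\mathrm{Ad}^*_n s$ lies in $P^{-1}(p)\subset\KT_\gf$ by Theorem~\ref{thm:Kostant_level_set}, with $\delta_p$ and $\theta_p$ telescoping out of the final formula exactly as in the construction of $\lambda$. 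With that made explicit, your argument is complete and, if anything, more self-contained than the one in the paper.
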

\begin{proof}
We have a foliation of $\KT_G$,
\begin{equation}
	\KT_G = \bigsqcup_{p\in \gf^*/\!\!/G} \overline{P}^{-1}(p)
\end{equation}
From Proposition~\ref{prop:beta_dominant}, the morphism $\beta$ can be used to rewrite the above as a foliation over $S_f$,
\begin{equation}
	\KT_G = \bigsqcup_{s\in S_f} \beta^{-1}(s)~.
\end{equation}
The leaf $\beta^{-1}(s)$ can be identifed with the leaf $\overline{P}^{-1}(p)$ when $s=s_p$. By Corollary~\ref{corr:lambda_iso}, the morphism $\lambda$ gives an isomorphism between each leaf of this foliation and $G^*_s$. Thus,
\begin{equation}
	\KT_G\cong \bigsqcup_{s\in S_f} G^*_s = \Zf_G^*~.
\end{equation}

By construction, $\varphi$ sends the leaf of $\KT_G$ over $s\in S_f$ to the leaf of $\Zf_G^*$ over the same point. Therefore the diagram,
\begin{equation}
	\begin{tikzcd}
	\KT_G \ar[dr,"\beta",swap] \ar[rr,"\varphi"]& & \Zf_G \ar[dl]\\
	& S_f &
	\end{tikzcd}
\end{equation}
commutes and so $\varphi$ is a morphism of $S_f$-schemes.

\end{proof}
We would like $\varphi$ to not just be an open immersion, but also a symplectic morphism.

\begin{prop}~\label{prop:varphi-symplectic}
For any simple, algebraic Lie group $G$, $\varphi:\KT_G\rightarrow \Zf_G$ is symplectic.
\end{prop}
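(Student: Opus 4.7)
My approach is to reduce the general case to that of $G$ of adjoint type, where symplecticity is essentially classical, via an \'etale descent argument. Consider the commutative square
\[
\begin{tikzcd}
\KT_G \ar[r,"\varphi"] \ar[d,"\pi_G"'] & \Zf_G \ar[d,"q"] \\
\KT_\gf \ar[r,"\varphi_{\rm ad}"'] & \Zf_{G_{\rm ad}}
\end{tikzcd}
\]
where $q$ is the map induced by the central isogeny $G\twoheadrightarrow G_{\rm ad}$ on the group factor of $\Zf_G \subset G \times S_f$, and $\varphi_{\rm ad}$ denotes the morphism $\varphi$ applied to the adjoint form of $G$. Commutativity follows from the explicit formula~\eqref{eq:lambda_definition} together with the defining pullback square~\eqref{eq:pullback_square}: the morphisms $\rho_N$, $\widetilde{\rho}_N$ and $\beta$ factor through $\pi_G$ by construction, while the composition of $\pi_T$ with the quotient $T \to T_{\rm ad}$ agrees with $\pi_{T_{\rm ad}}\circ \pi_G$ by the universal property of the pullback.

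Next, both vertical arrows are \'etale and preserve symplectic forms. For $\pi_G$ this is the very definition of $\omega_{\KT_G}$. For $q$, one observes that $Z(G)$ acts freely on $\Zf_G$ by left multiplication on the group factor, so $q$ is a finite \'etale quotient; moreover, the diagonal Hamiltonian reductions defining $\Zf_G$ and $\Zf_{G_{\rm ad}}$ are compatible under the \'etale projection $T^\ast G \to T^\ast G_{\rm ad}$, giving $q^\ast \omega_{\Zf_{G_{\rm ad}}} = \omega_{\Zf_G}$.

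The remaining step---the principal technical obstacle---is to establish the adjoint case, $\varphi_{\rm ad}^\ast \omega_{\Zf_{G_{\rm ad}}} = \omega_{\KT_\gf}$. A natural strategy is to recognise both sides as arising from Kostant--Whittaker-type reductions of the Poisson structure on $\IC[T^\ast G]$: the target is the two-sided reduction~\eqref{eq:double_KW_reduction}, while $\KT_\gf$ itself can be identified (up to open inclusion) with a Whittaker reduction of $\gf^\ast$ restricted to $\chi + \mathfrak{b}^\ast$, so the KKS bracket descends to the Poisson structure recorded in~\eqref{eq:KT_Poisson}. The building blocks of $\varphi_{\rm ad}$---Kostant's isomorphism $\psi$, the Bruhat parametrisation of centralisers via $K_p$, and the conjugation by $\delta_p$---are then all manifestly symplectic within this reduction picture, and the compatibility of Poisson brackets follows. (An alternative, should the reduction-theoretic comparison prove awkward, is to verify equality of the two algebraic $2$-forms on a single generic fibre of $\beta$, where the centraliser is a torus and both forms reduce to standard $T^\ast T$ Darboux expressions, then extend by irreducibility.) Granting this, one concludes
\[
\varphi^\ast \omega_{\Zf_G} \;=\; \varphi^\ast q^\ast \omega_{\Zf_{G_{\rm ad}}} \;=\; \pi_G^\ast \varphi_{\rm ad}^\ast \omega_{\Zf_{G_{\rm ad}}} \;=\; \pi_G^\ast \omega_{\KT_\gf} \;=\; \omega_{\KT_G},
\]
completing the argument.
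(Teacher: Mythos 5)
Your descent argument is exactly the route the paper takes: the same commutative square relating $\varphi$ to $\varphi_{\rm ad}$ via the \'etale quotients $\pi_G$ and $\pi_{\Zf}$ (your $q$), with commutativity following from $Z(G)$-equivariance of $\varphi$, and the conclusion obtained by the identical chain of pullbacks $\varphi^\ast\pi_\Zf^\ast\omega_{\Zf_{G_{\rm ad}}}=\pi_G^\ast\varphi_{\rm ad}^\ast\omega_{\Zf_{G_{\rm ad}}}=\omega_{\KT_G}$. The one place you diverge is the base case: you flag $\varphi_{\rm ad}^\ast\omega_{\Zf_{G_{\rm ad}}}=\omega_{\KT_\gf}$ as the ``principal technical obstacle'' and only sketch a strategy, whereas the paper disposes of it in one line by citing Bezrukavnikov--Finkelberg--Mirkovi\'c, where the symplectomorphism between the Kostant--Toda lattice and the open piece of $\Zf_{G_{\rm ad}}$ is established. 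Your primary sketch (comparing the two Kostant--Whittaker-type reductions of the KKS structure) is in the right spirit and is essentially how the classical result is proved, but your fallback is not sound as stated: two algebraic $2$-forms that agree on a single fibre of $\beta$ need not agree globally, since a fibre is a proper closed subvariety and ``extending by irreducibility'' only works from a dense subset. Had you known to invoke the known adjoint-type result, your argument would coincide with the paper's.
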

\begin{proof}
Let $G_{ad}$ be a simple algebraic Lie group of adjoint type. From~\cite{Bezrukavnikov:2003}, it is known that
\begin{equation}
	\varphi_{ad}:\KT_{G_{ad}} \rightarrow \Zf_{G_{ad}}
\end{equation}
is symplectic.

Now let $G$ be a simple algebraic Lie group such that $\mathrm{Lie}~G = \mathrm{Lie}~ G_{ad}$. The universal centralisers are related by
\begin{equation}
	\Zf_{G_{ad}} \cong Z(G)\backslash \Zf_{G}~,
\end{equation}
where the centre $Z(G)$ acts by left multiplication on the fibres of $\Zf_{G}$. Let $\pi_\Zf$ be the projection $\Zf_{G}\rightarrow Z(G)\backslash \Zf_{G} \cong \Zf_{G_{ad}}$, which is symplectic. 

Consider the diagram,
\begin{equation}
	\begin{tikzcd}
	\KT_{G} \ar[d,"\pi_G"]\ar[r,"\varphi"] & \Zf_G \ar[d,"\pi_\Zf"]\\
	\KT_{G_{ad}}\ar[r,"\varphi_{ad}"] & \Zf_{G_{ad}}
	\end{tikzcd}
	~.
\end{equation}
Since $\varphi$ is $Z(G)$ equivariant, the square commutes. Let $\omega_{\Zf_{ad}}$ be the symplectic form on $\Zf_{G_{ad}}$, then
\begin{equation}
	\varphi^*_{ad}(\omega_{\Zf_{ad}}) = \omega_{\KT_\gf}~,
\end{equation}
where $\omega_{\KT_\gf}$ is the symplectic form on $\KT_\gf$. Since the symplectic form on $\KT_G$, $\omega_{\KT_G}$, is defined by pullback along $\pi_G$, we have that
\begin{equation}
	(\pi_G^*\circ \varphi_{ad}^*)(\omega_{\Zf_{ad}}) = \omega_{\KT_G}~,
\end{equation}
and $\varphi_{ad}\circ\pi_G$ is symplectic. As the square commutes,
\begin{equation}
	(\varphi^*\circ\pi_{\Zf}^*)(\omega_{\Zf_{ad}}) = (\pi_G^*\circ \varphi_{ad}^*)(\omega_{\Zf_{ad}}) = \omega_{\KT_G}~.
\end{equation}
But $\pi_{\Zf}^*(\omega_{\Zf_{ad}}) = \omega_{\Zf_G}$, where $\omega_{\Zf_G}$ is the symplectic form on $\Zf_G$. Therefore,
\begin{equation}
	(\varphi^*\circ \pi_{\Zf}^*)(\omega_{\Zf_{ad}}) = \varphi^*(\omega_{\Zf_G}) = \omega_{\KT_G}~,
\end{equation}
and $\varphi$ is symplectic.
\end{proof}

\subsection{\label{subsec:algorithm}An algorithmic construction for \texorpdfstring{$\varphi$}{varphi}}

With the morphism $\varphi:\KT_G\rightarrow \Zf_G$ established, we turn to the more practical problem of finding explicit expressions for the components of this map. Here we will restrict our attention to the case $G={\rm SL}_N$, whose structure as an algebraic group is particularly transparent. In the case of other algebraic Lie groups, this level of explicit detail would entail more work, though the story should be essentially analogous.

The embedding $\varphi$ consists of two parts, $\lambda$ and $\beta$. Let us first discuss $\beta$. An element $y\in \KT_\gf$ has form
\begin{equation}
	y =
	\begin{pmatrix}
	b_1(y) & \gamma_1(y) & 0 & \dots\\
	1 &  b_2(y)-b_1(y) & \gamma_2(y) & \dots\\
	0 & 1 & b_3(y)-b_2(y) & \dots\\
	\vdots & \vdots & \vdots & \ddots
	\end{pmatrix}~,
\end{equation}
when considered as an element of $\gf$.\!\footnote{Throughout this section we make liberal use of the Killing isomorphism to represent elements of $\gf^*$ as elements in $\gf$ without further comment.} The fibre of the map $\pi_G$ at $y$ consists of elements $\{\tilde{y}_k,~k=0,\ldots,N-1\}$ that are permuted cyclically under the action of $Z(G)$.

For $x\in \slf_n$ realised as a matrix in $\mathrm{End}(\IC^n)$, the fundamental invariants (up to conventions regarding overall scaling) can be computed as
\begin{equation}
	P_j(x) = \frac{1}{j+1} \mathrm{Tr}\big( x^{j+1}\big)~,\qquad \text{for } n=1,2,\dots,\rk~\gf~.
\end{equation}
The components of $\beta(y)$ are given by $P_1(y),P_2(y),\ldots P_{\rk_\gf}(y)$ and for a given value of $N$ can be easily computed by the trace formula above. To compute $\beta(\tilde{y}_k)$ we simply project to $y$ by $\pi_G$ before applying the expression above.

Next let us consider the more elaborate morphism $\lambda$. The definition \eqref{eq:lambda_definition} of $\lambda$ applied to a point in $\KT_G$ takes the form
\begin{equation}
	\lambda(\tilde{y}_k)  = \widetilde{\rho}_N(\tilde{y}_k)^{-1} \widetilde{w}_0\big(\pi_T(\tilde{y}_k)\big)^{-1}\rho_N(\tilde{y}_k)~.
\end{equation}
From the expression in \eqref{eq:piT-concrete}, the map $\pi_T$ is given by
\begin{equation}
\label{eq:pi_t_as_matrix}
	\pi_T =
	\begin{pmatrix}
	\widetilde{\gamma}_1 & 0 & 0 & 0 & \dots\\
	0 & \frac{\widetilde{\gamma}_2}{\widetilde{\gamma}_1} & 0 & 0 & \dots \\
	0 & 0 & \ddots & 0 & \dots \\
	\dots & 0 & 0 & \frac{\widetilde{\gamma}_{N}}{\widetilde{\gamma}_{N-1}} & 0 \\
	\dots & 0 & 0 & 0 &  \widetilde{\gamma}_{N}^{-1}
	\end{pmatrix}~.
\end{equation}
Like $\beta$, the maps $\rho_N$ and $\widetilde{\rho}_N$ only depend on the image of $\tilde{y}_k$ under $\pi_G$, whereas $\pi_T$ satisfies
\begin{equation}
	\pi_T(\tilde{y}_{k}) = \xi^{k}\pi_T(\tilde{y}_0)~.
\end{equation}

A key input into the maps $\rho_N$ and $\widetilde{\rho}_N$ is the inverse map $\psi^{-1}: \chi + \mathfrak{b}^* \rightarrow N\times S_f$ of the Kostant isomorphism. An element of $N\times S_f$ can be written as $(n,s)$, where we have
\begin{equation}
	n =
	\begin{pmatrix}
	1 & n_{1,1} & n_{1,2} & n_{1,3} & \dots\\
	0 & 1 & n_{2,1} & n_{2,2} & \dots\\
	0 & 0 & 1 & n_{3,1} & \dots\\
	\vdots &\vdots &\vdots &\vdots & \ddots
	\end{pmatrix}
	~,\qquad
	s =
	\begin{pmatrix}
	0 & p_1 & p_2 & p_3 & \dots\\
	1 & 0 & p_1 & p_2 & \dots\\
	0 & 1 & 0 & p_1 &\dots\\
	\vdots & \vdots & \vdots & \ddots
	\end{pmatrix}
	~.
\end{equation}
All entries are $\IC$-valued. Under $\psi$ the element $(n,s)$ is taken to the element of $\KT_\gf$ whose matrix entries are
\begin{equation}
\begin{split}
	Ad^\ast_{n}s=nsn^{-1} &=
	\begin{pmatrix}
	n_{1,1} & p_1 -(n_{1,1})^2 + n_{1,2} & \ast 	& \dots\\
	1 		& n_{2,1} - n_{1,1} 		 & p_1-(n_{2,1})^2 + n_{2,1}n_{1,1} + n_{2,2} - n_{2,1}			& \dots\\
	0 		& 1 						 & n_{3,1}-n_{2,1}	& \dots\\
	\vdots 	& \vdots 					 & 	\vdots			& \ddots
	\end{pmatrix}\\
	&\overset{!}{=}
	\begin{pmatrix}
	b_1 & \gamma_1 & 0 & \dots\\
	1 & b_2 -b_1 & \gamma_2 & \dots \\
	0 & 1 & b_3 - b_2 & \dots\\
	\vdots & \vdots & \vdots & \vdots
	\end{pmatrix}=y
	~.
\end{split}
\end{equation}
The inverse morphism $\psi^{-1}$ is determined by systematically solving this equation for the coordinates $n_{i,j}$ and $p_i$ on $N\times S_f$. In practice, equality of diagonals sets $n_{i,1}=b_i$, and qquating the first superdiagonals then fixes the values of $n_{i,2}$ and $p_1$. The next superdiagonal fixes $n_{i,3}$ and $p_2$, and so on continuing to higher superdiagonals. At each step, the new unfixed coordinates appear only linearly. The value of $\rho_N(\tilde{y}_k)$ is then just $n$, re-expressed in terms of the $b$ and $\tilde\gamma$ variables according to this procedure.

The map $\widetilde{\rho}_N$ can be determined in the same manner. Note that the conjugate $\mathrm{Ad}_{\widetilde{w}_0(\pi_T(\tilde{y}_0))^{-1}} y$ will be the reflection of $y$ along the antidiagonal. To see this, observe that the element $\pi_T(\tilde{y}_0)^{-1}$ leaves the $b_i$ invariant and contracts each $\gamma_i$ entry to $1$ while dilating $\chi$ to $\sum_{i=1}^{\rk\,\gf} \gamma_ie_{-\alpha_i}$ (as the negative simple roots transform with opposite weights). In effect, this reflects $y$ across the diagonal. On the other hand, the element $\widetilde{w}_0$, acts by sequential reflection across the diagonal and the antidiagonal, so in combination the effect is that $\mathrm{Ad}_{\widetilde{w}_0(\pi_T(\tilde{y}_0))^{-1}}$ is the antidiagonal reflection. Consequently, in practice the expression for $\widetilde{\rho}_N(y)$ can simply be obtained from the $n_{i,j}$ above by making the replacement $(b_i,\gamma_i)\leftrightarrow(-b_{N-1-i},\gamma_{N-1-i})$.

For the purpose of realising the universal centraliser as a symplectic variety, this procedure can be expedited as a result of Proposition~\ref{prop:ring_generators_sln} and Theorem~\ref{thm:Poisson_generation}. These imply that rather than constructing $\lambda$ in its entirety, it is sufficient to determine just the lower left-hand corner entry.

By definition, both $\rho_N(\tilde{y}_k)$ and $\widetilde{\rho}_N(\tilde{y}_k)$ are upper unitriangular matrices, while $\widetilde{w}_0\big(\pi_T(\tilde{y})_k\big)^{-1}$ is supported along the antidiagonal as in \eqref{eq:pi_t_as_matrix}. The bottom row of the stabilising group element is therefore completely fixed by the first row of the map $\rho_N$, and what's more, the lower-left corner entry is just equal to the reciprocal of the upper-left corner entry of $\pi_T$, so in particular we have
\begin{equation}
	X \coloneqq \widetilde{\gamma}^{-1}_1= \prod_{j=1}^{\rk~\gf} \gamma_j^{-\widetilde{a}_{1j}}~.
\end{equation}
From this and the expression for the $P_j$ functions, Theorem~\ref{thm:Poisson_generation} allows us to generate all of $I_{{\rm SL}_N}$.

For $G\neq \rm SL_N$ we have a \textit{conjectural} expression for $X$, coming from the discussion in Section~\ref{subsec:sln_poisson_structure}. Suppose that $G$ is not $\mathrm{Spin}(4N)$, then the lowest dimensional faithful representation, $V_F$ is irreducible. Let $(a_1,a_2,\dots_,a_{\rk\,\gf})$ be the components of the highest weight of $V_F$ \textit{in the simple root basis}. We conjecture that
\begin{equation}
\label{eq:finite_X_def}
	X = \prod_{j=1}^{\rk\,\gf} \gamma_j^{-a_{j}}~.
\end{equation}
For $G={\rm SL}_N$ this specialises to $X= \prod_{j=1}^{\rk\,\gf} \gamma_j^{-\widetilde{a}_{1j}}$, where $\widetilde{a}_{ij}$ are again the entries of the inverse Cartan matrix. Similarly, if $G$ is ${\rm Spin}(4N)$ we define $X_{\rm s}$ and $X_{\rm c}$ analogously to \eqref{eq:finite_X_def} using the highest weights of $V_{\rm s}$ and $V_{\rm c}$ respectively. 

\subsection{\label{subsec:KT_SL3}The embedding for \texorpdfstring{${\rm SL}_3$}{SL3}}

The ring of functions on the Kostant--Toda lattice for $\slf_3$ is given by
\begin{equation}
\label{eq:ktg_sl3_coordinate_ring}
	\IC[\KT_{\gf}] = \IC[b_1,b_2, \gamma_1^{\pm 1},\gamma_2^{\pm 1}]~.
\end{equation}
Following the general prescription, for $G={\rm SL}_3$ the covering space $\KT_G$ comes with coordinate ring
\begin{equation}
\label{eq:ktG_SL3_coordingate_ring}
	\IC[\KT_G] = \IC[b_1,b_2,\widetilde{\gamma}^{\pm1}_1,\widetilde{\gamma}^{\pm1}_2]~,
\end{equation}
where $\widetilde{\gamma}_1 = \gamma^{-\frac{2}{3}}_1\gamma_2^{-\frac{1}{3}}$ and $\widetilde{\gamma}_2 = \gamma^{-\frac{2}{3}}_2\gamma_1^{-\frac{1}{2}}$, and the Poisson bracket follows accordingly. We can make explicit the maps involved in the definition of $\varphi$. For $\pi_T:\KT_{G} \rightarrow T$ we have (at the level of rings of functions)
\begin{equation}
\label{eq:piT_map_SL3}
	t_1 \mapsto \widetilde{\gamma}_1~,\quad t_2 \mapsto \widetilde{\gamma}_2~.
\end{equation}
The longest element of the Weyl group of $A_2$ can be lifted to $N_{{\rm SL}_3}(T)$ as
\begin{equation}
\label{eq:sl3_longest_weyl_element}
	\widetilde{w}_0 =
	\begin{pmatrix}
		0 & 0 & -1 \\
		0 & -1 & 0 \\
		-1 & 0 & 0
	\end{pmatrix}~.
\end{equation}
The coordinate ring of the Slodowy slice, $S_f$, is embedded (up to normalisation) as
\begin{equation}
\label{eq:sl3_slodowy_functions}
\begin{split}
	S &= \frac12\Tr(y^2)=b_1^2 + b_2^2 -b_1b_2 +\gamma_1 + \gamma_2~,\\
	W &= \frac13\Tr(y^3)=b_1^2b_2-b_1b_2^2 +\gamma_1b_2 - \gamma_2 b_1~.
\end{split}
\end{equation}
The image of $\lambda$ is realised as follows. Solving for the matrix $\rho_N(y)$ as described above, one finds
\begin{equation}
\label{eq:sl3_rho_N}
	\rho_N(y)=
	\begin{pmatrix}
	~~1~~ & ~~b_1~~ & \frac12(b_1^2+b_1b_2-b_2^2+\gamma_1-\gamma_2) \\
	0 & 1 & b_2 \\
	0 & 0 & 1
	\end{pmatrix}
	~.
\end{equation}
The group element $\tilde w_0(\pi_T(y))^{-1}$ is given as follows,
\begin{equation}
\label{eq:intermediate_group_element}
	\tilde w_0(\pi_T(y))^{-1}=
	\begin{pmatrix}
		0 & 0 & -1 \\
		0 & -1 & 0 \\
		-1 & 0 & 0
	\end{pmatrix}
	\begin{pmatrix}
		\tilde\gamma_1 & 0 & 0 \\
		0 & \frac{\tilde\gamma_2}{\tilde\gamma_1} & 0 \\
		0 & 0 & \frac{1}{\tilde\gamma_2}
	\end{pmatrix}^{-1}
	=
	\begin{pmatrix}
		0 & 0 & -\gamma_1^{\frac13}\gamma_2^{\frac23} \\
		0 & -\frac{\gamma_2^{\frac13}}{\gamma_1^{\frac13}} & 0 \\
		-\frac{1}{\gamma_1^{\frac23}\gamma_2^{\frac13}} & 0 & 0
	\end{pmatrix}
	~.
\end{equation}
Solving for the matrix $\widetilde{\rho_N}(y)$ then produces the result
\begin{equation}
\label{eq:sl3_rho_N_tilde}
	\widetilde{\rho_N}(y)=
	\begin{pmatrix}
	~~1~~ & ~~-b_2~~ & \frac12(-b_1^2+b_1b_2+b_2^2-\gamma_1+\gamma_2) \\
	0 & 1 & -b_1 \\
	0 & 0 & 1
	\end{pmatrix}
	~.
\end{equation}
Putting everything together, this gives the following parameterisation of (an open subset of) stabiliser of the point $\beta(y)\in S_f$,
\begin{equation}
\label{eq:sl3_bottom_row_classical}
	\lambda(y)=-\frac{1}{\gamma_1^{\frac23}\gamma_2^{\frac13}}
	\begin{pmatrix}
	b_1^2+\gamma_1-\frac{S}{2} & W+\frac{b_1S}{2} & b_1 W+\frac{S^2}{4} \\
	b_1 & b_1^2+\gamma_1 & W+\frac{b_1S}{2} \\
	1 ~&~ b_1 ~&~ b_1^2+\gamma_1-\frac{S}{2}
	\end{pmatrix}
	~,
\end{equation}
where to simplify the expression we have used $S$ and $W$ as defined in \eqref{eq:sl3_slodowy_functions}. We define the additional algebraic generators of the coordinate ring as the entries on the bottom row of $\lambda(y)$,
\begin{equation}
\label{eq:sl3_matrix_generators_classical}
\begin{split}
	X &\coloneqq  -\gamma_1^{-\frac{2}{3}}\gamma_2^{-\frac{1}{3}}~,\\
	Y &\coloneqq-\-\gamma_1^{-\frac{2}{3}}\gamma_2^{-\frac{1}{3}} b_1~,\\
	Z &\coloneqq -\frac{1}{2}\gamma_1^{-\frac{2}{3}}\gamma_2^{-\frac{1}{3}} (\gamma_1-\gamma_2 + b_1^2-b_2^2+b_1b_2)~.
\end{split}
\end{equation}
By Proposition~\ref{prop:ring_generators_sln}, only the bottom row is needed to uniquely specify $\lambda(y)$, and indeed, the remaining rows can be expressed in terms of the $S,W,X,Y,Z$ (and could be determined directly by solving the $[\lambda(y),\beta(y)]=0$). Altogether, we have for the embedding $\varphi(y)$,
\begin{equation}
	\varphi(y) = \left(
	\begin{pmatrix}
	Z ~&~ WX + \frac12 SY ~&~ \frac14 S^2X+ W Y \\
	Y & \frac12 SX +  Z &  W X + \frac12 S Y \\
	X & Y & Z
	\end{pmatrix}
	~,~
	\begin{pmatrix}
	0 & \frac12 S &  W \\
	1 & 0 & \frac12 S \\
	0 & 1 & 0
	\end{pmatrix}
	\right)~.
\end{equation}
Note that this agrees with the presentation of Section~\ref{subsec:eq_rank_two}. The determinantal relation
\begin{equation}
\begin{split}
	D_{{\rm SL}_3}=  &-\frac{1}{8} S^3 X^3 + W^2 X^3 + \frac{1}{2} S W X^2 Y + \frac{1}{2} S^2 X Y^2 + W Y^3 - \frac{1}{4} S^2 X^2 Z\\ & - 3 W X Y Z - S Y^2 Z + \frac{1}{2} S X Z^2 + Z^3 \overset{!}{=}1~,
\end{split}
\end{equation}
is satisfied identically by the expressions for $S,W,X,Y,Z$ in terms of $\KT_G$ variables. The Poisson structure is inherited from the structure on $\KT_G$, and one can verify that the Poisson brackets also agree with~\eqref{eq:sl3_Poisson}.

\section{\label{sec:chiralising_centraliser}Chiralising the universal centraliser}

In~\cite{Arakawa:2018egx}, the (critical level) chiral universal centraliser ($\mathbf{I}_G$, denoted in \emph{loc.} {\it cit.} by $\mathbf{I}_G$), was defined to be the two-sided quantum Drinfel'd--Sokolov reduction of critical level chiral differential operators (CDOs) on $G$ ($\D_{G,\kappa_c}$).\!\footnote{The chiral universal centraliser $\mathbf{I}_{{\rm SL}_2}$ at \emph{non-critical level} appeared previously in~\cite{Frenkel:200657} as the regular representation of the Virasoro algebra.} The operation of Drinfel'd--Sokolov reduction is a chiral analogue of Kostant--Whittaker reduction, and $\D_{G,\kappa_c}$ represents a \textit{chiral quantisation} (defined in the following section) of $T^*G$, so $\mathbf{I}_G$ is indeed a natural chiral analogue of $\Zf_G$. By virtue of its definition in terms of DS reduction, it should be identified with the VOA assigned to the sphere of type $G$ in class $\SS$ \cite{Beem:2014rza}.

In the previous section, we presented a method for realising $I_G=\IC[\Zf_G]$ by way of the Kostant--Toda lattice $\KT_G$. The open immersion $\varphi:\KT_G\rightarrow \Zf_G$ gives a realisation of $I_G$ in terms of functions on an algebraically trivial space, which we think of as a (finite-dimensional, classical) ``free field realisation'' of $I_G$. In this section we describe a chiral analogue of this construction, and propose a strategy for producing (chiral, quantum) free field realisations for $\mathbf{I}_G$ for general $G$. We make this completely explicit for the cases $G={\rm SL}_2$ and $G={\rm SL}_3$.

In this section $G$ will be assumed to be simply connected unless otherwise stated.

\subsection{\label{subsec:chiral_KT_lattice}Chiralising the Kostant--Toda lattice}

The Kostant--Toda lattice is isomorphic as a symplectic variety to $(T^\ast\IC^\times)^{\rk\,\gf}$. The standard chiral quantisation(s) of this space are given by CDOs on $(\IC^{\times})^{\rk\,\gf}$. In~\cite{Beem:2019tfp}, these CDOs were presented in terms of a lattice boson system. We will review this construction here in connection with the Kostant--Toda lattice and its refinement.

Let chiral bosons $\phi_i$, $\delta_j$ for $i,j=1,2,\dots,\rk\,\gf$ be normalised according to
\begin{equation}
\label{eq:chiral_boson_opes}
	\phi_i(z)\phi_j(w) \sim -\delta_{ij}\log(z-w)~,\quad \delta_i(z)\delta_j(w) \sim \delta_{ij}\log(z-w)~.
\end{equation}
Let $\Pi_\gf$ be the isotropic lattice vertex algebra generated by $\partial\phi_i$, $\partial\delta_i$, and vertex operators, $e^{n(\phi_i+\delta_i)}$, for $n\in\IZ$. Since the $\phi_i+\delta_i$ operator has regular self-OPE, the exponential vertex operators form a commutative vertex algebra. This is a ``half-lattice vertex algebra'' in the sense of \cite{half_lattice}.

In fact, $\Pi_\gf$ is strongly generated by the operators
\begin{equation}
\label{eq:half_lattice_strong_generators}
	\gamma_i^{\pm 1} = e^{\pm(\phi_i+\delta_i)}~,\quad b_i = \frac{1}{2}\partial(\delta_i - \phi_i)~,
\end{equation}
whose only nonvanishing singular OPEs are given by
\begin{equation}
\label{eq:bgamma_ope}
	b_i(z)\gamma_j^{\pm 1}(w)\sim \frac{\pm\delta_{ij}\gamma_j^{\pm1}(w)}{z-w}~,
\end{equation}
and which also satisfy the suggested relation $(\gamma_i^{-1}\gamma_i^{+1})=1$. This presentation is the one relevant to description of this vertex algebra as CDOs on the multiplicative group $(\IC^{\times})^{\rk\,\gf}$, which itself can be thought of as the localisation of $\rk\,\gf$ many $\beta\gamma$ systems (\ie, CDOs on $\IC^{\rk\,\gf}$~\cite{Malikov:1999}). The lattice representation then arises via Friedan--Martinec--Shenker bosonisation of the $\beta\gamma$ system.

This presentation in terms of lattice bosons makes the following generalisation natural. We pass to a refined lattice with rational spacings, specifically with denominator $|Z(G)|$ for $G$ simply connected, and define the vertex subalgebra $\Pi_G$ to be strongly generated by
\begin{equation}
\label{eq:refined_generators}
	\widetilde{\gamma}^{\pm 1}_i = \prod_{i=1}^{\rk{\gf}} e^{\pm\tilde{a}_{ij}(\phi_j+\delta_j)}~,\quad b_i = \frac{1}{2} \partial(\delta_i - \phi_i) ~,
\end{equation}
where, as previously, the $\tilde{a}_{ij}$ are matrix elements of the inverse Cartan matrix of $\gf$. Note that the $b_i$ are unchanged. For $\gf=\slf_2$, this is precisely the lattice $\Pi_{\frac{1}{2}}$ appearing in the free field constructions of~\cite{Beem:2019tfp} (and previously in \cite{Adamovic2019}). The vertex algebra $\Pi_G$ admits a natural action of $Z(G)$ according to the prescription in~\eqref{eq:ZGactionKTG}, extended to be $\IC[\partial]$-linear.

Now suppose $G^{\prime}$ is a non-simply connected algebraic Lie group with the same Lie algebra as $G$. Then $G^{\prime} \cong Z^{\prime}(G)\backslash G$ for some central subgroup $Z^{\prime}(G)\subset Z(G)$. Let $\gamma_i'$ be the generators of the ring $\IC[\widetilde{\gamma}_i]^{Z^{\prime}(G)}$. Then we define a lattice vertex subalgebra $\Pi_{G^{\prime}}\subset\Pi_G$ that is strongly generated by the $b_i$ and the $Z(G)$ invariants, $(\gamma_i')^{\pm1}$.

There is an obvious inclusion of vertex algebras $\Pi_\gf\hookrightarrow \Pi_G$, given by
\begin{equation}
\label{eq:Z_action_on_lattice}
	\gamma_i^{\pm1} \mapsto \prod_{j=1}^{\rk\,\gf} (\widetilde{\gamma}_j^{\pm1})^{a_{ij}}~,\qquad b_i \mapsto b_i~.
\end{equation}
The image of $\Pi_\gf$ inside $\Pi_G$ is precisely the $Z(G)$ invariant subalgebra of $\Pi_G$---indeed, for the case that $G^{\prime}$ is the adjoint group, we have $\Pi_{G^{\prime}}=\Pi_\gf$.

Our vertex algebras $\Pi_G$ and the Kostant--Toda lattices are related in the sense that the former is a chiral quantisation of the latter. The relationship between vertex algebras and geometric spaces were explored by Arakawa--Moreau \cite{Arakawa:2012c2, Arakawa2017introduction, Arakawa_Moreau:2018, Arakawa_Moreau:2017157, Arakawa_Moreau:2018542}. We shall recall some of the relevant definitions now. For a more pedagogical introduction, we recommend the monograph \cite{Arakawa_Moreau:jet}.

For a scheme $X$ over $\IC$, we denote its arc space by $J_\infty X\coloneqq\mathrm{Hom}(\ID,X)$, where $\ID=\mathrm{Spec}~\IC[\![t]\!]$ is the formal disc. For $X$ Poisson, functions on the arc space $\OO(J_\infty X)$ has a canonical vertex Poisson algebra structure~\cite{Arakawa:2012c2}.

Every vertex algebra has a canonical decreasing filtration, $F^\bullet$, known as the Li filtration~\cite{Li:2005}. Specifically, for $V$ a vertex algebra with vacuum $|0\rangle$, we have
\begin{equation}
\label{eq:Li_filtration}
	F^iV = \bigg\{v_{1,(-n_1-1)}v_{2,(-n_2-1)}\dots v_{k,(-n_k-1)}|0\rangle\,\bigg|\, k\in \IN\,,~~ v_{1},v_2,\dots,v_k\in V\,,~~ \sum_{j=1}^{k} n_j \geqslant i\,\bigg\}~.
\end{equation}
$V$ is \textit{separated} if $\bigcap_{i=1} F^i V=0$ and if $V$ is a separated vertex algebra, then the vector space $R_V\coloneqq V/F^1(V)$ is a Poisson algebra, with multiplication arising from the normally ordered product and the Poisson bracket coming from simple poles in the OPE. This Poisson algebra is called the \textit{Zhu's $C_2$ algebra} of $V$, and $X_V\coloneqq \mathrm{Specm}\,R_V$ is the \emph{associated variety} of $V$. $\overline{X}_V\coloneqq\mathrm{Spec}\,R_V$ is then the \emph{associated scheme} of $V$.

Now let $X$ be an affine, Poisson scheme; a vertex algebra $V$ is said to be a \textit{chiral quantisation} of $X$~\cite{Arakawa2017introduction} if, as Poisson varieties,
\begin{equation}
\label{eq:chiral_quantisation}
	X_V \cong X_{red}~.
\end{equation}
It is a \textit{strict chiral quantisation} of $X$ if furthermore,
\begin{equation}
\label{eq:strict_chiral_quantisation}
	\overline{X}_V\cong X\qquad \mathrm{and}\qquad\mathrm{Spec}(\mathrm{gr}_F\,V) \cong J_\infty X~.
\end{equation}
A vertex algebra satisfying $\mathrm{Spec}~(\mathrm{gr}_F V) \cong J_\infty \mathrm{Spec}~\overline{X}_V$ is also known as \textit{classically free}.

The vertex algebra $\Pi_\gf$ is a strict chiral quantisation of $T^\ast\big(\IC^{\times}\big)^{\rk\,\gf}$, and given its definition in terms of strong generators, it is straightforward to see that $R_{\Pi_G}\cong \IC[b_i,\widetilde{\gamma}_j\,|\,i,j=1,\dots,\rk\,\gf]$ with the expected Poisson brackets. Therefore, $\overline{X}_{\Pi_G}$ is reduced and can be identified with $\KT_G$ as a symplectic variety. Moreover, $\Pi_G$ is classically free, so $\Pi_G$ is a strict chiral quantisation of $\KT_G$ for any $G$. Moreover, the inclusion $\Pi_\gf\hookrightarrow \Pi_G$ as the $Z(G)$ invariant subalgebra descends, at the level of associated schemes, to the covering map $\pi_G^\#: \OO(\KT_\gf)\hookrightarrow \OO(\KT_G)$.

Finally, let us briefly remark on the conformal structure of these vertex algebras. Chiral bosons admit a well-known family of conformal vectors, which in our case amount to a $\IC^{2\rk\,\gf}$ parameter space of conformal vectors of the form
\begin{equation}
\label{eq:lattconf}
	T = \sum_{i} \frac{1}{2}\bigg(  \partial \delta_i \partial \delta_i - \partial \phi_i\partial \phi_i + \alpha_i\partial^2 \phi_i + \beta_i \partial^2 \delta\bigg)~,
\end{equation}
where $\alpha_i,\beta_i\in\IC$ are background charges for $\phi_i$ and $\delta_i$, respectively. Tuning the values of $\alpha_i$ and $\beta_i$ changes the conformal grading on $\Pi_G$, and in particular changes the weight of the $\gamma_i$ operators. To match with the $\IC^\ast$ grading on $\KT_G$, we require the weights $\Delta_{\gamma_i} = 2$ and $\Delta_{b_i} = 1$, which is accomplished by setting $\alpha_i+\beta_i=-2$. With no further input, this leaves $\rk\,\gf$ many unconstrained background charges, one combination of which could further be constrained by demanding that the central charge match the expected value for the chiral universal centraliser,
\begin{equation}
\label{eq:sphere_central_charge}
	c_{\gf}=\rk\,\gf + 48 (\rho_\gf,\rho_\gf^\vee)~,
\end{equation}
where $\rho_\gf$ is the half-sum of positive roots and $\rho_\gf^\vee$ is the half-sum of positive coroots. For simply laced $\gf$, we can apply the strange formula of Freudenthal and de Vries to simplify this expression to
\begin{equation}
\label{eq:simply_laced_central_charge}
	c_{\gf}  =\rk\,\gf + 2 h_\gf^\vee\,\dim\,\gf~,
\end{equation}
where $h_\gf^\vee$ is the dual Coxeter number.

\subsection{\label{subsec:chiral_morphism}A chiral embedding and free field realisation}

The aim in this section is to identify a candidate for a free field realisation of the chiral universal centraliser by constructing a chiral version of the morphism $\varphi:\KT_G\rightarrow \Zf_G$. Recall that $\varphi$ decomposes into a map $\beta$ to the Slodowy slice $S_f$ and a map $\lambda$ to the centraliser in $G$ of $\beta(y)$. We will treat the chiral version in a similar matter.

The chiral analogue of $\IC[S_f]$ is the Feigin--Frenkel centre $\zf(\gf)$ \cite{Frenkel:2007}. Indeed, as $\mathbf{I}_G$ is defined as the two-sided Drinfel'd---Sokolov reduction of $\D_{G,\kappa_c}$, it has a subalgebra isomorphic to $\zf(\gf)$ (\cf\ \cite{Arakawa:2018egx,Beem:2022mde} for discussion of Feigin--Frenkel centres in chiral algebras of class $\SS$), so there is a map $\zf(\gf)\rightarrow \mathbf{I}_G$, the image of which should appear in our free field realisation as a Feigin--Frenkel center subalgebra in $\Pi_G$.

To construct this Feigin--Frenkel subalgebra, we specify an Ansatz by asserting a chiral version of the map $\beta$. This Ansatz will be a quantum corrected version of the expressions for the generators of $\IC[S_f]$, which arise from the ring map $\beta^\#: \IC[S_f]\rightarrow \IC[\KT_G]$.

We make the following observation: the morphism $\beta$ comes from the restriction of $\IC[\gf^*]^G$ to $\KT_\gf$, so in particular the functions on $\mathrm{im}\,\beta$ extend smoothly to the closure $\overline{\KT}_\gf= \chi+\mathfrak{t}^* + \bigoplus_{\alpha\in\Delta}\gf_\alpha$. Therefore, the expressions for the generators will only feature positively graded generators of $\IC[\KT_\gf]$. Furthermore, the classical map $\beta$ does not require passage to $\KT_G$, so the expressions for $\beta^\#(P_i)$ only feature positive integer powers of the $\gamma_i$ and $b_i$.

We can construct a chiral analogue of $\beta$ in the following manner. Let $V^{0}(\mathfrak{b})$ be the universal affine vertex algebra associated to the Borel subalgebra $\mathfrak{b}$ at level zero. This algebra contains a descending chain of vertex ideals, which are chiral analogues of the lower central series of $\mathfrak{n}$, given by
\begin{equation}
\label{eq:descending_vertex_chain}
	V^{0}(\mathfrak{b})\supset V(\nf) \supset V([\nf,\nf]) \supset V([\nf,[\nf,\nf]])\supset\dots\supset 0~,
\end{equation}
where the chain terminates in finitely many steps since $\nf$ is solvable. The quotient $V^{0}(\mathfrak{b})/V([\nf,\nf])$ is a chiral quantisation of $\overline{\KT}_\gf\cong (\mathfrak{b}/[\nf,\nf])^*$, and there is a corresponding inclusion $V^0(\mathfrak{b})/V([\nf,\nf])\hookrightarrow \Pi_\gf$, where we can view $\Pi_\gf$ as a localisation of $V^0(\mathfrak{b})/V([\nf,\nf])$.

Starting with $V^\kappa(\gf)$, the universal affine vertex algebra of $\gf$ at level $\kappa$, principal Drinfel'd--Sokolov reduction returns the (universal) principal $\WW$-algebra $\WW_{prin}(\gf,\kappa)$. This $\WW$-algebra can be realised as a subalgebra of the semi-infinite BRST complex that computes the reduction, with explicit expressions for the generators of this embedded subalgebra computed by the tic-tac-toe algorithm of \cite{deBoer:1993iz}. In particular, the generators of $\WW_{prin}(\gf,\kappa)$ can be expressed entirely in terms of the so-called \emph{hatted current subalgebra}, which itself is isomorphic to $V^{\kappa-\kappa_c}(\mathfrak{b})$---note the shift by the critical level $\kappa_c$. It was also observed in~\textit{loc.\ cit}.\  that passing to the quotient $V^{\kappa-\kappa_c}(\mathfrak{b})/V(\nf)\cong V^{\kappa-\kappa_c}(\mathfrak{t})$ produces the usual quantum Miura transformation, expressing generators of $\WW_{prin}(\gf,\kappa)$ in terms of $\rk\,\gf$ many Abelian currents.

At the critical level, the principal $\WW$-algebra is isomorphic to the Feigin--Frenkel centre, \ie, $\WW_{prin}(f,\kappa_c) \cong \zf(\gf)$. One therefore has an embedding $\zf(\gf)\hookrightarrow V^0(\mathfrak{b})$. Since $V([\nf,\nf])$ is a vertex ideal, this map can be extended to the quotient giving a map $\zf(\gf)\hookrightarrow V^0(\mathfrak{b})\twoheadrightarrow V^0(\mathfrak{b})/V([\nf,\nf])$, which gives rise to a commutative vertex subalgebra---that is isomorphic to the Feigin--Frenkel centre---contained in $V^0(\mathfrak{b})/V([\nf,\nf])$. Our candidate chiralisation of $\beta$ is then given by the further composition
\begin{equation}
\label{eq:chiral_beta_map}
	\beta^{ch}: \zf(\gf)\hookrightarrow V^0(\mathfrak{b})/V([\nf,\nf])\hookrightarrow \Pi_\gf~.
\end{equation}
Concretely, $\beta^{ch}(P_i)= (P_i^{t})|_{\mathfrak{b}/[\nf,\nf]}$, where $P_i^{t}$ are the tic-tac-toed expressions for the generators of $\WW_{prin}(\gf,\kappa_c)$, as detailed in \cite{deBoer:1993iz} and $|_{\mathfrak{b}/[\nf,\nf]}$ denotes the projection $V^0(\mathfrak{b})\twoheadrightarrow V^0(\mathfrak{b})/V([\nf,\nf])$. In a sense, $\beta^{ch}$ is a deformed version of the Miura transformation. Of course, this construction also applies away from the critical level, giving rise to an embedding $\beta^{ch}:\WW_{prin}(\gf,\kappa)\hookrightarrow V^{\kappa-\kappa_c}(\mathfrak{b})$.

\begin{rem}\label{rem:deformed_Miura}

The above construction is actually just one in a series of embedded realisations of principal $\WW$-algebras. Let $\mathfrak{n}_i$ be the lower central series of $\nf$, \ie,
\begin{equation}
\label{eq:lower_central_series}
	\nf_j = \underbrace{[\nf,[\nf,[\nf,[\dots,\nf]]\dots]]}_{j}~,
\end{equation}
for $j=1,\dots,d_\gf$, where $\nf_{d_\gf}$ is the last non-zero term. There is a system of embeddings $\beta^{ch}_{j}:\WW_{prin}(\gf,\kappa)\hookrightarrow V^{\kappa-\kappa_c}(\mathfrak{b})/ V(\nf_k)$ for $j=1,\dots,d_\gf+1$, which maps generators according to
\begin{equation}
\label{eq:generalised_Miura_maps}
	\beta^{ch}_j(P_i) = P_i^{t}|_{\mathfrak{b}/\nf_k}~,
\end{equation}
where $|_{\mathfrak{b}/\nf_j}$ is the projection $V^0(\mathfrak{b})\twoheadrightarrow V^0(\mathfrak{b})/V(\nf_j)$. The usual Miura transform is given by $\beta^{ch}_1$, our embedding corresponds to $\beta^{ch}_2$, and $\beta^{ch}_{d_\gf+1}$ is the usual tic-tac-toe embedding from \cite{deBoer:1993iz}.

It would be interesting to know whether these generalised Miura maps have a more geometric interpretation, \textit{viz.}, the relation of the Miura map to opers in the critical case.

\end{rem}

In practice, an alternative strategy for calculating these generators would be to write down the most general correction to the classical expression, $\beta^\#(P_i)$, and then fix free coefficients by imposing that all OPEs amongst the generators are regular. In principle, one might worry that the ``most general correction'' may have an infinite number of unfixed parameters, because the vertex algebra $\mathbf{I}_{G}$ is not conical. However, based on our discussion of $\beta^{ch}$ we should only look for corrections which contain positive powers of the $(\gamma_i)_{i=1}^{\rk\,\gf}$---of which there are only finitely many. In the examples in the following sections, the expressions that one arrives at by imposing these regularity constraints are precisely the ones obtained by actually computing $\beta^{ch}$ with the tic-tac-toe algorithm.

With a realisation of the Feigin--Frenkel centre established, we turn to the additional generators of $\mathbf{I}_G$. We shall proceed by defining a free-field expression for $X$ that arises from the finite setting in \eqref{eq:finite_X_def}. Suppose that $G$ is not simply-connected of type $D_{2N}$, then $V_F$ is irreducible. Let $(a_1,a_2,\dots_,a_{\rk\,\gf})$ be the components of the highest weight of $V_F$ in the simple root basis. We then define
\begin{equation}
\label{eq:chiral_X_def}
	X \coloneqq \prod_{j=1}^{\rk\,\gf} \gamma_j^{-a_{j}}~.
\end{equation}
Similarly, if $G$ is ${\rm Spin}(2N)$ we define $X_{\rm s}$ and $X_{\rm c}$ analogously to \eqref{eq:chiral_X_def} using the highest weights of $V_{\rm s}$ and $V_{\rm c}$ respectively. We now formulate the chiral version of Conjecture \ref{conj:poisson_generation}:
\begin{conj}
\label{eq:chiral_generator_conjecture}
	Suppose $G$ is not $\mathrm{Spin}(2N)$. Then the vertex algebra $\mathbf{I}_G$ is isomorphic to the vertex subalgebra of $\Pi_{G}$ that is weakly generated by the generators $(\beta^{ch}(P_i))_{i=1}^{\rk\,\gf})$ of the Feigin--Frenkel centre and by the generator $X$.

	If $G$ is $\mathrm{Spin}(2N)$, then $\mathbf{I}_G$ is isomorphic to the vertex subalgebra of $\Pi_G$ that is weakly generated by the generators $(\beta^{ch}(P_i))_{i=1}^{\rk\,\gf})$ of the Feigin--Frenkel centre and by the generators $X_{\rm s}$ and $X_{\rm c}$.
\end{conj}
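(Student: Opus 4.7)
The plan is to establish the isomorphism by constructing a vertex algebra homomorphism $\Phi: \mathbf{I}_G \to \Pi_G$ and identifying its image with the weakly generated sub-VOA $\mathbf{V}$ described in the conjecture. I would build $\Phi$ as a chiralisation of the morphism $\varphi^\#:\IC[\Zf_G]\hookrightarrow\IC[\KT_G]$ dual to $\varphi$ of Proposition~\ref{prop:phi_open_immersion}, carried out at the level of the two-sided DS reduction defining $\mathbf{I}_G$. Concretely, one produces a ``lattice realisation'' of $\D_{G,\kappa_c}$ by localising an extension of $\Pi_G$ along the chiralisation of the big Bruhat cell $G^*=N\widetilde{w}_0 T N$, mirroring how $\Pi_G$ chirally quantises $\KT_G\cong T^*(\IC^\times)^{\rk\,\gf}$. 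The semi-infinite cohomology computing the two-sided reduction should then descend to a homomorphism onto the subalgebra of $\Pi_G$ quantising $\Zf_G^*=\mathrm{im}\,\varphi$.

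To check that $\Phi$ realises the Feigin--Frenkel centre $\zf(\gf)\subset\mathbf{I}_G$ through the map $\beta^{ch}$ of \eqref{eq:chiral_beta_map}, one invokes the tic-tac-toe construction of \cite{deBoer:1993iz}: the generators $\beta^{ch}(P_i)$ are precisely those obtained by projecting the tic-tac-toed generators of $\WW_{prin}(\gf,\kappa_c)$ onto $V^0(\mathfrak{b})/V([\nf,\nf])\hookrightarrow\Pi_G$, and this compatibility is built into the cascade of DS reductions. The additional generator $X$ (respectively $X_{\rm s},X_{\rm c}$) arises as the image of the chiral lift of the function on $\Zf_G$ dual to the highest-weight vector of $V_F\otimes V_F^*$ in the Peter--Weyl decomposition, which is $N_L\times N_R$-invariant and therefore survives the two-sided reduction unmodified (up to quantum corrections).

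Injectivity of $\Phi$ would then be established via an associated variety argument. One has $X_{\mathbf{I}_G}\cong\Zf_G$ (as expected from the Higgs branch / associated variety correspondence for class $\SS$ theories), and by construction the induced map on associated varieties is $\varphi^\#$, which is injective since $\varphi$ is an open immersion. Combined with separatedness of $\mathbf{I}_G$---which follows from its construction as DS reduction of a classically free VOA---injectivity at the associated graded level lifts to injectivity of $\Phi$ itself.

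The main obstacle, and the reason the statement remains conjectural, is proving that $\Phi(\mathbf{I}_G)$ coincides with the \emph{full} weakly generated subalgebra $\mathbf{V}$. This amounts to a chiral analogue of Theorem~\ref{thm:Poisson_generation} and Conjecture~\ref{conj:poisson_generation}, namely that iterated OPEs of $\beta^{ch}(P_i)$ with $X$ (or with $X_{\rm s},X_{\rm c}$) generate the associated graded Poisson vertex algebra corresponding to $J_\infty\Zf_G$. For $G={\rm SL}_N$ the finite-dimensional Poisson generation statement is available via Theorem~\ref{thm:Poisson_generation}, but upgrading it to weak generation at the level of jet schemes requires controlling all higher modes and verifying non-degeneracy of the iterated chiral bracket structure. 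The ${\rm SL}_2$ and ${\rm SL}_3$ checks in the paper strongly suggest this works, but a general argument likely requires either a uniform group-theoretic input exploiting the Peter--Weyl structure of $V_F\otimes V_F^*$, or a case-by-case analysis guided by the classification of simple Lie algebras.
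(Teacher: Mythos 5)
This statement is a \emph{conjecture} in the paper: the authors explicitly decline to prove it, remarking that a proof should eventually come from a (micro-)localisation theory for $\mathbf{I}_G$ over its associated variety, and they content themselves with verifying the claim for $G={\rm SL}_2$ (where a direct comparison with the two-sided Drinfel'd--Sokolov reduction is tractable) and with consistency checks for $G={\rm SL}_3$. So there is no proof in the paper against which to measure yours, and what you have written is, by your own admission, a programme rather than a proof. As a programme it is sensible and broadly aligned with the authors' own stated outlook (the screening-charge discussion in the concluding section sketches essentially your first step: the Cousin--Grothendieck resolution supported on the Bruhat stratification, reduced on both sides, puts $\D_{T}(T)\cong\Pi_G$ in degree zero with $\mathbf{I}_G=\ker\delta_0$). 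You also correctly isolate the decisive gap, namely the chiral analogue of Theorem~\ref{thm:Poisson_generation}: that iterated OPEs of the $\beta^{ch}(P_i)$ with $X$ exhaust the image.

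However, be aware that the earlier steps of your outline are not innocuous either. The existence of the homomorphism $\Phi:\mathbf{I}_G\to\Pi_G$ ``by localising along the big cell'' is precisely what the paper defers to future work; no explicit screening charge or localisation functor is available, so $\Phi$ cannot simply be asserted. Your injectivity argument presupposes both $X_{\mathbf{I}_G}\cong\Zf_G$ (expected from the Higgs-branch philosophy but not established for general $G$) and that injectivity of $\varphi^\#$ on the associated graded lifts to injectivity of $\Phi$; the latter requires strict compatibility of $\Phi$ with the Li filtrations, or classical freeness of $\mathbf{I}_G$, neither of which is known. Finally, even granting $\Phi$, identifying $\Phi(\zf(\gf))$ with the image of $\beta^{ch}$ of \eqref{eq:chiral_beta_map} is a nontrivial compatibility between the tic-tac-toe realisation and the two-sided reduction, not an automatic consequence of it. None of this makes your approach wrong --- it is a reasonable roadmap --- but each of these steps is an open problem in its own right, so the statement remains a conjecture both in the paper and after your proposal.
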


Again inspired by the finite dimensional setting we have beliefs on the set of strong generators of $\mathbf{I}_G$.

\begin{conj}
Suppose $G$ is not $\mathrm{Spin}(2N)$. Then $\mathbf{I}_G$ is strongly generated by the generators $(P_i)_{i=1}^{\rk\,\gf}$ of the Feigin--Frenkel centre and generators $(\hat{g}_{i})_{i=1}^{\dim V_F}$ in one-to-one correspondence with the weight vectors of $ V_F^*$.

If $G$ is $\mathrm{Spin}(2N)$, then $\mathbf{I}_G$ is strongly generated by by the generators $(P_i)_{i=1}^{\rk\,\gf}$ of the Feigin--Frenkel centre and generators, $(\hat{g}_{{\rm s}i})_{i=1}^{\dim\,V_{\rm s}}$ and $(\hat{g}_{{\rm c}i})_{i=1}^{\dim\, V_{\rm c}}$ in one-to-one correspondence with the weight vectors of $ V_{\rm s}^*$ and $ V_{\rm c}^*$ respectively.
\end{conj}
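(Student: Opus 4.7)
The plan is to reduce this strong-generation claim to the (conjectured) ring-level generation of $I_G$ from Section~\ref{subsec:sln_poisson_structure}, via the general correspondence between strong generators of a vertex algebra and ring generators of its Zhu $C_2$-algebra. The crucial input is that $\mathbf{I}_G$ should be a strict chiral quantisation of $\Zf_G$ in the sense recalled in Section~\ref{subsec:chiral_KT_lattice}; concretely, this means that $R_{\mathbf{I}_G}\cong I_G$ as Poisson algebras and that $\mathbf{I}_G$ is classically free, so $\mathrm{gr}_F\mathbf{I}_G\cong\IC[J_\infty\Zf_G]$ as Poisson vertex algebras. Granting these two properties, any set of ring generators of $I_G$ lifts to a strong generating set of $\mathbf{I}_G$ by a standard Li-filtration argument: pick preimages $\hat{g}_i\in F^0\mathbf{I}_G$ of the classical generators $g_i\in I_G \cong R_{\mathbf{I}_G}=F^0/F^1$; then the principal symbols of the $\hat{g}_i$, together with $\partial$-derivatives, generate $\mathrm{gr}_F\mathbf{I}_G\cong\IC[J_\infty\Zf_G]$, since jet generators of the arc space are simply the jets of ring generators of the base. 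Hence the $\hat{g}_i$ themselves strongly generate $\mathbf{I}_G$.

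First I would verify the strict chiral quantisation property of the vertex subalgebra of $\Pi_G$ proposed in Conjecture~\ref{eq:chiral_generator_conjecture}---equivalently Arakawa's $\mathbf{I}_G$, given that conjecture. For $G=\mathrm{SL}_N$ this amounts to a direct computation of the Zhu $C_2$-algebra of the sub-VOA weakly generated by $\beta^{ch}(P_i)$ and $X$ inside $\Pi_{\mathrm{SL}_N}$, matching it to $I_{\mathrm{SL}_N}$ (using Proposition~\ref{prop:ring_generators_sln}) and establishing reducedness; classical freeness can then be extracted from a Hilbert-series/character match against $\IC[J_\infty\Zf_G]$. A more structural approach, viable more generally, is to propagate classical freeness through the two-sided Drinfel'd--Sokolov reduction defining $\mathbf{I}_G$ using the known classical freeness of $\D_{G,\kappa_c}$ and vanishing of higher BRST cohomology at the critical level.

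Second, I would invoke the preceding algebraic generation Conjecture in Section~\ref{subsec:sln_poisson_structure}: for $G$ not simply connected of type $D_{2N}$, the ring $I_G$ is generated by the fundamental invariants $(P_i)_{i=1}^{\rk\,\gf}$ together with coordinates $(g_i)_{i=1}^{\dim V_F}$ in bijection with the weight vectors of $V_F^*$ (with the evident doubled version for $\mathrm{Spin}(2N)$). Combined with the lifting principle above, this yields the claimed strong generating set. Concretely, the chiral lifts $\beta^{ch}(P_i)$ are given by \eqref{eq:chiral_beta_map}, and the $\hat{g}_i$ are produced from iterated OPE contractions of $X$ (as defined in \eqref{eq:chiral_X_def}) against the $\beta^{ch}(P_i)$---chiralising the observation that, classically, the full set $g_i$ is generated from $X$ by Poisson brackets with the Harish-Chandra centre.

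The main obstacle is twofold. First, strictness---as opposed to just chirality---of the quantisation is delicate: one must match $\overline{X}_{\mathbf{I}_G}$ with $\Zf_G$ on the nose (not merely up to reduction) and then control the Li filtration to deduce classical freeness, and neither of these follows from the weak-generation statement alone. Second, the argument is only as strong as the preceding ring-generation Conjecture, which itself is presently established only for $G=\mathrm{SL}_N$. A realistic first milestone is therefore the $\mathrm{SL}_N$ case, where the classical input is Proposition~\ref{prop:ring_generators_sln} and the quantum side admits an explicit free field realisation inside $\Pi_{{\rm SL}_N}$, permitting a hands-on verification of classical freeness and of the claimed strong generating set---as already carried out for $\mathrm{SL}_2$ and $\mathrm{SL}_3$ elsewhere in this paper.
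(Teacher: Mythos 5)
You should first note that the paper itself offers no proof of this statement: it is posed as a conjecture, and the authors explicitly say they ``will not attempt a proof,'' pointing instead to a hoped-for (micro)localisation of $\mathbf{I}_G$ over its associated variety and contenting themselves with the explicit checks for ${\rm SL}_2$ and ${\rm SL}_3$. So there is no proof in the paper to compare against; what can be assessed is whether your reduction strategy is sound. In outline it is the natural one, and you correctly isolate the two inputs it requires: (i) the identification $R_{\mathbf{I}_G}\cong I_G$ as rings (strictness of the quantisation), and (ii) the classical ring-generation statement for $I_G$, which for $G={\rm SL}_N$ is Proposition~\ref{prop:ring_generators_sln} but for general $G$ is itself only conjectured in Section~\ref{subsec:sln_poisson_structure}. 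One small simplification: classical freeness is not actually needed for the direction you use --- the lemma ``ring generators of $R_V$ lift to strong generators of $V$'' only requires control of the Li filtration, not that $\mathrm{gr}_F V\cong\IC[J_\infty \overline{X}_V]$.

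There are, however, two genuine gaps beyond the conjectural inputs. First, the ``standard Li-filtration argument'' is standard only for vertex algebras whose conformal grading is bounded below (or more precisely for which the Li filtration is separated in each weight space); the paper stresses that $\mathbf{I}_G$ is \emph{not} conical --- the generator $X$ has negative conformal weight and its normally ordered powers have unboundedly negative weight --- so separatedness of the Li filtration for the relevant subalgebra of $\Pi_G$ must be argued, not assumed. Second, strictness is more delicate than a character match: the discussion of the ``more universal centraliser'' $\widetilde{\mathbf{I}}_{{\rm SL}_2}$ in Section~\ref{subsec:interlude} shows that the same generators and OPEs support a strictly larger vertex algebra in which the chiral determinant $\hat{D}_{{\rm SL}_2}-\id$ is a nonzero central element; identifying $R_{\mathbf{I}_G}$ with $I_G$ on the nose therefore requires showing that the null ideal of the free-field subalgebra is exactly the chiralised determinantal ideal and nothing more, which the paper establishes only in the rank-one and rank-two examples. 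With those caveats your proposal is a reasonable programme for a proof, consistent with the paper's own suggestion that the conjecture should ultimately be settled by geometric (localisation) methods, but it is a reduction to two open statements rather than a proof.
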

Unlike the finite dimensional setting, we will not attempt a proof of these conjectures. This should potentially be dealt with in terms of the (micro-)localisation of $\mathbf{I}_G$ over its associated variety. Such a localisation theory is still in its early stages (see, \eg,~\cite{Arakawa:2015loc,Kuwabara:2021hyp}) but we hope that a geometric strategy will prove fruitful. For our purposes here, we will be content to consider examples for the low rank cases of ${\rm SL}_2$ and ${\rm SL}_3$, and leave more extensive analysis of examples to future investigation.

It is known~\cite{Arakawa:2018egx} that $\mathbf{I}_G$ is conformal---with critical central charge given by
\begin{equation}\label{eq:univ_cent_charge}
	c_{\mathbf{I}_G} = 2\rk\,\gf + (\rho,\rho^\vee)	~,
\end{equation}
where $\rho$ is the half sum of positive roots of $\gf$---\ie, the Weyl vector---and $\rho^\vee$ is the half sum of positive coroots. After imposing the $\rk\,\gf$ many constraints on~\eqref{eq:lattconf}, which fixed $\Delta_{\gamma_i} = 2$ we are left with $\rk\,\gf$ many unfixed background charges. These are fixed by requiring that the quadratic Feigin--Frenkel generator, $P_1$ is quasiprimary. The residue $T_{(2)}P_1$ is linear in the background charges and setting it to zero produces $2\rk\,\gf$ many linear equations for the background charges. However, these equations are not linearly independent. The expression for $P_1$ is unchanged under the symmetry $\partial \phi\mapsto -\partial \delta$ and $\partial \delta\mapsto \partial \phi$. Thus, the swaps $\alpha_i\mapsto - \beta_i$ and $\beta_i\mapsto - \alpha_i$ leaves the system of linear equations invariant. This symmetry cuts the system down to $\rk\,\gf$ many linear equations for $\rk\,\gf$ many unknowns. In the ${\rm SL}_N$ case it is schematically clear that these equations are independent, leading to a unique assignment of background charges. For other choices of $G$ we expect the same argument to hold, but we have not attempted a careful analysis. This assignment of background charges, in the ${\rm SL}_N$ case, fixes the central charge to agree with~\eqref{eq:univ_cent_charge} without additional input, and the general $G$ case should behave similarly.

The expression~\eqref{eq:lattconf}, with the correctly chosen values of $\alpha_i,\beta_i$, gives a free field realisation of the conformal vector of $\mathbf{I}_G$. On general grounds, we do not expect this to be a strong generator of $\mathbf{I}_G$, but rather it should arise as a composite of the other strong generators. We will verify this explicitly in the case $G=SL_2$ below.

\subsection{\label{subsec:sl2_ffr}Free field realisation for \texorpdfstring{$\mathbf{I}_{{\rm SL}_2}$}{I(SL2)}}

Let us work through the ${\rm SL}_2$ case as an illustrative example. The chiralisation of $\KT_G$ is $\Pi_{{\rm SL}_2}$, which is strongly generated by
\begin{equation}
	\gamma^{\pm\frac{1}{2}}\equiv e^{\pm \frac{1}{2}(\phi+\delta)}~,\quad b \equiv \frac{1}{2}\partial(\delta-\phi)~.
\end{equation}
We first find the expression for $S=\beta^{ch}(P_1)$, the generator of the Feigin--Frenkel centre. Our prescription above yields (\cf\ \cite[Eqn. (3.4)]{deBoer:1993iz}, with some change of conventions),
\begin{equation}
	S = bb+\gamma+\partial b~.
\end{equation}
We note here that rather than using this deformed Miura map, we could adopt a more bootstrap-inspired approach. The non-chiral image $\beta(P_1)$ is generated by $S = b^2 + \gamma$, and we should allow for quantum corrections as the self-OPE of $bb+\gamma$ is not regular. The most general corrected Ansatz is
\begin{equation}\label{eq:sl2-ff-generator}
	S = bb+\gamma + \alpha\partial b~,
\end{equation}
for some $\alpha\in \IC$. The normalisation is such that the Fourier mode $S_{(1)}$ acts via the generalised eigenvalue $\frac{1}{4}\lambda(2\lambda+1)$ on the Weyl module $\IV_{\lambda}$. Imposing regularity of the $S\times S$ OPE fixes $\alpha=1$ and so fixes the corrected $S$ to be precisely of the form returned by the deformed Miura map.

We expect only one additional strong generator, corresponding to the function $Y$ on $\Zf_{{\rm SL}_2}$. This operator appears in the simple pole in the $S\times X$ OPE, which is the chiral version of the extraction of the non-chiral $Y$ from the Poisson bracket $\{S,X\}$. Indeed, defining $Y$ as such, the OPEs close and we find a strongly generated vertex algebra with the following generators,
\begin{equation}
\begin{split}
	S &= bb+\gamma+\partial b~,\\
	X &= \gamma^{-\frac{1}{2}}~,\\
	Y &= -b\gamma^{-\frac{1}{2}}~.
\end{split}
\end{equation}
Comparing with \eqref{eq:sl2_finite_free_field_XY}, the expression for $Y$ (and $X$) in terms of $\Pi_G$ fields matches precisely the expression in terms of $\KT_G$ previously with no quantum corrections, but this is probably not particularly meaningful. Indeed, $Y$ defined here is not quasiprimary, but rather $Y+\frac34 X'$ is instead, and if we worked in terms of that quasiprimary instead there would indeed be quantum corrections.

The OPEs between these strong generators are given by
\begin{equation}\label{eq:univ_cen_sl2_OPE}
\begin{split}
	S(z)X(w) &\sim \frac{\frac{3}{4}X(w)}{(z-w)^2} + \frac{Y(w)}{z-w}~,\\
	S(z)Y(w) &\sim \frac{\frac{3}{4}Y(w)}{(z-w)^2} + \frac{(SX)(w)}{z-w}~,\\
	Y(z)X(w) &\sim \frac{\frac{1}{2}(XX)(w)}{z-w}~,\\
	Y(z)Y(w) &\sim \frac{-\frac{1}{4}(XX)(w)}{(z-w)^2}-\frac{\frac{1}{4}(X\partial X)(w)}{z-w}~.
\end{split}
\end{equation}
The determinantal relation $D_{{\rm SL}_2} = SX^2 - Y^2 \overset{!}{=} 1$ is chiralised here to the relation
\begin{equation}\label{eq:det_chiral_sl2}
	\hat{D}_{{\rm SL}_2} = SXX  - YY -\tfrac{3}{2} \partial Y X+ \tfrac{3}{2}Y\partial X- \tfrac{3}{4}\partial X \partial X - \tfrac{1}{8}X\partial^2 X\overset{!}{=} \id~.
\end{equation}
The generators $S$, $X$, and $Y$, together with the OPEs and relation above define a genuine strongly generated vertex algebra, which is a vertex subalgebra of $\Pi_{{\rm SL}_2}$.

As expected, the free-field conformal vector (with background charges $\alpha = \frac32$ and $\beta=-\frac12$) can be expressed as the following composite operator in terms of the strong generators,
\begin{equation}
\begin{split}
	T_0 = 2S(\partial Y X - Y\partial X) +3S\partial X \partial X+\tfrac{1}{2} Y\partial^3 X
	- \partial S YX +\tfrac{5}{2}\partial S \partial X X - 2 \partial Y \partial Y&\\
	+ \tfrac{3}{2}\partial Y \partial^2 X - \tfrac{3}{4}\partial^2 X \partial^2 X -\tfrac{3}{2} \partial^2 Y \partial X
	-\tfrac{7}{12} \partial^3 X \partial X -\tfrac{1}{2}\partial^3 Y X -\tfrac{1}{24}\partial^4 X X&~.
\end{split}
\end{equation}
The central charge is $c=26$, which agrees with~\eqref{eq:univ_cent_charge}.

The equivariant affine $\WW$-algebra for $\gf=\slf_2$ was presented in terms of strong generators and their OPEs in~\cite{Arakawa:2018egx}. By performing the Drinfel'd--Sokolov reduction, which is still tractable in this case, one can show that the the resulting vertex algebra is isomorphic to our realisation. We will not review this computation here.

\subsection{\label{subsec:interlude}Interlude---a more universal centraliser}

It is interesting to observe that the singular OPEs in \eqref{eq:univ_cen_sl2_OPE} define, abstractly, a strongly generated vertex algebra that we denote $\widetilde{\mathbf{I}}_{{\rm SL}_2}$ in which the vertex ideal generated by $\hat{D}_{{\rm SL}_2}-\id$ is not quotiented out. That is to say, the OPEs satisfy the Jacobi identities without removing any null states. The (now nonvanishing) operator $\hat{D}_{{\rm SL}_2}$ is not only null, but is actually central in $\widetilde{\mathbf{I}}_{{\rm SL}_2}$. Consequently, there is a one parameter family of analogous chiral universal centralisers given by the quotient of $\widetilde{\mathbf{I}}_{{\rm SL}_2}$ by the ideal $\hat{D}_{{\rm SL}_2} - \xi \id$ for $\xi\in \IC$. For $\xi\neq 0$, these quotients are all isomorphic to $\mathbf{I}_{{\rm SL}_2}$, with the isomorphism realised by rescaling the strong generators $X$ and $Y$. However, the centreless specialisation $\hat{D}_{{\rm SL}_2}=0$ is not isomorphic to $\mathbf{I}_{{\rm SL}_2}$. This centreless specialisation seems to be something of an oddity; in particular it is not a chiral quantisation of $\Zf_{{\rm SL}_2}$ since the determinant is set to zero.

The conformal vector $T_0$ does not lift to one in $\widetilde{\mathbf{I}}_{{\rm SL}_2}$. For example, the cubic pole in the $T_0 \times T_0$ OPE is proportional to $\partial\hat{D}_{{\rm SL}_2}$ (and the other poles are also polluted by the appearence of $\partial\hat{D}_{{\rm SL}_2}$ and its derivatives). The Virasoro OPE only holds in the quotient. Indeed, $\widetilde{\mathbf{I}}_{{\rm SL}_2}$ does not have a conformal vector. Any candidate conformal vector must be built out of the $S,X,Y$, and the OPEs amongst these generators are homogeneous in the number of $X$ and $Y$ fields appearing (\ie, there is an outer $U(1)$ symmetry of this vertex algebra under which $X$ and $Y$ have charge one and $S$ has charge zero). Charge conservation then prevents any candidate conformal vector from realising the Virasoro OPE.

This outer $U(1)$ action leads to a family of gradings on $\widetilde{\mathbf{I}}_{{\rm SL}_2}$---the OPEs are compatible with any grading that assign $|S| = 2$, $|X| = \Delta$ and $|Y|=\Delta+1$ for any $\Delta\in \IZ$. The simple quotient $\hat{D}_{{\rm SL}_2}- \xi \id$, for $\xi\neq 0$, violates this grading for all choices except $\Delta=-1$. The centreless specialisation, $\xi=0$, however, is compatible with this grading for any $\Delta\in\IZ$.

From a geometric point of view, this one parameter family arises from a closed embedding of ${\rm SL}_N$ into the ambient space of ${\rm GL}_N$. The cotangent bundle $T^\ast{\rm GL}_N\cong \mathfrak{gl}_N^*\times {\rm GL}_N$ has a splitting
\begin{equation}\label{eq:gln_split}
	T^*{\rm GL}_N \cong \mathfrak{gl}_N^*\times {\rm GL}_N\cong (\slf_N^*\times \mathfrak{gl}^*_1)\times {\rm GL}_1 \times {\rm SL}_N\cong (T^*\IC^\times)\times (T^*{\rm SL}_N)~,
\end{equation}
arising from the splitting $\mathfrak{gl}_N\cong \mathfrak{gl}_1 \oplus \slf_N$ and ${\rm GL}_N\cong {\rm SL}_N \rtimes {\rm GL}_1$. Consider the Poisson subalgebra of $\IC[T^*{\rm GL}_N]$ generated by the $\IC[\slf_N^*]\otimes \IC[{\rm GL}_N]$ factors in the splitting. We denote its spectrum by $\widetilde{T}^*{\rm SL}_N$. As varieties, we have that $\widetilde{T}^*{\rm SL}_N \cong {{\rm GL}_1} \times  T^*{\rm SL}_N\cong \mathfrak{sl}_N^*\times {\rm GL}_N$.

This thickened space inherits Hamiltonian left and right action of ${\rm SL}_N$, and we can perform the corresponding two-sided Kostant--Whittaker reduction of this ambient space. The $ {\rm GL}_1$ factor of the splitting is $N_L\times N_R$ invariant, and lies in the (intersection) $\overline{\mu}_L^{-1}(\chi)\cap\overline{\mu}_R^{-1}(\chi)$. Therefore, the two-sided reduction will be isomorphic to $\IC^\times\times \Zf_{{\rm SL}_N}$, and we denote this space by $\widetilde{\Zf}_{{\rm SL}_N}$. It is naturally a scheme over $S_f$ by extending the structure morphism of $\Zf_{{\rm SL}_N}$. The cotangent bundle $T^*{\rm SL}_N$ is a closed, Poisson subvariety of $\widetilde{T}^* {\rm SL}_N$, so by functoriality of reduction, $\Zf_{{\rm SL}_N}$ is a closed, Poisson, subvariety of $\widetilde{\Zf}_{{\rm SL}_N}$.

There exist a family of $\IC^\times$ actions on $T^*{\rm GL}_N$, which acts as uniform rescaling of the vector space $\mathfrak{gl}_N^*$ and as left multiplication by a constant diagonal matrix on ${\rm GL}_N$. Under these $\IC^\times$ actions, the linear functions on $\mathfrak{gl}_N^*$ can be assigned unit weight, in which case the Poisson bracket has degree $-1$, while the weights of the matrix entries, $g_{ij}$, which generate $\IC[{\rm GL}_N]$ are unfixed. Any assignment $|g_{ij}|= \Delta$ for all $i,j=1,\dots,N$ is compatible with the Poisson structure. The fundamental invariants of $\slf_N$, $(P_{i})_{i=1}^{N}$ will have weight $|P_i| = i+1$ for $i=1,\dots, N$,~\ie, the weight is equal to their degree. The thickened space $\widetilde{T}^*{\rm SL}_N$ inherits this grading.

Similarly, the fibres of $\widetilde{\Zf}_{{\rm SL}_N}$ will inherit the grading on $\widetilde{T}^*{\rm SL}_N$. Under Kostant--Whittaker reduction the $\IC^\times$ weights are shifted by the weight associated to the principal $\slf_2$-embedding. Consider, specifically, the bottow row $g_{Nj}$ whose weights under the $\IC^\times$ action must satisfy $|g_{Nj}| = |g_{N1}| + (j-1)$.  Denoting $g_{N1}$ by $X$, as is our convention, and denoting its weight by $|X|= \Delta$, any grading where $|g_{Nj}| = \Delta + (j-1)$ is compatible with the Poisson bracket.

Picking out a closed subvariety by imposing the determinantal relation $D_{{\rm GL}_N}\overset{!}{=}1$ is incompatible with this grading---unless $\Delta = 1-N$. Note that any such relation $D_{{\rm GL}_N}\overset{!}{=} \xi$, for $\xi\in \IC^\times$ forces $\Delta = 1-N$. The closed subvariety satisfying $D_{{\rm GL}_N}=1$ gives a closed embedding $\Zf_{{\rm SL}_N}\hookrightarrow \widetilde{\Zf}_{{\rm SL}_N}$. In fact, any subvariety $D_{{\rm GL}_N} = \xi$ gives a closed embedding of $\Zf_{\rm SL_N}$ inside $\widetilde{\Zf}_{{\rm SL}_N}$ as schemes over $S_f$.

The thickened space, $\widetilde{T}^*{\rm SL}_N$ has a closure isomorphic to $\IC \times {\rm SL}_N$. Performing Kostant--Whittaker reduction on this space gives a closure of $\widetilde{\Zf}_{{\rm SL}_N}$. In this closure, we have an additional subvariety with vanishing ideal $D_{{\rm GL}_N}$. This subvariety of determinant zero matrices is not isomorphic to $\Zf_{{\rm SL}_N}$ and seems fairly exotic.

The geometric construction above reflects many of the properties we have seen in $\widetilde{\mathbf I}_{{\rm SL}_2}$, namely the auxiliary grading and one-parameter family of quotients. Let us present a schematic chiral version of this construction that should explain the existence of this more general vertex algebra.

Let $\D_{{\rm GL}_N,-N}$ be CDOs on ${\rm GL}_N$ at level $-N$; these admit a splitting  $\D_{{\rm GL}_N,-N} \cong \D(\IC^\times)\otimes \D_{{\rm SL}_N,-N}$. This is the chiral version of the splitting \eqref{eq:gln_split}. Functions on $T^*\IC^\times$ are generated by $D_{{\rm GL}_N}^{\pm}$ and the fibre coordinate $p$, with Poisson bracket $\{p, D_{{\rm GL}_N}^\pm\} = \pm D_{{\rm GL}_N}^\pm$. Abusing notation, the CDOs on $\IC^\times$ are strongly generated by $D_{{\rm GL}_N}^\pm$ and $p$ with OPE
\begin{equation}
	p(z)D_{{\rm GL}_N}(w)^\pm \sim \frac{D_{{\rm GL}_N}^\pm(w)}{z-w}~.
\end{equation}
Let $\overline{\D}_{SL_N,-N}\subset \D_{{\rm GL}_N,-N}$ be the vertex subalgebra corresponding to $\mathrm{ker}~p_{(0)}$. It is strongly generated by the strong generators of $\D_{{\rm SL}_N,-N}$ and by $D_{{\rm GL}_N}$. To recover $\D_{{\rm SL}_N,-N}$ we quotient out by the vertex ideal generated by $D_{{\rm GL}_N} -\id$. Under Drinfel'd--Sokolov reduction, the $\D_{{\rm SL}_N,-N}$ subalgebra reduces to $\mathbf{I}_{{\rm SL}_N}$. Meanwhile, the generator $D_{{\rm GL}_N}$ gets corrected to $\hat{D}_{{\rm GL}_N}$, which is the composite expression for the determinant inside $\mathbf{I}_{{\rm SL}_N}$.

Thus, the Drinfel'd--Sokolov reduction of $\overline{D}_{{\rm SL}_N,-N}$ yields $\mathbf{I}_{{\rm SL}_N}\otimes\langle \hat{D}_{{\rm GL}_N}\rangle$, where $\langle \hat{D}_{{\rm GL}_N}\rangle$ is a commutative vertex algebra generated by $\hat{D}_{{\rm GL}_N}$ and its inverse. The reduction has a one-parameter family of quotients by setting $\hat{D}_{{\rm GL}_N}=\xi\id$ for $\xi\in\IC^\times$. Each of these quotients is isomorphic to $\mathbf{I}_{{\rm SL}_N}$, and we can identify this reduction as the more universal centraliser $\widetilde{\mathbf I}_{{\rm SL}_N}$.

\subsection{\label{subsec:sl3_ffr}Free field realisation for \texorpdfstring{$\mathbf{I}_{{\rm SL}_3}$}{I(SL2)}}

We return to another example, this time in rakn two. Once again, our starting point is to find expressions for the Feigin--Frenkel generators $S$ and $W$. Like in the ${\rm SL}_2$ case, we can utilise our deformed Miura approach to produce the right expressions. For $G={\rm SL}_3$, the restrictions of the generators of the Harish-Chandra centre to $S_f$ are given, up to a convention for overall scaling, by
\begin{equation}
\label{eq:SW_classical}
\begin{split}
	S &=\frac{1}{2}( b_1^2 +b_2^2 -b_1b_2 + \gamma_1 +\gamma_2)~,\\
	W &=\frac{1}{3}( b_1b_2(b_1-b_2)+\gamma_1b_2-\gamma_2b_1~)~,
\end{split}
\end{equation}
and by performing the tic-tac-toe procedure as in~\cite[Eqn. (3.10)]{deBoer:1993iz}, we arrive at chiral versions given by
\begin{equation}
\begin{split}
	S &= \frac{1}{2}(b_1^2 + b_2^2 - b_1b_2 +\gamma_1 + \gamma_2) + \frac{1}{2} \partial( b_1+ b_2)~,\\
	W &= \frac{1}{3}\left(b_1b_2(b_1-b_2) + b_2\gamma_1  -b_1\gamma_2\right) + \frac{1}{6}\partial(\gamma_1-\gamma_2)  +\frac{1}{3} (b_1\partial b_1 -b_2 \partial b_2)\\
	&\phantom{=} +\frac{1}{6} (b_2\partial b_1  - b_1\partial b_2) +\frac{1}{6}\partial^2(b_1 - b_2)~.
\end{split}
\end{equation}
Alternatively, we can start with an Ansatz based on the finite-dimensional expression and add quantum corrections while demanding regularity of the OPEs. The appropriate Ans\"atze are
\begin{equation}
\label{eq:SW_chiral_ansatz}
\begin{split}
	S &= \frac{1}{2}(b_1^2 + b_2^2 - b_1b_2 +\gamma_1 + \gamma_2) + \alpha_1 (\partial b_1+\partial b_2)~,\\
	W &= \frac{1}{3}(b_1b_2(b_1-b_2) + b_2\gamma_1  -b_1\gamma_2) +  \alpha_2\partial(\gamma_1-\gamma_2) + \alpha_3 (b_1\partial b_1 -b_2 \partial b_2)\\
	&\phantom{=} +\alpha_4 (b_2\partial b_1  - b_1\partial b_2) +\alpha_5\partial^2 (b_1 - b_2)~,
\end{split}
\end{equation}
where $\alpha_i\in\IC$. The corrections are constrained to satisfy the requirement that $S$ be invariant under the outer-automorphism $b_1 \mapsto b_2$, $e_1\mapsto e_2$ while $W$ is negated. Requiring regularity of the $S\times S$ OPE fixes the parameter $\alpha_1=1/2$, and regularity of $W\times W$ and $W\times S$ fixes the remaining coefficients as follows,
\begin{equation}
	\alpha_2 =\frac{1}{6} ~,\qquad \alpha_3 = \frac{1}{3}~,\qquad \alpha_4 = \frac{1}{6}~,\qquad \alpha_5 = \frac{1}{6}~,
\end{equation}
which agrees exactly with the expressions arising from $\beta^{ch}$.

To complete the construction, we start with the strong generator $X=-\gamma^{-2/3}_1\gamma^{-1/3}_2$, and identify additional strong generators $Y$ and $Z$ by analysing the simple poles in the $S\times X$ and $S\times Y$ OPEs, respectively (here we again define $Y$ and $Z$ to not be quasi-primaries, which could be rectified by adding descendants of other generators). The complete set of strong generators then takes the form
\begin{equation}
\label{eq:sl3_chiral_strong_generators}
\begin{split}
	S &= \frac{1}{2}(b_1^2 + b_2^2 - b_1b_2 +\gamma_1 + \gamma_2) + \frac{1}{2} \partial( b_1+ b_2)~,\\
	W &= \frac{1}{3}\left(b_1b_2(b_1-b_2) + b_2\gamma_1  -b_1\gamma_2\right) + \frac{1}{6}\partial(\gamma_1-\gamma_2)  +\frac{1}{3} (b_1\partial b_1 -b_2 \partial b_2)\\
		&\phantom{=} +\frac{1}{6} (b_2\partial b_1  - b_1\partial b_2) +\frac{1}{6}\partial^2(b_1 - b_2)~,\\
	X &=-\gamma^{-2/3}_1\gamma^{-1/3}_2~,\\
	Y &= -\frac{1}{2}b_1(\gamma^{-2/3}_1\gamma^{-1/3}_2)~,\\
	Z &= -\frac{1}{2}\big( \gamma_1 - \gamma_2 + b_1b_1- b_2b_2+b_1b_2-\partial b_2\big)\gamma_1^{-2/3}\gamma_2^{-1/3}~.
\end{split}
\end{equation}
The OPEs between the Feigin--Frenkel generators and the others are given by
\begin{equation}
\begin{split}
	S(z)X(w) &\sim \frac{\frac{4}{3}X}{(z-w)^2}+\frac{-Y}{z-w}~,\\
	S(z)Y(w) &\sim \frac{\frac{4}{3}Y}{(z-w)^2}+\frac{-Z-\frac{1}{2}(SX)}{z-w}~,\\
	S(z)Z(w) &\sim \frac{\frac{4}{3}Z}{(z-w)^2} +\frac{-\frac{1}{2}(SY)-(WX)+\frac{1}{6}(\partial SX)}{z-w}~,\\
	W(z)X(w) &\sim \frac{-\frac{20}{27}X}{(z-w)^3} + \frac{\frac{5}{6}Y}{(z-w)^2}+\frac{-Z+\frac{1}{6}(SX)}{z-w}~,\\
	W(z)Y(w) &\sim \frac{-\frac{20}{27}Y}{(z-w)^3} +\frac{\frac{5}{6}Z+\frac{5}{12}(SX)}{(z-w)^2}+ \frac{-(WX)-\frac{1}{3}(SY)+\frac{1}{6}(\partial SX)}{z-w}~,\\
	W(z)Z(w) &\sim \frac{-\frac{20}{27}Z}{(z-w)^3} +\frac{\frac{5}{6}(WX)+\frac{5}{12}(SY)-\frac{5}{36}(\partial SX)}{(z-w)^2}\\
	&+ \frac{-(WY) + \frac{1}{6}(SZ)+\frac{1}{6}(\partial SY)-\frac{1}{4}(SSX)}{z-w}~,
\end{split}
\end{equation}
where we have suppressed the dependence of the residues on the coordinate $w$ for clarity. Similarly, for the OPEs amongst $X$, $Y$, and $Z$, we have
\begingroup
\allowdisplaybreaks
\begin{align}
	Y(z)X(w) &\sim \frac{-\frac{2}{3}(XX)}{z-w}~,\\
	Y(z)Y(w) &\sim  \frac{-\frac{4}{9}(XX)}{(z-w)^2} + \frac{-\frac{4}{9}(X\partial X)}{z-w}~,\\
	Z(z)X(w) &\sim \frac{\frac{1}{9}(XX)}{(z-w)^2} + \frac{-\frac{5}{6}(YX) - \frac{4}{9} (X\partial X)}{z-w}~,\\
	Z(z)Y(w) &\sim \frac{\frac{2}{27}(XX)}{(z-w)^3} + \frac{-\frac{4}{9}(YX) - \frac{4}{18}(X\partial X)}{(z-w)^2} \\
	& + \frac{-\frac{1}{3}(YY) +\frac{1}{6}(ZX)-\frac{1}{4}(SXX) - \frac{4}{9}(\partial Y X) +\frac{5}{36} (Y\partial X) -\frac{4}{27} (X\partial^2 X) +\frac{5}{54} (\partial X \partial X)}{z-w}~,\\
	Z(z)Z(w) &\sim  \frac{\frac{1}{81}(XX)}{(z-w)^4}  +\frac{\frac{1}{81}(X\partial X)}{(z-w)^3} \\
	& +\frac{-\frac{13}{36}(YY) -\frac{1}{9}(ZX) -\frac{1}{6}(SXX) -\frac{1}{9}(\partial Y X) +\frac{1}{54}(Y \partial X) -\frac{8}{81} (X\partial^2 X)  + \frac{1}{81} (\partial X \partial X)}{(z-w)^2}\\
	&-\frac{\frac{1}{18}(\partial (Z X)
	 + \partial(\partial Y X) -\frac{1}{6} \partial (Y \partial X) + \frac{3}{2} \partial (S X X) +\frac{2}{3} X\partial^3 X +\frac{13}{2} Y\partial Y)}{z-w}~.
\end{align}
\endgroup
The determinantal null ideal is chiralised to
\begin{equation}
\begin{split}
	& D_{{\rm SL}_3} = -\tfrac{1}{8} SSS XXX + WW XXX + \tfrac{1}{2}  W SY XX + \tfrac{1}{2} SSYYX + W YYY - \tfrac{1}{4} SSZX  - 3 W ZYX \\
	&- SZ YY + \tfrac{1}{2} SZZ X + ZZZ+\tfrac{9}{8}SSYX'X + \tfrac{83}{144}SSX'X'X - \tfrac{11}{8} SSY'XX - \tfrac{1}{32}SSX''XX\\
	&- \tfrac{107}{144} SYYX''- \tfrac{1091}{432}SYX''X' - \tfrac{29}{108} SYX'''X - \tfrac{23}{12}SZYX' - \tfrac{71}{72} SZX'X' + \tfrac{5}{12}SZY'X\\
	&+ \tfrac{19}{18} SZX''X - \tfrac{2}{3}SY'YY - \tfrac{19}{36} SY'YX' + \tfrac{257}{216}SY'X'X' - \tfrac{41}{24} SY'Y'X+ \tfrac{511}{216}SY'X''X + \tfrac{8}{3} SZ'YX \\
	&+ \tfrac{101}{36}SZ'X'X - \tfrac{649}{432} SX''X'X' + \tfrac{5765}{3456}SX''X''X+ \tfrac{19}{16} SY''YX + \tfrac{11}{6}SY''X'X - \tfrac{161}{144} SZ''XX \\
	& + \tfrac{1651}{2592}SX'''X'X + \tfrac{131}{648} SY'''XX + \tfrac{14053}{62208}SX^{(4)}XX- \tfrac{1}{12} WSX'XX + \tfrac{7}{2}WYYX'+ \tfrac{64}{9} WYX'X'  \\
	&+ \tfrac{7}{36}WYX''X - \tfrac{13}{2} WZX'X - \tfrac{1}{2}WS'XXX+ \tfrac{199}{27} WX'X'X' - 7WY'YX - \tfrac{263}{18} WY'X'X \\
	&+ 5WZ'XX - \tfrac{137}{72} WX''X'X + \tfrac{163}{72}WY''XX - \tfrac{19}{24} WX'''XX + \tfrac{7}{576}YYX^{(4)} - \tfrac{52799}{46656}YX'''X''\\
	& - \tfrac{61225}{186624}YX^{(4)}X' - \tfrac{98117}{933120}YX^{(5)}X - \tfrac{71}{216}ZYX'''+ 4ZZY' + \tfrac{71}{36} ZZX'' + \tfrac{52}{9}ZY'Y' + \tfrac{667}{108}ZY'X'' \\
	&+ \tfrac{3557}{2592}ZX''X'' - \tfrac{25}{72}ZY''Y + \tfrac{59}{432}ZY''X' + \tfrac{1493}{2592}ZX'''X' + \tfrac{203}{216}ZY'''X + \tfrac{12823}{31104}ZX^{(4)}X  \\
	&- \tfrac{5}{8}S'SYXX - \tfrac{31}{24} S'SX'XX + \tfrac{1}{2}S'YYY + \tfrac{11}{8} S'YYX' + \tfrac{37}{18}S'YX'X' + \tfrac{137}{144} S'YX''X - \tfrac{3}{4}S'ZYX \\
	&+ \tfrac{25}{12} S'ZX'X - \tfrac{3}{16}S'S'XXX + \tfrac{1205}{864} S'X'X'X' - \tfrac{9}{4}S'Y'YX + \tfrac{37}{9} S'Y'X'X - \tfrac{5}{24}S'Z'XX \\
	&+ \tfrac{1121}{144} S'X''X'X + \tfrac{23}{288}S'Y''XX + \tfrac{4261}{5184}S'X'''XX - \tfrac{1}{4}W'SXXX + W'YX'X - \tfrac{3}{2} W'ZXX \\
	&+ \tfrac{41}{12}W'X'X'X - \tfrac{7}{2} W'Y'XX - \tfrac{3}{4}W'X''XX - \tfrac{547}{1296}Y'YX''' + \tfrac{80}{27}Y'Y'Y' + \tfrac{737}{144} Y'Y'X'' \\
	&+ \tfrac{63505}{31104}Y'X''X'' + \tfrac{60653}{46656}Y'X'''X' + \tfrac{128695}{186624}Y'X^{(4)}X - \tfrac{323}{108}Z'YX'' - 4Z'ZY - \tfrac{46}{9} Z'ZX'\\
	& - \tfrac{40}{9}Z'Y'Y - \tfrac{140}{27}Z'Y'X' + \tfrac{40}{9} Z'Z'X - \tfrac{6881}{864}Z'X''X' + \tfrac{197}{48}Z'Y''X - \tfrac{9575}{7776}Z'X'''X + \tfrac{1}{6}S''SXXX\\
	&- \tfrac{3}{16} S''YYX  - \tfrac{2}{3}S''YX'X + \tfrac{49}{48} S''ZXX - \tfrac{605}{1728}S''X'X'X + \tfrac{39}{16} S''Y'XX + \tfrac{6841}{3456}S''X''XX\\
	&+ \tfrac{1}{4} W''YXX + \tfrac{11}{12}W''X'XX - \tfrac{8747}{15552}X''X''X'' - \tfrac{323}{432}Y''YX''
	+ \tfrac{7}{36} Y''Y'Y + \tfrac{4571}{2592}Y''Y'X' \\
	&- \tfrac{913}{324}Y''X''X' + \tfrac{1631}{1296}Y''Y''X - \tfrac{52315}{93312}Y''X'''X+ \tfrac{14}{9} Z''YY + \tfrac{817}{216}Z''YX' - \tfrac{77}{36}Z''ZX\\
	&+ \tfrac{10307}{2592}Z''X'X' - \tfrac{97}{18}Z''Y'X - \tfrac{2255}{5184}Z''X''X - \tfrac{1}{72} S'''YXX + \tfrac{395}{1296}S'''X'XX + \tfrac{1}{72} W'''XXX \\
	&- \tfrac{63841}{23328}X'''X''X'- \tfrac{90493}{209952}X'''X'''X - \tfrac{13}{108}Y'''YY - \tfrac{3317}{3888}Y'''YX' - \tfrac{2327}{1728}Y'''X'X' + \tfrac{6929}{3888}Y'''Y'X \\
	&+ \tfrac{1999}{5832}Y'''X''X - \tfrac{1}{24}Z'''YX + \tfrac{1841}{3888}Z'''X'X + \tfrac{463}{10368}S^{(4)}XXX - \tfrac{72935}{139968}X^{(4)}X'X' \\
	&+ \tfrac{9155}{93312}X^{(4)}X''X - \tfrac{865}{15552}Y^{(4)}YX + \tfrac{9443}{186624}Y^{(4)}X'X + \tfrac{1159}{15552} Z^{(4)}XX + \tfrac{2243}{349920}X^{(5)}X'X \\
	&+ \tfrac{244}{3645}Y^{(5)}XX + \tfrac{2237}{51840}X^{(6)}XX~.
\end{split}
\end{equation}
There is, currently, no expression in the literature to which we can compare our presentation of $\mathbf{I}_{SL_3}$. Nevertheless, our Ansatz has the properties one would expect from computing a schematic two-sided Drinfel'd---Sokolov reduction of $\D_{SL_3,-3}$.

\section{\label{sec:FF_genus_zero}Class \texorpdfstring{$\SS$}{S} free field realisations at genus zero}

We now move on to the construction of free field realisations of chiral algebras of class $\SS$ in the $\mathfrak{a}_1$ case. In~\cite{Beem:2019tfp}, the authors described (amongst other things) a free field realisation of the vertex algebra associated to the four-punctured sphere, $\V_{\af_1,4}\cong L_{-2}(\mathfrak{d}_4)$, which was inspired by the geometry of the associated variety/Higgs branch (in this case, the closure of the minimal nilpotent orbit in $\mathfrak{d}_4$). This amounted to a free field realisation of $\V_{\af_1,4}$ inside $\V_{{\af_1},3}\otimes \Pi_{{\rm SL}_2}$, with $\V_{{\af_1},3}$ being just a system of symplectic bosons.

More generally, in this section we find evidence for a general inductive free field realisation where the number of punctures can be increased at the expense of introducing extra lattice degrees of freedom,\!\footnote{For a selection of values of $s$, such relations were observed previously by Carlo Meneghelli \cite{CMUnpublished}.}
\begin{equation}
	\V_{\af_1,s} \subset \V_{\af_1,s-1}\otimes \Pi_{{\rm SL}_2}~.
\end{equation}
At the level of Higgs branches/associated varieties, this relation arises from the existence of a Zariski open $U_n \subset X_n$ such that passing to an \'etale $2$-cover $\widetilde{U}_n$ gives
\begin{equation}
	\IC[\widetilde{U}_n] \cong \IC[X_{s-1}] \otimes \IC[T^*\IC^\times]~.
\end{equation}
The realisations $\V_{{\rm SL}_2,s}\hookrightarrow \V_{{\rm SL}_2,s-1}\otimes \Pi_{{\rm SL}_2}$ should then be thought of as being a consequence of localising along $U_n\hookrightarrow X_{s}$ and passing to an appropriate cover. Here we will not endeavour to make these ideas more precise in the abstract, but instead proceed by example. Furthermore, we will skip directly to the chiral case, from which the corresponding relations of finite-dimensional symplectic varieties can be recovered by passing to the (reduced) $C_2$-algebra $(R_V)_{red}$.

In the rest of this section, we will construct genus zero chiral algebras of class $\SS$ of type $\af_1$ using these free field methods. In principle such a program could exploit the fact that in type $\af_1$ the trinion theory is a theory of free hypermultiplets, with associated vertex operator algebra given by symplectic bosons, and so use $\V_{{\rm SL}_2,3}$ as a starting point as in \cite{Beem:2019tfp}. Such an approach obscures some symmetry of the problem---indeed, by generalised $S$-duality all (full) punctures should be treated equally in the associated vertex algebra, while starting with the trinion VOA privileges the corresponding three ``initial'' punctures. We instead take the chiral universal centraliser as our starting point, or more precisely its free field realisation from Section \ref{subsec:sl2_ffr}. We will therefore present these vertex algebras in free field form in a manner that renders generalised $S$-duality completely manifest.

\subsection{\label{subsec:genus_zero_A1}General structure of genus zero free field realisations}

We will realise $\V_{\af_1,s}$ inside $\bigotimes_{i=0}^s \Pi^{i}_{{\rm SL}_2}$, where the superscript $i$ labels the puncture associated to that lattice factor when $i\neq0$, and the zeroth copy will be used to construct $\mathbf{I}_{{\rm SL}_2}$. We can think of $i$th copy for $i>0$ being used to construct $\V_{\af_1,i}$ in an inductive process, though at the end of the day all punctures will be treated symmetrically. For $i\neq0$ we denote the weight $\pm \frac{1}{2}$ generators by $e_{i}^{\pm\frac{1}{2}}$ and the weight one generator by $p_i$. For the generators of $\Pi^0_{{\rm SL}_2}$ that are used in the free field realisation of the sphere VOA, we continue to use $\{\gamma^{\pm\frac{1}{2}},b\}$ as in Section \ref{subsec:sl2_ffr}.

For each $i>0$, we can realise a critical-level affine $\slf_2$ subalgebra in $\Pi^i_{{\rm SL}_2}$, generated by the currents
\begin{equation}
\label{eq:sl2_ffr_specialised}
	e_i = \left(e_i^{\frac{1}{2}}\right)^2~, \quad
	h_i =-2\partial\phi_i~,\quad
	f_i = (\partial^2\delta_i-(\partial\delta_i)^2)e_i^{-1}~.
\end{equation}
These currents generate the (simple quotient of the) universal current algebra $V^{\kappa_c}(\slf_2)$,~\ie, the quotient of $V^{\kappa_c}(\gf)$ where all elements of the Feigin--Frenkel centre are set to zero. The universal $V^{\kappa_c}(\slf_2)$ is recovered by reintroducing a commutative vertex algebra with generator $S$ that also commutes all of $\Pi^i_{{\rm SL}_2}$. This plays the role of the strong generator of the Feigin--Frenkel centre, and is included in the definition of the lowering operator,
\begin{equation}
\label{eq:sl2_ffr_with_centre}
	f_i = (S+\partial^2\delta_i-(\partial\delta_i)^2)e_i^{-1}~.
\end{equation}
In chiral algebras of class $\SS$, the Feigin--Frenkel generators for the current algebras at each puncture are identified \cite{Arakawa:2018egx,Beem:2022mde}, and this is accomplished in the free field realisation by letting the same $S$ appear in each instance of $V^{\kappa_c}(\slf_2)$. Indeed, we further identify this with the Feigin--Frenkel generator of $\mathbf{I}_{{\rm SL}_2}\subset \Pi^0_{{\rm SL}_2}$, which gives a realisation of $V^{\kappa_c}(\slf_2)$ as a subalgebra of $\mathbf{I}_G\otimes \Pi^i_{{\rm SL}_2}$.

A conjectured accounting of strong generators for $\V_{{\rm SL}_2,s}$ has appeared previously in the literature (see~\cite[Conjectures 1 \& 2]{Beem:2014rza}). In particular, there are expected to be $s$-many sets of $\slf_2$ currents, the Virasoro stress tensor, and generators $X^{a_1a_2\dots a_s}$ which transform in the $s$-fold fundamental representation $\mathbf{2}^{\otimes s}$ of the zero-mode algebra of the $\slf_2$ currents. Our approach will be to identify free-field candidates for these generators, which can then be checked to strongly generate closed vertex subalgebras.

We start of by constructing the $\slf_2$ currents $(e_i,h_i,f_i)$ for each puncture $i=1,\dots,s$ as above. Next, we identify the highest-weight of the $s$-fundamental operator,
\begin{equation}
	X^{\underbrace{++\dots++}_{s}} \coloneqq \gamma^{-\frac{1}{2}}\prod_{i=1}^{s} e^{\frac{1}{2}}_i~,
\end{equation}
The remaining components $X^{a_1a_2\dots a_s}$ of the $s$-fundamental representation are realised by acting with the $f_{i,(0)}$ zero modes. In general, we have
\begin{equation}
	X^{a_1a_2\dots a_s} = \left(\prod_{\{i\,:\, a_i=-\}} f_{i,(0)}\right) \, X^{++\dots+}~,
\end{equation}
For $s\neq 3$ there are expected to be null relations amongst the composites of the putative strong generators. For small $s$, we will explicitly identify these null ideals, but in general their structure can be quite intricate. However, the free field realisation constructs a quotiented vertex algebra, so some (conjecturally all) of the the null relations will be automatically satisfied in our realisations of the $\V_{{\rm SL}_2,s}$. Finding the exact form of the null relations in the free field realisation amounts to a book-keeping exercise.

The vertex algebras $\V_{{\rm SL}_2,s}$ are conformal for all $s\in \IN$ with central charge given by
\begin{equation}
	c_s = (s-(2s-4)h^\vee)\dim~\gf - (s-2)\rk~\gf = 26 -10s~.
\end{equation}
We previously identified the conformal vector of $\V_{{\rm SL}_2,0}\cong \mathbf{I}_{{\rm SL}_2}$ as a canonical free field conformal vector with appropriate background charges chosen. This continues to be the case after introducing punctures, and for $\V_{{\rm SL}_2,s}$ we define a conformal vector
\begin{equation}
	T_s = \sum_{i=0}^{s} \bigg( \partial\delta_i\partial\delta_i -\partial \phi_i \partial \phi_i -\alpha_i\partial^2 \phi_i -\beta_i \partial^2 \delta\bigg)~.
\end{equation}
Demanding that $\Delta_{e_i} = \Delta_p = 1$ and fixing the central charge forces the background charges to $\alpha_{i>0} =0$, and $\beta_{i>0} = 1$. The background charges $\alpha_0$ and $\beta_0$ have been set to $\frac32$ and $-\frac12$ respectively from our arguments in Section~\ref{subsec:chiral_morphism}.

This conformal vector will generally be included as an additional strong generator for $\V_{{\rm SL}_2,s}$. For the special cases $s=0,1,2,3,4$, the conformal vector is actually redundant, and can be written as a composite of the remaining strong generators, so in these cases there will be a null relation that allows $T_s$ to be identified with this composite. In our free field realisations, this relation will be automatically satisfied---this was shown above for the case of $s=0$.

To summarise, we present the following conjectures.
\begin{conj}
	The vertex algebra $\V_{{\rm SL}_2,s}$, for $s\geqslant 1$, is isomorphic to the subalgebra of $\Pi_{{\rm SL}_2}^{\otimes s+1}$ that is weakly generated by the moment maps $e_i,h_i,f_i$ for $i=1,\dots,s$, the highest weight state $X^{+++\dots+}$. This vertex subalgebra will be simple, with all null relations in its strongly-generated presentation satisfied identically.
\end{conj}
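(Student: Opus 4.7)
The plan is to establish the conjecture inductively in $s$, combining associated-variety arguments with the explicit free field techniques of Section~\ref{subsec:chiral_morphism}. Denote by $W_s\subset\Pi_{{\rm SL}_2}^{\otimes s+1}$ the weakly generated subalgebra in question. First I would verify the base case $s=1$ directly: the generators $e_1,h_1,f_1$ defined via \eqref{eq:sl2_ffr_with_centre} realise (together with $S\in\Pi^0_{{\rm SL}_2}$) the critical-level current algebra $V^{\kappa_c}(\slf_2)$, and one checks that $X^{\pm}=\gamma^{-1/2}e^{\pm 1/2}_1$ has finite OPE with the currents and with itself, yielding a presentation matching $\V_{\af_1,1}$. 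Low-rank cases $s=2,3,4$ can then be verified by direct OPE computation in \texttt{OPEdefs}, since $\V_{\af_1,3}$ is symplectic bosons and $\V_{\af_1,4}\cong L_{-2}(\mathfrak{d}_4)$ has an existing free-field realisation \cite{Beem:2019tfp} that should be recoverable after an appropriate change of variables.

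For the inductive step, I would first establish the claim at the level of associated varieties. Since $\Pi_{{\rm SL}_2}^{\otimes s+1}$ is a classically free, strict chiral quantisation of $(T^\ast\IC^\times)^{s+1}$, the associated scheme $\overline{X}_{W_s}$ is the Poisson subscheme generated by the $C_2$-images of the weak generators. One then checks that this subscheme coincides with the Zariski-open subset $\widetilde{U}_s\subset X_s$ (the Higgs branch of the $s$-punctured sphere) on which the generalised $\varphi$-map is an open immersion, much as in Proposition~\ref{prop:phi_open_immersion} for the case $s=0$. Taking the reduced scheme and using Kostant-type arguments on the Poisson generation by $(P_i)$, the moment maps, and the highest-weight generators (a puncture-decorated extension of Conjecture~\ref{conj:poisson_generation}), one obtains $\overline{X}_{W_s}\cong\overline{X}_{\V_{\af_1,s}}$. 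Lifting this equality of associated schemes to an isomorphism of vertex algebras should then follow by comparing strong generators and OPEs order by order in the conformal weight, using the inductive hypothesis and the chiral analogue of the gluing (``closing a pair of punctures'') operation in class $\SS$.

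The simplicity and null-relation claims should follow together. Any vertex ideal $J\subset W_s$ must inject into $\Pi_{{\rm SL}_2}^{\otimes s+1}$ since the embedding $W_s\hookrightarrow\Pi_{{\rm SL}_2}^{\otimes s+1}$ is by construction faithful; but the latter is a tensor product of half-lattice VOAs of the type introduced in~\cite{half_lattice}, whose module theory is well understood and whose vacuum module contains no proper submodules that could serve as the image of $J$. Simplicity of $W_s$ follows. For the null relations, one observes that every composite relation predicted by the strong-generator presentation of $\V_{\af_1,s}$ maps to a definite element of $\Pi_{{\rm SL}_2}^{\otimes s+1}$; these can be checked to vanish identically by construction, because the free-field realisation is a homomorphism and the relations hold in the target presentation of $\V_{\af_1,s}$.

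The main obstacle is the absence of a proven, closed-form list of strong generators and null relations for $\V_{\af_1,s}$ at arbitrary $s$. The strong-generation conjecture of \cite{Beem:2014rza} is known only for low $s$, and the inductive closure argument above requires control over how new strong generators and their relations appear as punctures are added. A secondary but serious obstacle is the combinatorial complexity of the null ideal for $s\geqslant 5$, where the number and conformal weight of independent relations grows rapidly. A conceptual path around both difficulties would be to realise the iterative addition of a puncture as a chiral symplectic-reduction-and-extension operation analogous to the Moore--Tachikawa gluing, so that the inductive step becomes automatic rather than computational---but making this precise at the chiral level appears to be an open problem in its own right.
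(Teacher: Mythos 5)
First, note that the statement you are addressing is presented in the paper as a \emph{conjecture}: the authors give no proof, only explicit verification for $1\leqslant s\leqslant 4$ (matching $s=1$ against the equivariant affine $\WW$-algebra of \cite{Arakawa:2018egx}, $s=2$ against $\D_{{\rm SL}_2,-2}$, $s=3$ against symplectic bosons, and $s=4$ against $L_{-2}(\mathfrak{d}_4)$), together with constructions of generators and OPEs for $s=5,6$. Your base cases therefore coincide with what the paper actually does; everything beyond that is new territory which the paper itself leaves open.

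Two steps in your inductive strategy fail as written. The simplicity argument is the most serious: a vertex ideal $J\subset W_s$ is only a $W_s$-submodule of $\Pi_{{\rm SL}_2}^{\otimes s+1}$, not a submodule for the full half-lattice algebra, so the irreducibility of the lattice vacuum module over \emph{itself} places no constraint on $J$; subalgebras of simple vertex algebras need not be simple, and this is precisely why free field realisations of universal objects can have nontrivial kernels. Relatedly, your treatment of the null relations is circular: you assume the free-field map is a homomorphism out of the \emph{simple} quotient $\V_{\af_1,s}$, whereas the content of the conjecture is exactly that the weakly generated subalgebra $W_s$, defined only by its generators inside the lattice algebra, coincides with that simple quotient --- the paper is explicit that only ``some (conjecturally all)'' of the null relations are automatically satisfied. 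Finally, the passage from an isomorphism of associated schemes to an isomorphism of vertex algebras is not automatic: these VOAs are not conical, and equality of $C_2$-algebras controls only the leading poles of the OPE, so the ``order by order in conformal weight'' lifting would need to be an actual argument (for instance a comparison of the Li filtrations showing $\mathrm{gr}_F W_s\cong \mathrm{gr}_F\V_{\af_1,s}$, which amounts to a classical-freeness statement you would also have to establish). Your closing paragraph correctly identifies the remaining obstructions; the honest conclusion is that the proposal is a programme consistent with the paper's conjectural picture rather than a proof, and the two flawed steps above would need to be replaced before even the inductive framework could be considered sound.
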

\begin{conj}
	The subalgebra of $\Pi^{\otimes s+1}_{{\rm SL}_2}$ that is weakly generated by $e_i,h_i,f_i$ for $i=1,\dots s$ and $X^{+++\dots +}$ is strongly generated by
	\begin{itemize}
		\item $e_i,h_i,f_i$, for $i=1,\dots,s$.
		\item $X^{a_1a_2\dots a_s}$, for $a_i = +,-$.
		\item $T_s$, the conformal vector.
	\end{itemize}
\end{conj}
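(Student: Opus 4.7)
The plan is to combine two inclusions. Write $\V'$ for the weakly generated subalgebra and $S=\{e_i,h_i,f_i,X^{a_1\cdots a_s},T_s\}$ for the proposed strong generating set. The strategy is to show that every element of $S$ lies in $\V'$, and that the subspace spanned by normal-ordered products of derivatives of elements of $S$ is closed under OPE. A Dong--Borcherds-type argument then identifies this subspace with a vertex subalgebra containing the weak generators, and combining with the first inclusion forces it to equal $\V'$.

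For the first inclusion, the currents $e_i,h_i,f_i$ are already weak generators. Each $X^{a_1\cdots a_s}$ is obtained by iterated action of the zero modes $f_{i,(0)}$ on the weak generator $X^{+\cdots+}$, hence lies in $\V'$. For the conformal vector $T_s$, I would exhibit an explicit composite expression: the Heisenberg fields decompose as $\partial\phi_i=-\tfrac{1}{2}h_i$ and $\partial\delta_i=(\partial e_i)e_i^{-1}+\tfrac{1}{2}h_i$, where localisation at $e_i$ is accessible through the lattice structure, whence every piece of $T_s$ is manifestly a composite in $\V'$. This generalises the composite formula for $T_0$ found at the end of Section~\ref{subsec:sl2_ffr}.

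For the second inclusion, one verifies closure of the OPEs among elements of $S$. The OPEs of $\{e_i,h_i,f_i\}$ among themselves give the affine $\slf_2$ algebra at the critical level, coupled to the common Feigin--Frenkel generator coming from the universal centraliser sector $\mathbf{I}_{{\rm SL}_2}$. The OPEs of these currents with $X^{a_1\cdots a_s}$ produce only other components $X^{b_1\cdots b_s}$, by the very definition of the $X$'s via the $s$-fold fundamental representation. The crux is the $X\times X$ OPE, whose singular part I expect to decompose into bilinears in the $X$'s, normal-ordered products and derivatives of the currents, and the conformal vector $T_s$. For small $s$ (namely $s\leqslant 4$) the appearance of $T_s$ should degenerate to a null relation expressing it as a composite of the other generators, while for $s\geqslant 5$ it survives as an independent strong generator.

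The main obstacle is the $X\times X$ OPE, both because of the combinatorial proliferation of admissible composites as $s$ grows and because the quantum corrections (background charges, normal-ordering shifts) must be tracked with enough precision to decide when a composite is genuinely independent versus forced to coincide with a derivative of another generator. A complementary strategy I would pursue in parallel is to pass to the associated graded under the Li filtration: assuming classical freeness of $\V'$ and that the $C_2$-algebra $R_{\V'}$ is Poisson generated by the classical avatars of $e_i,h_i,f_i$ and $X^{a_1\cdots a_s}$ (a statement aligned with the conjectures of Section~\ref{subsec:sln_poisson_structure} adapted to the Higgs branch of the $s$-punctured sphere), a standard Poincar\'e--Birkhoff--Witt-type lifting then promotes classical Poisson generation to chiral strong generation, bypassing much of the explicit OPE bookkeeping.
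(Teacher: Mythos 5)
First, a point of comparison with the paper: this statement is explicitly labelled a \emph{conjecture} there, and the authors state that they ``will not attempt a proof''---the only support offered is explicit OPE verification for $s\leqslant 4$ (with partial data at $s=5,6$). So your proposal should be judged as an independent attempt at a proof, and as such it has genuine gaps rather than being a complete argument.

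The most concrete problem is your treatment of $T_s$. You write $\partial\delta_i=(\partial e_i)e_i^{-1}+\tfrac12 h_i$ and claim that localisation at $e_i$ is ``accessible through the lattice structure.'' But the object you must show lies in $\V'$ is $T_s$ itself, and $\V'$ is the subalgebra of $\Pi^{\otimes s+1}_{{\rm SL}_2}$ weakly generated by the listed fields---it is \emph{not} closed under localisation. The inverse vertex operator $e_i^{-1}=e^{-(\phi_i+\delta_i)}$ exists in the ambient lattice algebra but is not obtainable from $e_i,h_i,f_i,X^{+\cdots+}$ by modes and normally ordered products, so exhibiting $T_s$ as a composite involving $e_i^{-1}$ proves nothing about membership in $\V'$. (The same issue afflicts the zeroth lattice factor: $T_s$ contains $\partial\phi_0,\partial\delta_0$, i.e.\ $b$ and $\gamma^{\pm\frac12}$, none of which are weak generators.) The route the paper actually uses is the opposite one: $T_s$ is extracted from the singular part of the $X\times X$ OPE (for $s\geqslant5$ it appears there as a new primary $S+\tfrac{6}{c_s}T_s$ up to normalisation; for $s\leqslant4$ it degenerates to a composite of the currents and the $X$'s). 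Your second inclusion---closure of the $X\times X$ OPE on the proposed set---is stated as an expectation, but that expectation \emph{is} the conjecture; asserting it is circular. The paper's only evidence is direct computation, which it concedes becomes unwieldy already at six punctures.

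Your fallback via the Li filtration is a sensible idea in principle, and it is closer in spirit to how one might eventually prove such statements, but as written it trades one conjecture for two: classical freeness of $\V'$ and Poisson generation of $R_{\V'}$ by the classical avatars are both unestablished (the paper proves the analogous Poisson-generation statement only for the universal centraliser of ${\rm SL}_N$, i.e.\ $s=0$, and even there the chiral lift is left conjectural). A PBW-type lifting from generation of $\mathrm{gr}_F\V'$ to strong generation of $\V'$ also requires control of the filtration that you have not supplied. In short: the skeleton of your argument is reasonable and matches the paper's heuristics, but neither inclusion is actually closed, and the specific composite formula you propose for $T_s$ would fail.
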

Together, these conjectures imply an algorithmic approach to constructing the strong generators---matching the expectation of~\cite{Beem:2014rza}---in a free-field realisation of $\V_{{\rm SL}_2,s}$ inside $\Pi^{\otimes s+1}_{{\rm SL}_2}$. For $1\leqslant s\leqslant4$, we will explicitly verify these statements.

A corollary of these conjectures is that the free field realisations can be performed recursively. Given $\V_{{\rm SL}_2,s}$ as a strongly generated vertex algebra,~\ie\ not as a free field realisation, one can construct a realisation of $\V_{{\rm SL}_2,s+1}$, inside $\V_{{\rm SL}_2,s}\otimes \Pi_{{\rm SL}_2}$. The vertex algebra $\V_{{\rm SL}_2,s}$ comes with affine currents $e_i,h_i,f_i$ for $i=1,\ldots,s$; let $S$ denote the generator of their common Feigin--Frenkel centre. We can then construct $e_{s+1},h_{s+1},f_{s+1}$ inside $\V_{{\rm SL}_2,s}\otimes \Pi_{{\rm SL}_2}$ using that common $S$. Then taking $X^{a_1a_2\dots a_s}$ to be the state in $\V_{{\rm SL}_2,s}$ with weight $\Delta = \frac{1}{2}(s-2)$, which transforms in the $\mathbf{2}^{\otimes s}$ under the $\slf_2$ actions provided by $e_{i,(0)},h_{i,(0)},f_{i,(0)}$, we can define
\begin{equation}
	X^{a_1a_2\dots a_s+} \coloneqq X^{a_1a_2\dots a_s}e_{s+1}^{\frac{1}{2}}~, X^{a_1a_2\dots a_s -}\coloneqq f_{s+1,(0)}X^{a_1a_2\dots a_s +}~.
\end{equation}
Letting $T_s$ be the unique conformal vector of $\V_{{\rm SL}_2,s}$ with central charge $c_s$, then we can define the conformal vector
\begin{equation}
	T_{s+1} \coloneqq T_s + \partial\delta_{s+1}\partial\delta_{s+1} - \partial \phi_{s+1}\partial \phi_{s+1} - \partial ^{2} \delta_{s+1}~,
\end{equation}
with central charge $c_{s+1}$. The vertex algebra $\V_{{\rm SL}_2,s+1}$ is isomorphic to the subalgebra of $\V_{{\rm SL}_2,s}\otimes \Pi_{{\rm SL}_2}$ that is strongly generated by $e_i,h_i,f_i$, for $i=1,\dots s+1$; the $X^{a_1a_2\dots a_{s+1}}$; and the conformal vector $T_{s+1}$.

Note also that this construction makes manifest the action of the symmetric group $S_s$ on $\V_{{\rm SL}_2,s}$. The symmetric group acts by permuting the various affine current algebras as well as by the indices of $X^{a_1a_2\dots a_s}$. The presence of this automorphism is a consequence of generalised $S$-duality in four dimensions. The same symmetry is manifest in the construction of~\cite{Arakawa:2018egx}, which uses $s$-identical copies of $\W_{{\rm SL}_2}\equiv \V_{{\rm SL}_2,1}$ as the starting point of its construction of $\V_{{\rm SL}_2,s}$.

It is an interesting feature of these free field realisations that the prescription for iterative addition of punctures is a kind of inverse functor to DS reduction---much like in~\cite{Adamovic2019}. Given that such an inverse functor has already been provided by the Feigin--Frenkel gluing functor $\Hi{0}(\ZZ,\W_G\otimes -)$ in~\cite{Arakawa:2018egx}, it is a natural question whether the relation between the two inverse operations can be clarified, and in particular whether the free field approach can be understood as providing a solution to the semi-infinite cohomology problem arising in the Feigin--Frenkel gluing.

\subsection{Examples}

In the remainder of this section we present details of the implementation of this algorithm for some small values of $s$, leading to explicit constructions of the corresponding $\V_{{\rm SL}_2,s}$. We start with $s=1$ and work up to $s=6$ iteratively.

\subsubsection{\label{subsubsec:cap}The cap}

The cap vertex algebra was dubbed the \emph{equivariant affine $\WW$-algebra} in~\cite{Arakawa:2018egx}; here we consider the $\af_1$ case $\V_{{\rm SL}_2,1}\simeq\W_{{\rm SL}_2}$. A concrete expression for the strong generators and OPEs of this algebra can be found in the appendix of \textit{loc.\ cit.}, and our realisation will eventually reproduce those results up to field redefinitions. First, however, the general scheme presented above leads to an alternative presentation for $\W_{{\rm SL}_2}$.

Let $e_1$, $h_1$, $f_1$ be the strong generators of the $V^{\kappa_c}(\slf_2)$ subalgebra inside $\Pi^0_{{\rm SL}_2}\otimes \Pi^1_{{\rm SL}_2}$ defined as in \eqref{eq:sl2_ffr_specialised} and \eqref{eq:sl2_ffr_with_centre}. The highest weight state of the generator $X^a$ (transforming in the fundamental representation of $\slf_2$) is realised by adjoining $e_1^{\frac12}$ to the generator $X$ of $\mathbf{I}_{{\rm SL}_2}$, while the lowest weight state is recovered by acting with $f_1$. This gives the following strong generators,
\begin{equation}
\begin{split}
	X^+	&= \gamma^{-\frac{1}{2}} e_1^{\frac{1}{2}}~,\\
	X^- &\coloneqq f_{1,(0)} X^{+} = -(b+\partial\delta_1)\gamma^{-\frac12}e_1^{-\frac12}~.
\end{split}
\end{equation}
These generators have conformal weight $-\frac12$ and have regular OPEs between themselves. Together with the affine currents, they strongly generate a vertex algebra with a nontrivial center, the generator $C_{cap}$ of which takes the form
\begin{equation}
	C_{cap}=J^{(ab)}X_a X_b-2(\partial X^a)X_a~,
\end{equation}
where we have introduced a matrix $J_1^{ab}$ of $\slf_2$ currents according to
\begin{equation}
	J_1 \coloneqq
	\begin{pmatrix}
		e_1 & -\frac{1}{2}h_1\\
		-\frac{1}{2} h_1 & -f_1
	\end{pmatrix}~,
\end{equation}
and $X^+\equiv X^1$, $X^-\equiv X^2$, and where $\slf(2)$ indices are lowered according to $\ast_a=\epsilon_{ab}\ast^b$, with $\epsilon_{+-}=1$. The central quotient of this vertex algebra where we enforce $C_{cap}=-\id$ will be identified with $\W_{{\rm SL}_2}$.

This presentation does not match that of \cite{Arakawa:2018egx}, which is due to the amusing fact that this VOA can be equivalently defined to be strongly generated by a different subset of fields. Indeed, the naive expectation for the strong generators of the VOA realised by (one-sided) quantum Drinfel'd--Sokolov reduction from CDOs on ${\rm SL}_2$ is that there will be weight $-\frac12$ and a weight $+\frac12$ fundamentals of the remaining $\slf_2$, along with a Feigin--Frenkel generator. The extra weight $+\frac12$ generators can be written as $Y^a = J^{ab}X_b$, which in terms of free fields gives
\begin{equation}
\begin{split}
	Y^+	&= b\gamma^{-\frac{1}{2}}e_1^\frac{1}{2}~.\\
	Y^- &\coloneqq f_{1,(0)}Y^+ = b(b+\partial\delta_1)\gamma^{-\frac12}e_1^{-\frac12} + \gamma^\frac12e_1^{-\frac12}~.
\end{split}
\end{equation}
Meanwhile, the Feigin--Frenkel generator $S=J^{ab}J_{ab}$ is, of course, given as in $\mathbf{I}_{{\rm SL}_2}$ by
\begin{equation}
	S=b^2+\gamma+\partial b~.
\end{equation}
The $X^a$, $Y^a$, and $S$ also strongly generate $\W_{{\rm SL}_2}$, with their singular OPEs being given by
\begin{equation}
\begin{split}
	S(z)X^a(w) 		&=  \frac{\frac{3}{4}X^a}{(z-w)^2} + \frac{Y^a}{z-w}~,\\
	S(z)Y^{a}(w) 	&=  \frac{\frac{3}{4} Y^a}{(z-w)^2} + \frac{SX^a}{z-w}~,\\
	Y^a(z)X^b(w) 	&=  \frac{\frac{1}{2}(X^{a}X^{b})}{z-w}~,\\
	Y^a(z)Y^b(w) 	&= -\frac{\frac{1}{4}(X^{a}X^{b})}{(z-w)^2}+\frac{\epsilon^{ab}\frac{1}{2}\big( Y^cX_d -\frac12\partial X^c X_d \big)-\frac14 \partial X^a X^b}{z-w}~.
\end{split}
\end{equation}
The null relation in this case is a lift of~\eqref{eq:det_chiral_sl2} in $\mathbf{I}_{{\rm SL}_2}$, and is given by
\begin{equation}
	(X^aY_a) + \frac{1}{2}\big(\partial X^aX_a\big) = \id~,
\end{equation}
which can also be thought of as the chiralised version of the determinantal relation in the equivariant Slodowy slice ${\rm SL}_2\times S_f$.

In this presentation, the affine currents $e_1$, $h_1$, $f_1$ are realised as composites,
\begin{equation}
\begin{split}
	e_1 =& SX^+X^+ - Y^+Y^+ -\frac{3}{2}\big(X^+\partial Y^+- \partial X^+ Y^+) -\frac{3}{4} \partial X^+\partial X^+ - \frac{1}{8} X^+\partial^2 X^+~,\\
	h_1 =& -2SX^-X^+ +2 Y^-Y^+ + \frac{3}{2}(X^-\partial Y^+ + X^+\partial Y^- - \partial X^- Y^+ - \partial X^+ Y^-)\\
	&+\frac{3}{2} \partial X^+ \partial X^- + \frac{1}{4}X^+\partial^2 X^-~, \\
	f_1 =& -SX^-X^- +Y^-Y^- + \frac{3}{2}\big(X^-\partial Y^- - \partial X^- Y^-\big) + \frac{3}{4}\partial X^- \partial X^- +\frac{1}{8} X^-\partial^2 X^-~.
\end{split}
\end{equation}
Similarly, the stress tensor $T_1$ is composite and takes the form
\begin{equation}
\begin{split}
	T_1 &= -S \big(\partial X^a X_a\big) + \partial Y^a Y_a +\frac{3}{2} \big(\partial X^a\partial Y_a\big) +\frac38 \partial^2 X^a\partial X_a + \frac16 \partial^3X^aX_a~.
\end{split}
\end{equation}
(The conformal vector can also be expressed in terms of just the currents and the $X^a$, but the form is not particularly illuminating.) This second presentation of the cap agrees precisely with the expressions appearing in the appendix of~\cite{Arakawa:2018egx} after making the identifications $S\mapsto -S$, $X^i \mapsto - \begin{pmatrix} c \\ a\end{pmatrix}$ and $Y^i\mapsto \begin{pmatrix} d \\ b \end{pmatrix} $.

\subsubsection{\label{subsubsec:cylinder}The cylinder}

Given $\W_{{\rm SL}_2}$, we can construct the cylinder VOA as a subalgebra $\D_{{\rm SL}_2,-2}\subset\W_{{\rm SL}_2}\otimes \Pi_{{\rm SL}_2}$. This involves first defining a new $V^{\kappa_c}(\slf_2)$ subalgebra inside $\W_{{\rm SL}_2}\otimes \Pi_{{\rm SL}_2}$---now with generators $e_2,h_2,f_2$---which commutes with the previous affine current algebra. 
We further define generators
\begin{equation}
\begin{split}
	X^{a+} &\coloneqq X^ae_2^{\frac{1}{2}}~,\\
	X^{a-} &\coloneqq f_{2,(0)} X^{a + }~.
\end{split}
\end{equation}
Replacing the $X^a$ by their free field expressions from the previous subsection, we have
\begin{equation}
\begin{split}
	X^{++}	& = \gamma^{-\frac12}e_1^{\frac12}e_2^{\frac12}~, \\
	X^{-+}	& =-(b+\partial\delta_1)\gamma^{-\frac12}e_1^{-\frac12}e_2^{\frac12}~, \\
 	X^{--}	& = \big((b+\partial\delta_1)(b+\partial\delta_2)+ \big)\gamma^{-\frac12} e_1^{-\frac12}e_2^{-\frac12}~.
\end{split}
\end{equation}
The expressions for $X^{-+}$ and $X^{+-}$ are related by performing the permutation $1\leftrightarrow2$ on all subscripts. This arises from the manifest permutation symmetry of our construction.

The generators $X^{ab}$ live in the bifundamental representation of $(\slf_2)_1\times(\slf_2)_2$, with action given by the zero modes of $e_i$, $h_i$, $f_i$. The OPEs between the $X^{ab}$ are regular; the only non-trivial OPEs involve $e_i$, $h_i$, $f_i$.

The (central) null ideal in the cap is now lifted to
\begin{equation}
	\begin{vmatrix}
	X^{++} & X^{+-} \\
	X^{-+} & X^{--}
	\end{vmatrix}
	= X^{++}X^{--} - X^{+-}X^{-+} \overset{!}{=} \id~,
\end{equation}
and the commutative subalgebra generated by the $X^{ab}$ is identified with the commutative vertex subalgebra $\OO(J_{\infty}{\rm SL}_2)$,~\ie, the chiralisation of $\OO({\rm SL}_2)$ inside $\D_{{\rm SL}_2,-2}$. The $X^{ab}$ along with the currents, $e_1,h_1,f_1$, are strong generators of $\D_{{\rm SL}_2,-2}$.

The conformal vector is again composite, now being given by
\begin{equation}
	T_2 = J^{ab}\partial X_{a}\,\! ^cX_{cb} + \partial^2 X^{ab}X_{ab} +2 \partial X^{ab}\partial X_{ab}~.
\end{equation}
Replacing all strong generators with their free field expressions, this reduces exactly to the expression for the canonical $T_2$ in terms of lattice bosons with background charges.

\subsubsection{\label{subsubsec:three_punctures}Three punctures}

Now we reach the traditional starting point of the free field realisation of~\cite{Beem:2019tfp}, the three punctured sphere. The vertex algebra $\V_{{\rm SL}_2,3}$ is isomorphic to the symplectic bosons associated to the $\mathbf{2}\otimes\mathbf{2}\otimes\mathbf{2}$ representation of $(\slf_2)_1\times(\slf_2)_2\times(\slf_2)_3$. Repeating the same steps that are by now becoming familiar, we define a $V^{\kappa_c}(\slf_2)$ subalgebra inside $\D_{{\rm SL}_2,-2}\otimes \Pi_{{\rm SL}_2}$, generated by $e_3$, $h_3$, $f_3$. We further define trifundamental generators
\begin{equation}
	X^{ab+}\coloneqq X^{ab}e_3^{\frac12}~,\quad X^{ab-}\coloneqq f_{3,(0)}X^{ab+}~,
\end{equation}
In terms of the full free field realisation, these generators are given by:
\begin{equation}
\begin{split}
	 X^{+++} &= \gamma^{-\frac12}e_1^\frac12e_2^\frac12e_3^\frac12~,\\
	X^{-++} &=-(b+\partial\delta_1) \gamma^{-\frac12}e_1^{-\frac12}e_2^\frac12e_3^\frac12~,\\
	X^{--+} &= \big((b+\partial\delta_1)(b+\partial\delta_2) + \gamma\big)\gamma^{-\frac12}e_1^{-\frac12}e_2^{-\frac12}e_3^\frac12~,\\
	X^{---} &=- \big((b+\partial\delta_1)(b+\partial\delta_2)(b+\partial\delta_3)+ ( b+ \partial \delta_1+\partial \delta_2+\partial \delta_3)\gamma-\partial\gamma\big)\gamma^{-\frac12}e_1^{-\frac12}e_2^{-\frac12}e_3^{-\frac12}~,\\
\end{split}
\end{equation}
where the other generators are related to the ones displayed by permutations of the lattice variables. The OPEs can be computed direclty, and now take the (expected) exceptionally simple form,
\begin{equation}
	X^{abc}(z)X^{a'b'c'}(w) \sim \frac{\epsilon^{aa'}\epsilon^{bb'}\epsilon^{cc'}}{z-w}~.
\end{equation}
There are no non-trivial null relations amongst the $X^{abc}$, but the relations in the cylinder now lift to relations identifying the various current subalgebras as composites of the $X^{abc}$. In particular, we now have
\begin{equation}
\begin{split}
	J^{ab}_1 = \frac{1}{2} X^{ai_1j_1}X^b_{i_1j_1}~,\\
	J^{ab}_2 = \frac{1}{2} X^{i_1aj_1}X_{i_1\phantom{b}j_1}^{\phantom{i_1}b}~,\\
	J^{ab}_3 = \frac{1}{2} X^{i_1j_1a}X_{i_1j_1}^{\phantom{i_1j_1}b}~.
\end{split}
\end{equation}
The stress tensor $T_3$ is just the standard symplectic boson stress tensor, which again reduces to the canonical lattice conformal vector with background charges.

\subsubsection{\label{subsubsec:four_punctures}Four Punctures}

The free field realisation of $\V_{{\rm SL}_2,4}$ inside $\V_{{\rm SL}_2,3}\otimes \Pi_{{\rm SL}_2}$ can be found in~\cite{Beem:2019tfp}, but we will revisit the computation in the present context. We again realise a new $V^{\kappa_c}(\slf_2)$ subalgebra inside $\V_{{\rm SL}_2,3}\otimes \Pi_{{\rm SL}_2}$, and as above, adopt a basis for the currents $J^{ab}_4$.

The quadrifundamental generators take the usual form,
\begin{equation}
	X^{a_1a_2a_3+} \coloneqq X^{a_1a_2a_3}e_4^{\frac{1}{2}}~,\quad X^{a_1a_2a_3-}\coloneqq f_{4,(0)}X^{a_1a_2a_3+}~.
\end{equation}
which can be expressed entirely in terms of free fields as follows
\begin{equation}
\begin{split}
	X^{++++} &= \gamma^{-\frac12}e_1^\frac12e_2^\frac12e_3^\frac12e_4^\frac12~,\\
	X^{-+++} &=-(b+\partial\delta_1) \gamma^{-\frac12}e_1^{-\frac12}e_2^\frac12e_3^\frac12e_4^\frac12~,\\
	X^{--++} &= \big((b+\partial\delta_1)(b+\partial\delta_2) + \gamma\big)\gamma^{-\frac12}e_1^{-\frac12}e_2^{-\frac12}e_3^\frac12e_4^\frac12~,\\
	X^{---+} &=- \big((b+\partial\delta_1)(b+\partial\delta_2)(b+\partial\delta_3)+ ( b+ \partial \delta_1+\partial \delta_2+\partial \delta_3)\gamma - \partial\gamma\big)\gamma^{-\frac12}e_1^{-\frac12}e_2^{-\frac12}e_3^{-\frac12}e_4^\frac12~,\\
	X^{----} &= \bigg(\prod_{i=1}^4(b+\partial \delta_i) + \sum_{i<j}^{4}(b+\partial\delta_i)(b+\partial\delta_j)\gamma  -2(b+\partial\delta_1+\partial\delta_2+\partial\delta_3+\partial\delta_4)(b+\partial \gamma^{\frac12} \gamma^{-\frac12})\gamma \\
	& -2(bb + \partial b)\gamma +b\partial\gamma + \gamma^2+\partial^2\gamma\bigg)\gamma^{-\frac12}e_1^{-\frac12}e_2^{-\frac12}e_3^{-\frac12}e_4^{-\frac12}~.
\end{split}
\end{equation}
Again one finds free-field expressions for the unlisted generators by applying the appropriate permutation on subscripts. The OPE between the generators is
\begin{equation}
	X^{a_1a_2a_3a_4}(z) X^{b_1b_2b_3b_4}(w) \sim \frac{2\prod_{i=1}^{4} \epsilon^{a_ib_i}}{(z-w)^2} + \frac{\sum_{i=1}^4 \prod_{j\neq i} \epsilon^{a_jb_j} J^{(a_ib_i)}(w)}{z-w}~,
\end{equation}
and taken together with the three sets of $V^{\kappa_c}(\slf_2)$ currents these are strong generators of $L_{-2}(\mathfrak{d}_4)$, the simple quotient of $V^{-2}(\mathfrak{d}_4)$.
The null ideals of this vertex algebra and their relation to Joseph's minimal realisations \cite{Joseph:1974min} were investigated in \cite{Beem:2019tfp}.

\subsubsection{\label{subsubsec:five_punctures}Five Punctures}

Continuing further, we come to a realisation of $\V_{{\rm SL}_2,5}$. We define the usual $e_5,h_5,f_5$ currents and the generators $X^{a_1a_2a_3a_4a_5}$ as before. In terms of free-fields, the expressions are somewhat burdensome to write down. (Additionally, it is only the state $X^{-----}$ that is novel. All other states can be constructed (by appealing to symmetry) from the free-field realisations we have already constructed.)

The nontrivial OPEs are the OPEs of the pentafundamental generators $X^{a_1a_2a_3a_4a_5}$ amongst themselves, and at this stage the OPEs are very intricate and so to keep the expressions to a manageable length we will suppress all conformal descendants,~\ie\ we only show contributions from conformal primaries in the OPE. We then have
\begin{equation}
\begin{split}
	X^{a_1a_2a_3a_4a_5}(z)X^{b_1b_2b_3b_4b_5}(w) &\sim \frac{4\prod_{i=1}^5 \epsilon^{a_ib_i} }{(z-w)^3} +  \frac{2\sum_{i=1}^5 \prod_{j\neq i}^5\epsilon^{a_jb_j}J_i^{(a_ib_i)}(w)}{(z-w)^2}+ \\
	&\frac{1}{z-w}\bigg(\sum_{i<j}^5\prod_{k\neq i,j}^5\epsilon^{a_kb_k} (J_i^{(a_ib_i)}J_j^{(a_jb_j)})(w) - 2\prod_{i=1}^5 \epsilon^{a_ib_i} (S+\tfrac{1}{4} T_5)(w)\bigg)~.
\end{split}
\end{equation}
Here, we have the appearance of a new primary $S+\frac{1}{4}T_5$. In general, one can show that the state $S-\frac{6}{c_s}$ is a primary for $T_s$, with the appropriate value of the central charge.

Finding the exact expression for the generators of the null ideal is now somewhat involved, though largely a book-keeping exercise using the free field expressions. We will not reproduce them here or in the next example.

\subsubsection{\label{subsubsec:six_punctures}Six Punctures}

We conclude with the realisation of $\V_{\af_1,6}$, which has strong generators $(e_i,h_i,f_i)$ for $i=1,2,\dots,6$, the hexafundamental $X^{a_1a_2a_3a_4a_5a_6}$, and the stress tensor $T_6$. The nontrivial OPE is the $X\times X$ self-OPE of the hexafundamental, which takes the form
\begin{equation}
\begin{split}
	X^{a_1a_2a_3a_4a_5a_6}(z)X^{b_1b_2b_3b_4b_5b_6}(w) &\sim \frac{8\prod_{i=1}^6\epsilon^{a_ib_i}}{(z-w)^4} +4 \frac{\sum_i \prod_{j\neq i} \epsilon^{a_jb_j} J^{(a_ib_i)(w)}}{(z-w)^3} +\\
	&\frac{1}{(z-w)^2}\bigg(2\sum_{i<j}\prod_{k\neq i,j}^6 \epsilon^{a_kb_k} (J^{(a_ib_i)}J^{(a_jb_j)})(w) -6 \prod_{i=1}^6 \epsilon^{a_ib_i} (S+\tfrac{3}{17}T_6)\bigg) +\\
	&\frac{1}{(z-w)}\bigg(\sum_{i<j<k} \prod_{l\neq i,j,k} \epsilon^{a_lb_l}(J^{(a_ib_i)}J^{(a_jb_j)}J^{(a_kb_k)})\\
	&-2\sum_{i}^6\prod_{k\neq i}\epsilon^{a_kb_k}\bigg((S+\tfrac{3}{17}T_6)J^{(a_ib_i)}
	+  A^{a_ib_i} \bigg)
\end{split}
\end{equation}
where there appear novel primary operators, $A^{a_ib_i}$, given by
\begin{equation}
	A_i^{a_ib_i} = \bigg((J_i)^{a_i}_{c}(\partial J_i)^{c b_i} -\frac{1}{8}T(J_i)^{a_i b_i} -\frac{5}{48} (\partial^2 J_i)^{a_ib_i}\bigg)~.
\end{equation}
It may be noteworthy that there seems to be some numerology involved in the residues appearing in the OPE. This is not entirely surprising, since the OPEs in the free field realisation are computed by Wick contractions---which are manifestly of a combinatorial nature. It would be interesting to look for a more elegant combinatorial description of these OPEs that makes the $n$-punctured generalisation immediate, but we will not pursue this for now.

\section{\label{sec:conclusions}Concluding remarks}

\subsection{Screening charges}

In many instances, free-field realisations of vertex algebras are accompanied by certain screening charges whose kernel recovers the subalgebra of interest (\cf\ \cite{Adamovic2019, Adamovic2003115, deBoer:1996ti, Frenkel:2007}). One can then often think of a free field realisation of a vertex algebra as a resolution whose first differential is the screening charge.

In the free field realisation of Wakimoto modules, for example, such screening charges exist and admit a geometric interpretation~\cite{Frenkel:geo2008}. Namely, they arise from the Cousin--Grothendieck spectral sequence associated to the Bruhat stratification on the flag variety; the free-field realisation is a chiral analogue of the BGG resolution for $\gf$-modules.

A slight modification of the arguments in \textit{loc.\ cit.} can be applied to our case. First, we fix some partial ordering $w_0\ge w_1\ge w_2\ge\dots w_N$ on the Weyl group $W$, with $w_0$ the longest element as before. Then $G$ has a stratification
\begin{equation}
	G=\underset{i}{\bigsqcup}\, G^i~,
\end{equation}
where $G^i = B w_i B$ and $G^0\equiv G^*$. Let $j_i:G^i\hookrightarrow G$ denote the immersions of the strata. If $M$ is a $D$-module on $G$, then the Cousin--Grothendieck spectral sequence tells us that $M$ has a resolution
\begin{equation}
	0\rightarrow M\hookrightarrow j_{0*}j_0^!M\xrightarrow{\delta_0} j_{1*}j_1^!M\xrightarrow{\delta_1}\dots\xrightarrow{\delta_{N-1}} j_{N*}j_{N}^!M\rightarrow 0~.
\end{equation}
In particular, if $M=\DD_G$ is the sheaf of differential operators on $G$, then $j_0^!\DD_G$ can be identified with the sheaf of differential operators on the big cell, $\DD_{G^*}$. From \cite[Theorem 5.2.]{Arakawa:tdo2011}, this lifts to a resolution  of the sheaf of (critical level) chiral differential operators on $G$, $\D_{G,\kappa_c}$. Taking global sections, we have a complex of vertex algebras $(C^\bullet,\delta_\bullet)$, with $C^0= \D_{G^*,\kappa_c}(G^*)$.

To construct a resolution of the chiral universal centraliser, we can perform the two-sided Drinfel'd--Sokolov reduction on $\D_G(G)$. We form the tri-complex $(C^{\bullet\bullet\bullet},\delta,d_L,d_R)$, where $d_{L}$ and $d_R$ are the BRST differentials computing the left and right DS reductions, respectively. Passing to $\delta$-cohomology and then $d_L$ and $d_R$ cohomology will, by construction, recover the chiral universal centraliser in cohomological degree zero.

Note that the DS reduction subcomplex is acyclic away from degree zero (see \cite{Frenkel:lan2006} for more detail), so a spectral sequence argument implies that we can algernatively compute the $d_L$ and $d_R$ cohomology first before computing the $\delta$-cohomology and still realise the chiral universal centraliser in degree zero. Since $G^*\cong N\times T\times N$, the two-sided DS reduction gives $\D_T(T)$ in degree zero. The resulting complex, $(\tilde{C}^{\bullet},\delta)$ has zeroth term $\tilde{C}^0 = \D_T(T)$, with
\begin{equation}
	H^0(\tilde{C},\delta)\equiv\mathrm{ker} ~ \delta_0 \cong \mathbf{I}^{cl}_G~.
\end{equation}
Of course, $\D_T(T)$ is isomorphic to the free field vertex algebra $\Pi_G$, so the above presentation suggests a reinterpretation of our free-field realisation along with some kind of screening charge $\delta_0$, \ie, the restriction of the Cousin--Grothendieck differential. However, the above analysis at this level of abstraction does not lead to an explicit expression for a screening charge in the usual sense. We leave a more detailed and concrete analysis for future work.

\subsection{Monopoles and Abelianisation}

Our realisation of the universal centraliser in terms of functions on the Kostant--Toda lattice is somewhat evocative of Abelianisation-type constructions, in that $\KT_G\cong T^*(\IC^\times)^{\rk\,\gf}\cong T^*\check{T}$ and $T^\ast\IC^\times$ may be identified as the Coulomb branch of pure $U(1)$ $\mathcal{N}=4$ gauge theory in three dimensions. Indeed, the construction of Braverman--Finkelberg--Nakajima for the Coulomb branches of gauge theories comes with an Abelianisation map. Let $\check{T}$ be the Langlands dual of the maximal torus $T$, and let $\mathrm{Gr}_{\check{T}}= \check{T}(\KK)/\check{T}(\OO)$ be the affine Grassmannian. Since $\check{T}$ is abelian, the affine Grassmannian is just a collection of points, parameterised by the coweights of $\check{T}$. There is a closed embedding $i:\mathrm{Gr}_{\check{T}}\hookrightarrow \mathrm{Gr}_{\check{G}}$, which gives rise to a dominant morphism~\cite{Braverman:2016wma},
\begin{equation}
	\Zf_G\cong \mathrm{Spec}~H^{\check{G}(\OO)}_\bullet(\mathrm{Gr}_{\check{G}})\rightarrow\mathrm{Spec}~H^{\check{T}(\OO)}_\bullet(\mathrm{Gr}_{\check{T}})/\!\!/W \cong (\mathfrak{t}\times \check{T})/\!\!/W~.
\end{equation}
This is actually an isomorphism away from the closed subspace $\mathfrak{t}^\circ = \bigcup_{\alpha\in\Delta} \mathfrak{t}_\alpha$, where $\mathfrak{t}_\alpha$ is the root hyperplane $\alpha=0$. As a result, $\Zf_G$ and $(\mathfrak{t}\times \check{T})/\!\!/W$ are birational. In~\cite{Bezrukavnikov:2003}, this was used to construct the universal centraliser in terms of a blow up of $(\mathfrak{t}\times \check{T})/\!\!/W$.

In the physics literature, $(\mathfrak{t}\times \check{T})/\!\!/W$ is just the classical description of the Coulomb branch, which receives quantum corrections from monopole operators~\cite{Bullimore:2015lsa}. It is tempting, then, to attempt to interpret our free field realisations (and the corresponding finite-dimensional constructions) as an instance of this Abelianisation paradigm. However, such an interpretation is not immediately forthcoming, as in particular for us the Poisson structure on the ``Abelian'' space ($\KT_G$) is \emph{not corrected}, by monopole operators or otherwise.

A construction of Darboux coordinates by Kato has appeared in \cite{Kato:2020jup}, however this is for the $K$-theoretic Coulomb branch of pure gauge theory, which is not the universal centraliser.

Nevertheless, it would be interesting to investigate the possibility of a link between the two pictures. This could help frame our construction in terms that are more in keeping with both physics and geometric representation theory, as well as offering the possibility for generalisation away from the case of pure gauge theory.

\subsection{Open questions}

In addition to the various conjectures outlined in the main body of the papers, there are a number of issues that remain open to which we hope someone (ourselves or otherwise) will return.

First of all, in starting our construction of chiral algebras of class $\SS$ with the universal centraliser (of type ${\mathrm SL}_2$), it is clear that the vertex algebras $\V_{{\mathrm SL}_2,s}$ are sensitive to the choice of the Lie \emph{group}. Indeed, one could replicate the construction, starting with $\mathbf{I}_{{\mathrm PGL}_2}$, and recover a series of vertex algebras $\V_{{\mathrm PGL}_2,s}$ which are not isomorphic to $\V_{{\mathrm SL}_2,s}$. It would seem natural to expect the $\V_{{\mathrm PGL}_2,s}$ to be finite quotients of the $\V_{{\mathrm SL}_2,s}$ with respect to the $\IZ/2\IZ$ action of the centre. However, it is unclear whether passing to the $\IZ/2\IZ$ invariants in $\V_{SL_2,s}$ commutes with passing to the Zhu's $C_2$ algebra.

The most natural progression is to continue the free-field program for $\V_{G,s}$ to groups of higher rank. For ${\rm SL}_3$, we already have a concrete presentation for the chiral universal centraliser and one could hope in this manner to obtain a presentation of the equivariant affine $\WW$-algebra as a strongly generated vertex algebra. However, the calculations for ${\rm SL}_2$ already become unwieldy at six punctures. To establish the free field realisation on stronger footing we would need to develop a more abstract formalism---rather than the algorithmic approach we have taken so far. We hope to develop such a formalism in upcoming work \cite{BBS:unpub}.

The free-field realisations naturally give rise to a construction of a class of vertex algebra modules. Namely, there are natural restriction functors arising from $\Pi_{{\rm SL}_2}^{\otimes s}$ modules to modules of $\V_{{\rm SL}_2,s}$. There is also, of course, a web of such restrictions, \eg, from $\Pi_{{\rm SL}_2}\otimes \V_{{\rm SL}_2,s}$ to $\V_{{\rm SL}_2,s+1}$ modules. One would hope that the properties of these functors give some partial characterisation of the modules of the chiral algebras of class $\SS$, which are of interest in both four dimensional and three dimensional physics.

\section*{Acknowledgments}
The authors are grateful to Tomoyuki Arakawa, Dylan Butson, Carlo Meneghelli, and Leonardo Rastelli for useful discussions and collaborations on related topics. Christopher Beem's work is supported in part by grant \#494786 from the Simons Foundation, by ERC Consolidator Grant \#864828 ``Algebraic Foundations of Supersymmetric Quantum Field Theory'' (SCFTAlg), and by the STFC consolidated grant ST/T000864/1. Sujay Nair is supported in part by EPSRC studentship \#2272671.


\appendix

\section{\label{app:proof_Poisson}Proof of Theorem~\ref{thm:Poisson_generation}}

In this appendix, we prove Theorem \ref{thm:Poisson_generation} regarding the Poisson structure of the universal centraliser for $G={\rm SL}_N$. Namely, we show the following.

\begin{sthm}
The Poisson algebra $I_{{\rm SL}_N}$ is Poisson generated by $(P_i)_{i=1}^{\rk\,\gf}$ and the generator $X=g_{N1}$.
\end{sthm}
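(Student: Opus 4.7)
The plan is to leverage the finite-dimensional free field realisation $\varphi:\KT_{{\rm SL}_N}\to \Zf_{{\rm SL}_N}$ constructed in Section~\ref{sec:kostant_toda_lattice}, which is a symplectic open immersion of Poisson $S_f$-schemes. Because $\Zf_{{\rm SL}_N}$ is irreducible and $\varphi$ is dominant, the pullback $\varphi^\#:I_{{\rm SL}_N}\hookrightarrow \IC[\KT_{{\rm SL}_N}]$ is an injective homomorphism of Poisson algebras. Thus a statement of Poisson generation can be transported from $\Zf_{{\rm SL}_N}$ to $\KT_{{\rm SL}_N}$ and checked there, where the coordinate ring is an honest (half-)lattice algebra with Poisson brackets $\{b_i,\tilde\gamma_j^{\pm1}\}=\pm\tilde a_{ij}\tilde\gamma_j^{\pm1}$ and everything else vanishing.

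By Proposition~\ref{prop:ring_generators_sln}, $I_{{\rm SL}_N}$ is generated as a ring by the fundamental invariants $(P_i)_{i=1}^{N-1}$ together with the bottom-row coordinates $(g_{Nj})_{j=1}^{N}$, so the theorem reduces to showing $g_{N2},\ldots,g_{NN}$ lie in the Poisson subalgebra of $I_{{\rm SL}_N}$ generated by $(P_i)_{i=1}^{N-1}$ and $X=g_{N1}$. We proceed by induction on $j$, with $j=1$ being tautological. For the inductive step, we compute $\{P_1,g_{Nj}\}$ where $P_1=\tfrac12\Tr(y^2)$; in free field coordinates this is a sum of quadratic terms in the $b_i$ together with a linear combination of the $\gamma_i$, and it has the lowest possible degree among the $P_k$ in the contracting $\IC^\ast$ grading.

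The central technical claim, to be verified by direct computation using the explicit form of $\lambda$ in Section~\ref{subsec:algorithm}, is that
\begin{equation}\nonumber
	\{P_1,\varphi^\#(g_{Nj})\}= c_j\,\varphi^\#(g_{N,j+1}) + R_j~,\qquad c_j\in\IC^\times~,
\end{equation}
where $R_j$ lies in the Poisson subalgebra generated by $(P_i)_{i=1}^{N-1}$ and $g_{N,1},\ldots,g_{N,j}$ (the $\IC^\ast$-weight controlling which $g_{N,k}$ can appear). Granting this claim, the induction goes through and produces all $g_{N,j+1}$ from $X$, $P_1$, and inductively constructed quantities, completing the proof. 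The SL${}_2$ and SL${}_3$ cases of Section~\ref{subsec:eg_rank_one} and Section~\ref{subsec:eq_rank_two}, in which $\{S,X\}=-Y$ and $\{S,Y\}=-Z-\tfrac12SX$, are direct illustrations of this leading-term pattern.

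The main obstacle is the verification of the displayed leading-term claim for arbitrary $N$. Rather than attempt a closed-form expression for $\varphi^\#(g_{Nj})$ and compute its bracket with $P_1$ head-on, the cleanest route is to exploit the structure of $\lambda(y)=\widetilde\rho_N(y)^{-1}\widetilde w_0(\pi_T(y))^{-1}\rho_N(y)$ in \eqref{eq:lambda_definition}: the bottom row of $\lambda(y)$ is, by the tridiagonal shape of $\rho_N$ under the Kostant map, a triangular polynomial system in the $b_i$ and $\gamma_i$ in which $g_{Nj}$ involves exactly $j-1$ many $b$-factors on top of the fixed antidiagonal prefactor $\tilde\gamma_1^{-1}$. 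Bracketing with $P_1$ raises the $b$-count by one (from the $b^2$ terms of $P_1$) or produces $\gamma$-dependent shifts that can be absorbed into $R_j$, and the coefficient $c_j$ is pinned down by the $b_1^{j}\tilde\gamma_1^{-1}$ contribution. Once this combinatorial statement is established, the inductive argument gives Poisson generation in $\IC[\KT_{{\rm SL}_N}]$, and hence in $I_{{\rm SL}_N}$ by injectivity of $\varphi^\#$.
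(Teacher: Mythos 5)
Your overall skeleton---reduce to the bottom row via Proposition~\ref{prop:ring_generators_sln}, then induct by bracketing with $P_1$ so that each bracket produces the next bottom-row entry plus a controlled remainder---is the same induction that the paper runs. The difference is the venue: you propose to verify the key identity downstairs in $\IC[\KT_{{\rm SL}_N}]$ using the explicit free-field expressions for $\lambda$, whereas the paper works upstairs in the ambient space $\gf_L^*\times\gf_R^*\times G$ of the two-sided Kostant--Whittaker reduction, using the section $\sigma$ onto the $N_L\times N_R$-invariants. The problem is that your ``central technical claim,'' $\{P_1,\varphi^\#(g_{Nj})\}=c_j\,\varphi^\#(g_{N,j+1})+R_j$ with $c_j\neq0$ and $R_j$ in the Poisson subalgebra generated by $(P_i)$ and $g_{N1},\dots,g_{Nj}$, is precisely where all the work lies, and it is not established by the degree/weight counting you offer. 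Counting $b$-factors does identify a candidate leading term, but it does not show (i) that its coefficient is a nonzero \emph{constant} rather than something that could vanish or fail to be invertible, and more importantly (ii) that the remainder is expressible through the fundamental invariants and the \emph{lower} $g_{Nk}$ only. The $\IC^\ast$-weight does not enforce this: terms of the correct weight involving $g_{Nk}$ with $k>j+1$, or products of several $g$'s, are not excluded by weight alone, and if any such term appeared the induction would not close. Since the paper gives no closed-form expression for $\varphi^\#(g_{Nj})$ for general $N$ (the algorithm of Section~\ref{subsec:algorithm} is recursive), carrying out your ``direct computation'' would itself require proving a structural lemma about the Kostant map $\psi^{-1}$ and the bottom row of $\lambda$ that you have not supplied.

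For contrast, the paper's Lemma~\ref{lem:form_of_extension} obtains exactly this control for free from the reduction ideal: the extension $\sigma(g_{Ni})$ differs from $\widetilde{g}_{Ni}$ by terms whose coefficients $K_{ij}$ lie in the ideal $\II$ cutting out $S_f\times S_f\times G$, the sum of negative simple roots inside $\sigma(P_{1,R})$ acts as a lowering operator sending $\widetilde{g}_{Ni}\mapsto\widetilde{g}_{N,i+1}$ modulo $\II$ (equation~\eqref{eq:P1_raising}), and the bracket $\{\sigma(P_{1,R}),K_{ij}\}$ decomposes as a polynomial in the restricted invariants plus an element of $\II$. Restricting to the slice then yields $\{P_{1},g_{Ni}\}=g_{N,i+1}+\sum_{j\le i-1}F_{ij}\,g_{Nj}$ with $F_{ij}$ polynomials in the $P_k$---a remainder that is \emph{linear} in the lower $g$'s with invariant coefficients, which is what makes the induction go through. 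To complete your version you would need to prove the analogous structural statement for the explicit KT expressions; as written, the crux is asserted rather than proven.
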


\vspace{-11pt}

\begin{proof}
Throughout this appendix we adopt the notation $G={\rm SL}_N$ and $\gf=\slf_N$. The proof makes extensive use of the realisation of $\Zf_{G}$ as the two-sided Kostant--Whittaker reduction of $T^*G$. As such, we start with some preliminaries and also recall certain technical details regarding Poisson (coisotropic) reduction in the present setting.

\subsection{\label{subapp:preliminaries}Preliminaries}

Consider the triple $\gf^*_L\times \gf^*_R\times G$, where the subscripts simply distinguish the two $\gf^*$ factors. Choose generators $\widetilde{g}_{ij}$---conjugate to the matrix elements of a $G$ matrix---for $\IC[G]\subset\IC[\gf_L^*\times\gf^*_R\times G]$. This space is equipped with a Poisson structure by letting linear functions in $\IC[\gf^*_{R/L}]\cong \rm Sym~\gf$ act according to the left (right) $\gf$ action on the $\widetilde{g}_{ij}$, which then organise themselves into the bifundamental representation $V_{F}\otimes V_{F}^*$.

The cotangent bundle $T^\ast G$ is embedded in $\gf^*_L\times \gf^*_R \times G$ as a closed, Poisson subvariety,
\begin{equation}
\label{eq:cotangent_subvariety}
	T^*G = \{(x,y,g)\in \gf^*_L\times \gf^*_R \times G~|~ Ad^*_g x = y\}~.
\end{equation}
Let $\JJ$ be the corresponding defining ideal for $T^*G$; it is linear in the generators of $\IC[\gf^*_{L/R}]$ and is degree $N$ in the generators, $\widetilde{g}_{ij}$.

The moment maps for left and right $G$ actions on $T^*G$ are inherited from the projections $\pi_{L/R}:\gf^*_L\times \gf^*_R\times G\rightarrow \gf^\ast_{L/R}$, respectively, which are the $\gf_L$ and $\gf_R$ moment maps of the ambient space. We will study the two-sided KW reduction of this ambient space and carefully analyse the fate of the linear coordinate functions $\widetilde{g}_{Ni}$ in particular.

Let $\XX\colonequals(\chi+\mathfrak{b}^*)_L\times (\chi+\mathfrak{b}^*)_R\times G = \pi_L^{-1}(\chi+\mathfrak{b}^*)\cap \pi_R^{-1}(\chi+\mathfrak{b})$, which is a coisotropic subvariety of our ambient space. The two-sided KW reduction is then equivalent to the quotient $\WW\coloneqq \XX/\!\!/N_L\times N_R$. Let $\pi_{LR}$ be the corresponding projection $\pi_{LR}:\XX\rightarrow \WW$.

To realise the two-sided KW reduction of $T^*G$, we first define its intersection $\YY\colonequals T^\ast G \cap \XX$. The reduction in question is then given by $\YY/\!\!/N_L\times N_R\cong\Zf_G$. Summarising, we have the following base change,\!\footnote{By a slight abuse of notation we also denote the natural projection $\YY\rightarrow \YY/\!\!/N_L\times N_R$ by $\pi_{LR}$.}
\begin{equation}
\label{eq:base_change}
	\begin{tikzcd}
	T^*G\ar[d] 						& \YY \ar[l] \ar[d] \ar[r, "\pi_{L R}"] 	& \Zf_G \ar[d]\\
	\gf^*_L\times\gf^*_R \times G 	& \XX \ar[l,"\iota",swap] \ar[r, "\pi_{L R}"] 	& \WW
	\end{tikzcd}~.
\end{equation}
Each square is Cartesian; they both describe $\YY$ as a fibre product. The arrows directed down and to the left in the diagram are all closed embeddings. Moreover, $T^*G\rightarrow \gf_L^*\times \gf_R^*\times G$ is Poisson, so $\Zf_G\rightarrow \WW$ is also Poisson by functoriality.

The closed subvariety $S_f\times S_f\times G\subset (\chi+\mathfrak{b}^*)_L \times (\chi+\mathfrak{b}^*)_R\times G$ is a global slice to the $N_L\times N_R$ action on $\XX$, so $\WW$ can be identified with $ S_f\times S_f \times G$ as varieties. Letting $\II$ denote the defining ideal of $S_f\times S_f\times G$ inside $\XX$, the Poisson structure on $\WW\simeq S_f\times S_f \times G$ is then defined in the following (standard) manner.

For every $a,b\in \IC[\WW]$ we have canonical extensions $\pi_{LR}^\#(a), \pi_{LR}^\#(b)\in\IC[\XX]$. Choose any further extensions $A,B\in \IC[\gf^*_L\times \gf^*_R\times G]$. Then the Poisson bracket on $\WW$ is the unique bracket satisfying
\begin{equation}
	\pi_{LR}^\#(\{a,b\}_{\WW}) =  \iota^\#(\{A,B\}_{\gf_L^*\times\gf_R^*\times G})~.
\end{equation}
In the present case, things are considerably simpler than in general. Since $G\times S_f\times S_f$ is a slice, the following triangle commutes.
\begin{equation}
	\begin{tikzcd}
	\XX \ar[rr,"\pi_{LR}"]& & \WW\ar[dl,"\sim",leftrightarrow]&\\
	& \ar[ul, hookrightarrow] S_f\times S_f\times G &
	\end{tikzcd}
\end{equation}
On functions, this means that if $f\in \IC[\WW]$ is identified with to $f'\in \IC[S_f\times S_f\times G]$ under the isomorphism, then its extension $\pi^\#_{LR}(f)$ has the form $\pi^\#_{LR}(f) = f' + \II$. Consequently, determining the extension $\pi_{LR}^\#$ is equivalent to identifying an injective section
\begin{equation}
	\sigma:\IC[S_f\times S_f\times G]\cong\IC[\XX]/\II\rightarrow \IC[\XX]~,
\end{equation}
whose image is isomorphic to $\IC[\XX]^{N_L\times N_R}$.

We are free to choose any further extension to $\IC[\gf^*_L\times \gf^*_R\times G]$, and in particular we can choose these extensions to live in the Poisson subalgebra $\IC[\mathfrak{b}^*\times\mathfrak{b}^*\times G]^{N_L\times N_R}$ of $\IC[\gf^*\times \gf^*\times G]$, which is identified \textit{as a ring} with $\IC[\XX]^{N_L\times N_R}$, the target of $\sigma$.

Summing up, for $a,b\in\IC[S_f\times S_f\times G]$, their Poisson bracket can be computed as
\begin{equation}
	\sigma\big(\{a,b\}_{S_f\times S_f\times G}\big) = \{\sigma(a),\sigma(b)\}_{\gf^*_L\times \gf^*_R\times G}~,
\end{equation}
where we interpret $\sigma(a),\sigma(b)$ as elements of the Poisson subalgebra $\IC[\mathfrak{b}^*_L\times \mathfrak{b}^*_R\times G]^{N_L\times N_R}$ as described above. With this technical detail out of the way, we now move on to the proof of Theorem \ref{thm:Poisson_generation} itself.

\subsection{\label{subapp:reducting_coordinates}Reducing coordinates}

Let $\Phi$ be the set of positive roots of $\gf$; we want to introduce some coordinates on $\gf^*$. For $\mathfrak{t}^*$ we shall introduce the \textit{linear} duals, $(h_i)_{i=1}^{\rk\,\gf}$ in $\mathfrak{t}$ (as in the case of $\KT_\gf$). Now for $\nf^*$, if $(e_\alpha)_{\alpha\in\Phi}$ are a basis of $\nf$ then $\big((e_{\alpha},-)\big)_{\alpha\in\Phi}$ form a basis of $\nf^*$. Therefore, the \textit{linear} dual to $(e_{\alpha},-)$ is $e_{-\alpha}\in\nf_-$. We choose $(e_{-\alpha})_{\alpha\in\Phi}$ to be coordinates on $\nf^*$. The ideal $\II$ is generated by $h_{i,L}$, $h_{i,R}$ for $i=1,\ldots N$ along with a linearly independent basis of the differences,\!\footnote{As previously, we append subscripts $L$ and $R$ to distinguish between the generators of $\IC[(\chi+\mathfrak{b}^*)_{L/R}]$.}
\begin{equation}
	e_{-\alpha,L} - e_{-\beta,L}~, e_{-\alpha,R}- e_{-\beta,R}~,
\end{equation}
where $\alpha,\beta\in\Phi$ are two roots with the same height.

The generator $P_1\in \IC[S_f]$ is extended to $\IC[\chi+\mathfrak{b}^*]$ as the restricted quadratic fundamental invariant, so the associated left and right versions are realised according to
\begin{equation}
\label{eq:section_of_P1}
	\sigma(P_{1,L/R}) = \kappa^{ij}h_{i,L/R}h_{j,L/R} + \sum_{\alpha\in \Delta} e_{-\alpha,L/R}~,
\end{equation}
where we recall that $\Delta$ is the set of simple roots of $\gf$ and $\kappa^{ij}$ are the components of the Killing form.

Next consider the coordinate functions on $G$. We will use $g_{ij}$ for the generators of $\IC[G]\subset\IC[S_f\times S_f\times G]$, and continue to use $\widetilde{g}_{ij}$ for the generators of $\IC[G]\subset\XX$. The subalgebra generated by ``bottom row'' entries $(\widetilde{g}_{Ni})_{i=1}^{N}$ is $N_L$-invariant, while that generated by ``left column'' entries, $(\widetilde{g}_{i1})_{i=1}^N$ is $N_R$-invariant. In particular, $\widetilde{g}_{N1}$ is itself $N_L\times N_R$-invariant since it is the highest weight vector. So the function $g_{1N}$ on $S_f\times S_f\times G$ extends in an $N_L\times N_R$-invariant way to all of $\XX$, and $\sigma(g_{N1})=\widetilde{g}_{N1}$.

Consider now the function $\sigma(\{P_{1,R},g_{N1}\})\in \IC[\XX]$---the Poisson bracket is computed in $\IC[\gf^*_L\times \gf^*_R\times G]$ using \eqref{eq:section_of_P1}, which gives
\begin{equation}
\label{eq:extension_2N}
	\sigma\big(\{P_{1,R},g_{N1}\}\big) = \widetilde{g}_{N2} + 2\sum_{i,j=1}^{\rk\,\gf}\kappa^{ij}h_{i,R}a_i(h_j)g_{N1}~,
\end{equation}
where $a_i$ are the components of the highest weight of the fundamental representation of ${\rm SL}_N$. By construction, this Poisson bracket must lie in $\IC[\XX]^{N_L\times N_R}$, and so lies in the image of $\sigma$. Restricting the expression \eqref{eq:extension_2N} to $S_f\times S_f\times G$ returns $g_{N2}$. Since $\sigma$ is injective, this fixes $\sigma(g_{N2})$ to the above expression.

The sum of simple negative roots acts on the fundamental representation of ${\rm SL}_N$ as a lowering operator, Therefore, from \eqref{eq:section_of_P1}
\begin{equation}\label{eq:P1_raising}
	\{\sigma(P_{1,R}),\widetilde{g}_{Ni}\} = \widetilde{g}_{N,i+1} + c_ig_{N,i}~,
\end{equation}
for $i=1,\ldots,N-1$ where $c_i$ are linear in the $(h_j)_{j=1}^{N}$.

\begin{lem}\label{lem:form_of_extension}
For $i=1,\dots,N$, the extension of $g_{Ni}$ is of the form
\begin{equation}\label{eq:form_of_extension}
	\sigma(g_{Ni}) = \widetilde{g}_{N,i+1} + \sum_{j=1}^{i-1}K_{ij}\widetilde{g}_{Nj}~,
\end{equation}
where $K_{ij}$ are homogeneous, degree $j$ polynomials in the generators of $\II$.
\end{lem}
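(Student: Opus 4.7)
The plan is to proceed by induction on $i$. The engine of the induction is the identity
\begin{equation*}
\sigma(g_{N,i+1}) \;=\; \bigl\{\sigma(P_{1,R}),\, \sigma(g_{Ni})\bigr\}_{\gf^*_L\times\gf^*_R\times G}\Big|_{\XX}~,
\end{equation*}
which follows from the reduction formula for Poisson brackets together with the fact that on $\WW\cong S_f\times S_f\times G$ one has $\{P_{1,R},\,g_{Nk}\}_{\WW} = g_{N,k+1}$. This latter identity comes from restricting $\sigma(P_{1,R}) = \kappa^{ij} h_{i,R} h_{j,R} + \sum_{\alpha\in\Delta} e_{-\alpha,R}$ to $S_f\times S_f\times G$---where the $h_{i,R}$ vanish and the simple $e_{-\alpha,R}$ are identified---and observing that the surviving sum of simple lowerings sends the weight-$k$ vector of the defining representation to the weight-$(k{+}1)$ one.

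The base case $i=1$ is immediate: $\widetilde g_{N1}$ is the highest-weight vector in the $V_F\otimes V_F^*$ isotypic component of $\IC[G]$ and so is annihilated by $\mathfrak{n}_{+,L}$ and $\mathfrak{n}_{+,R}$, hence $N_L\times N_R$-invariant. Therefore $\sigma(g_{N1})=\widetilde g_{N1}$, matching the lemma with empty sum. For the inductive step, expand the Poisson bracket above using the inductive hypothesis for $\sigma(g_{Ni})$ together with the elementary bracket
\begin{equation*}
\bigl\{\sigma(P_{1,R}),\,\widetilde g_{Nk}\bigr\} \;=\; \widetilde g_{N,k+1} + c_k\,\widetilde g_{Nk}~,
\end{equation*}
where $c_k$ is linear in the $h_{l,R}$. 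Applying the Leibniz rule and collecting terms by the $\widetilde g_{Nj}$ produces an explicit recursion expressing each $K_{i+1,k}$ in terms of $K_{ik}$, $K_{i,k-1}$, $c_k$, and $\{\sigma(P_{1,R}),\,K_{ik}\}$; the seed $K_{21}$ displayed in \eqref{eq:extension_2N} is linear in the $h_{l,R}$ and gives the starting point.

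The remaining task---and the main technical obstacle---is to verify that this recursion preserves the asserted form, so that each $K_{i+1,k}$ is a homogeneous polynomial of the stated degree in the generators of $\II$. The ingredients are the Kirillov--Kostant brackets $\{h_{k,R},h_{l,R}\}=0$ and $\{e_{-\alpha,R},\,h_{l,R}\} \propto e_{-\alpha,R}$, which control how $\{\sigma(P_{1,R}),\cdot\}$ acts on polynomials in the $h_{l,R}$ and shifts them into terms involving the $e_{-\alpha,R}$. The delicate point is that the individual $e_{-\alpha,R}$ are not in $\II$---only the differences $e_{-\alpha,R}-e_{-\beta,R}$ are---so any non-$\II$ contributions arising in intermediate Leibniz expansions must cancel when the $K_{i+1,k}$ are reassembled. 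Such cancellations are ensured by the $N_L\times N_R$-invariance of $\sigma(g_{N,i+1})$ together with the fact that $\sigma(g_{N,i+1})-\widetilde g_{N,i+1}\in\II$, both of which are direct consequences of $\sigma$ being the section of $\IC[\XX]^{N_L\times N_R}\twoheadrightarrow\IC[\WW]$ described in Section~\ref{subapp:preliminaries}. Organising these cancellations in concert with the induction on degree is what pins down the precise polynomial structure claimed.
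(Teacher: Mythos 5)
There is a genuine gap, and it sits exactly at what you call the ``engine'' of the induction. The identity $\{P_{1,R},g_{Nk}\}_{\WW}=g_{N,k+1}$ is false for $k\geqslant 2$, and with it the claim $\sigma(g_{N,i+1})=\{\sigma(P_{1,R}),\sigma(g_{Ni})\}$. Your justification --- restrict $\sigma(P_{1,R})$ to the slice, where the $h_{i,R}$ vanish, and let the surviving lowering operator act --- is not a valid way to compute a reduced bracket: one must bracket the \emph{extensions} in the ambient space and only then restrict, and for $i\geqslant 2$ the extension $\sigma(g_{Ni})$ already carries the correction terms $K_{ij}\widetilde g_{Nj}$. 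Bracketing $\sigma(P_{1,R})$ against those corrections produces, via $\{e_{-\alpha,R},h_{l,R}\}\propto e_{-\alpha,R}$, terms that are \emph{not} in $\II$ and do \emph{not} vanish on the slice. You can see the failure concretely in the ${\rm SL}_3$ brackets of Section~\ref{subsec:eq_rank_two}: $\{S,Y\}=-Z-\tfrac12 SX$, not $-Z$. The correct statement, which is what the paper's proof establishes, is $\{P_{1,R},g_{Ni}\}_{\WW}=g_{N,i+1}+\sum_{j}F_{ij}\,g_{Nj}$ with $F_{ij}$ nonzero homogeneous polynomials in the restricted fundamental invariants; one then \emph{solves} for $\sigma(g_{N,i+1})$ using injectivity of $\sigma$, rather than reading it off directly from the bracket. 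This still yields Poisson generation because the $F_{ij}$ are built from the $P_{k,R}$, which are among the allowed generators.

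The same error reappears in your ``delicate point'' paragraph: the non-$\II$ contributions arising from $\{\sigma(P_{1,R}),K_{ij}\}$ do not cancel, and no amount of invariance forces them to. $N_L\times N_R$-invariance of the bracket only guarantees that it lies in the image of $\sigma$ --- i.e.\ that it is $\sigma$ of \emph{some} function on $S_f\times S_f\times G$ --- not that this function is $g_{N,i+1}$. Identifying which function it is (namely $g_{N,i+1}+\sum_j F_{ij}g_{Nj}$) is precisely the content of the inductive step, and it is the step your argument skips. The Leibniz-rule bookkeeping and the degree count for the $K_{ij}$ are otherwise in the right spirit, but as written the recursion you set up computes the wrong object.
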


\begin{proof}
We proceed by induction, noting that the case $i=1,2$ have been verified. Now suppose $i<N$. From \eqref{eq:section_of_P1} and the inductive assumption, we have
\begin{equation}
	\{\sigma(P_{1,R},\sigma(g_{Ni})\} = \widetilde{g}_{N,i+1}+c_ig_{Ni} +\sum_{j=1}^{i-1}K_{ij}\{P_{1,R},\widetilde{g}_{N,j}\} + \sum_{j=1}^{i-1}\{\sigma(P_{1,R}),K_{ij}\}\widetilde{g}_{N,j}~.
\end{equation}
The subleading terms are all in $\II$, except for the final term which we split as $\{\sigma(P_{1,R}),K_{ij}\} = F_{ij} + \II$, where $F_{ij}$ are homogeneous polynomials of degree $j$ in the $(P_{k,R})_{k=1}^N$ (restricted to $S_f$). Restricting to $S_f\times S_f\times G$, we have $\{ P_{1,R}, g_{Ni}\} = g_{N,i+1} + \sum_{j}^{i-1}F_{ij} g_{Nj}$.

Therefore, the extension of $g_{N,i+1}$ is
\begin{equation}
	\sigma(g_{N,i+1}) = \widetilde{g}_{N,i+1} + c_i g_{Ni} + \sum_{j=1}^{i-1}K_j\widetilde{g}_{N,j+1}~ + \sum_{j=1}^{i-1} K_jc_j\widetilde{g}_{Nj}~,
\end{equation}
which is in the desired form.
\end{proof}

In the proof of Lemma~\ref{lem:form_of_extension} we argued that for $i=1,\dots, N-1$, if $\sigma(g_{Ni})$ have the form of \eqref{eq:form_of_extension} then $\sigma(g_{N,i+1})$ can be written as a linear combination of $\{\sigma(P_{1,R}),\sigma(g_{N,i+1})\}$ and the $(F_{ij}g_{Nj})_{j=1}^{i-1}$. By restricting to $S_f\times S_f\times G$, we find an analogous result for the $(g_{Ni})_{i=1}^N$. By induction, this means that each $g_{Ni}$ are Poisson generated by $g_{N1}$ and the fundamental invariants, $(P_{i,R})_{i=1}^N$.

All that remains is to consider the restriction of these functions to $\Zf_G\subset S_f\times S_f\times G$. The embedding $\Zf_G\rightarrow S_f\times S_f\times G$ has image
\begin{equation}
	\{(s,s',g)\in S_f\times S_f\times G~|~ \mathrm{Ad}^*_g s = s'\}~.
\end{equation}
Since $S_f$ is transversal to the $G$-action, two distinct points on the slice cannot be $G$-conjugate. Therefore, this embedding factors through the closed embedding $\Zf_G\hookrightarrow G\times S_f$ with $G\times S_f\subset G\times S_f \times S_f$ realised as the diagonal. This amounts to enforcing the relation $P_{i,L}=P_{i,R}$ for each $i=1,\dots,N$. We know from Proposition~\ref{prop:ring_generators_sln} that the matrix entries of the bottom row of $G$ are ring generators. Thus, we conclude that $g_{N1}$ and $P_{i,R}$ Poisson generate $I_G$.
\end{proof}

\providecommand{\href}[2]{#2}\begingroup\raggedright\endgroup


\providecommand{\href}[2]{#2}\begingroup\raggedright\begin{thebibliography}{10}

\bibitem{Beem:2013sza}
C.~Beem, M.~Lemos, P.~Liendo, W.~Peelaers, L.~Rastelli, and B.~C. van Rees,
  ``{Infinite Chiral Symmetry in Four Dimensions},''
  \href{http://dx.doi.org/10.1007/s00220-014-2272-x}{{\em Commun. Math. Phys.}
  {\bfseries 336} no.~3, (2015) 1359--1433},
\href{http://arxiv.org/abs/1312.5344}{{\ttfamily arXiv:1312.5344 [hep-th]}}.

\bibitem{Beem:2014rza}
C.~Beem, W.~Peelaers, L.~Rastelli, and B.~C. van Rees, ``{Chiral algebras of
  class S},'' \href{http://dx.doi.org/10.1007/JHEP05(2015)020}{{\em JHEP}
  {\bfseries 05} (2015) 020}, \href{http://arxiv.org/abs/1408.6522}{{\ttfamily
  arXiv:1408.6522}}.

\bibitem{Gaiotto:2017euk}
D.~Gaiotto and M.~Rap\v{c}\'ak, ``{Vertex Algebras at the Corner},''
  \href{http://dx.doi.org/10.1007/JHEP01(2019)160}{{\em JHEP} {\bfseries 01}
  (2019) 160}, \href{http://arxiv.org/abs/1703.00982}{{\ttfamily
  arXiv:1703.00982 [hep-th]}}.

\bibitem{Costello:2018swh}
K.~Costello, T.~Creutzig, and D.~Gaiotto, ``{Higgs and Coulomb branches from
  vertex operator algebras},''
  \href{http://dx.doi.org/10.1007/JHEP03(2019)066}{{\em JHEP} {\bfseries 03}
  (2019) 066}, \href{http://arxiv.org/abs/1811.03958}{{\ttfamily
  arXiv:1811.03958 [hep-th]}}.

\bibitem{Costello:2019}
K.~Costello and D.~Gaiotto, ``Vertex operator algebras and 3d $ \mathcal{N} $ =
  4 gauge theories,'' \href{http://dx.doi.org/10.1007/JHEP05(2019)018}{{\em
  Journal of High Energy Physics} {\bfseries 2019} no.~5, (May, 2019) }.

\bibitem{Costello:2020ndc}
K.~Costello, T.~Dimofte, and D.~Gaiotto, ``{Boundary Chiral Algebras and
  Holomorphic Twists},'' \href{http://arxiv.org/abs/2005.00083}{{\ttfamily
  arXiv:2005.00083 [hep-th]}}.

\bibitem{Arakawa:2018egx}
T.~{Arakawa}, ``{Chiral algebras of class $\mathcal{S}$ and Moore-Tachikawa
  symplectic varieties},'' {\em arXiv e-prints} (Nov, 2018) ,
  \href{http://arxiv.org/abs/1811.01577}{{\ttfamily arXiv:1811.01577
  [math.RT]}}.

\bibitem{Creutzig:2017qyf}
T.~Creutzig, ``{W-algebras for Argyres-Douglas theories},''
  \href{http://arxiv.org/abs/1701.05926}{{\ttfamily arXiv:1701.05926
  [hep-th]}}.

\bibitem{Creutzig:2018lbc}
T.~Creutzig, ``{Logarithmic W-algebras and Argyres-Douglas theories at higher
  rank},'' \href{http://dx.doi.org/10.1007/JHEP11(2018)188}{{\em JHEP}
  {\bfseries 11} (2018) 188}, \href{http://arxiv.org/abs/1809.01725}{{\ttfamily
  arXiv:1809.01725 [hep-th]}}.

\bibitem{Xie:2019yds}
D.~Xie and W.~Yan, ``{W algebras, cosets and VOAs for 4d $ \mathcal{N} $ = 2
  SCFTs from M5 branes},''
  \href{http://dx.doi.org/10.1007/JHEP04(2021)076}{{\em JHEP} {\bfseries 04}
  (2021) 076}, \href{http://arxiv.org/abs/1902.02838}{{\ttfamily
  arXiv:1902.02838 [hep-th]}}.

\bibitem{Xie:2019vzr}
D.~Xie and W.~Yan, ``{4d $\mathcal{N}=2$ SCFTs and lisse W-algebras},''
  \href{http://dx.doi.org/10.1007/JHEP04(2021)271}{{\em JHEP} {\bfseries 04}
  (2021) 271}, \href{http://arxiv.org/abs/1910.02281}{{\ttfamily
  arXiv:1910.02281 [hep-th]}}.

\bibitem{Creutzig:2021ext}
T.~Creutzig, T.~Dimofte, N.~Garner, and N.~Geer, ``{A QFT for non-semisimple
  TQFT},'' \href{http://arxiv.org/abs/2112.01559}{{\ttfamily arXiv:2112.01559
  [hep-th]}}.

\bibitem{Beem:2022mde}
C.~Beem and S.~Nair, ``{Twisted chiral algebras of class $\mathcal{S}$ and
  mixed Feigin-Frenkel gluing},''
  \href{http://arxiv.org/abs/2201.13435}{{\ttfamily arXiv:2201.13435
  [hep-th]}}.

\bibitem{Beem:2017ooy}
C.~Beem and L.~Rastelli, ``{Vertex operator algebras, Higgs branches, and
  modular differential equations},''
  \href{http://dx.doi.org/10.1007/JHEP08(2018)114}{{\em JHEP} {\bfseries 08}
  (2018) 114}, \href{http://arxiv.org/abs/1707.07679}{{\ttfamily
  arXiv:1707.07679 [hep-th]}}.

\bibitem{Arakawa:2012c2}
T.~Arakawa, ``{A remark on the C2-cofiniteness condition on vertex algebras},''
  \href{https://doi.org/10.1007/s00209-010-0812-4}{{\em Mathematische
  Zeitschrift} {\bfseries 270} no.~1, (Feb, 2012) 559--575}.

\bibitem{Beem:2019tfp}
C.~Beem, C.~Meneghelli, and L.~Rastelli, ``{Free Field Realizations from the
  Higgs Branch},'' \href{http://dx.doi.org/10.1007/JHEP09(2019)058}{{\em JHEP}
  {\bfseries 09} (2019) 058}, \href{http://arxiv.org/abs/1903.07624}{{\ttfamily
  arXiv:1903.07624 [hep-th]}}.

\bibitem{Beem:2019snk}
C.~Beem, C.~Meneghelli, W.~Peelaers, and L.~Rastelli, ``{VOAs and rank-two
  instanton SCFTs},'' \href{http://dx.doi.org/10.1007/s00220-020-03746-9}{{\em
  Commun. Math. Phys.} {\bfseries 377} no.~3, (2020) 2553--2578},
  \href{http://arxiv.org/abs/1907.08629}{{\ttfamily arXiv:1907.08629
  [hep-th]}}.

\bibitem{Beem:2021jnm}
C.~Beem and C.~Meneghelli, ``{Geometric free field realization for the
  genus-two class S theory of type a1},''
  \href{http://dx.doi.org/10.1103/PhysRevD.104.065015}{{\em Phys. Rev. D}
  {\bfseries 104} no.~6, (2021) 065015},
  \href{http://arxiv.org/abs/2104.11668}{{\ttfamily arXiv:2104.11668
  [hep-th]}}.

\bibitem{Moore:2011ee}
G.~W. Moore and Y.~Tachikawa, ``{On 2d TQFTs whose values are holomorphic
  symplectic varieties},'' \href{http://dx.doi.org/10.1090/pspum/085/1379}{{\em
  Proc. Symp. Pure Math.} {\bfseries 85} (2012) 191--208},
  \href{http://arxiv.org/abs/1106.5698}{{\ttfamily arXiv:1106.5698 [hep-th]}}.

\bibitem{Kostant:1979195}
B.~Kostant, ``{The solution to a generalized Toda lattice and representation
  theory},''
  \href{https://www.sciencedirect.com/science/article/pii/0001870879900574}{{\em
  Advances in Mathematics} {\bfseries 34} no.~3, (1979) 195--338}.

\bibitem{Thielemans:1994er}
K.~Thielemans, {\em {An Algorithmic approach to operator product expansions, W
  algebras and W strings}}.
\newblock PhD thesis, Leuven U., 1994.
\newblock \href{http://arxiv.org/abs/hep-th/9506159}{{\ttfamily
  arXiv:hep-th/9506159}}.

\bibitem{Bezrukavnikov:2003}
R.~Bezrukavnikov, M.~Finkelberg, and I.~Mirković, ``{Equivariant
  (${K}$-)homology of affine {G}rassmannian and {T}oda lattice},''
  \href{http://dx.doi.org/10.1112/S0010437X04001228}{{\em Compositio
  Mathematica} {\bfseries 141} no.~3, (2005) 746–768},
  \href{http://arxiv.org/abs/math/0306413}{{\ttfamily arXiv:math/0306413
  [math.AG]}}.

\bibitem{Gaiotto:2011xs}
D.~Gaiotto, G.~W. Moore, and Y.~Tachikawa, ``{On 6d $\mathcal N=$(2,0) theory
  compactified on a Riemann surface with finite area},''
  \href{http://dx.doi.org/10.1093/ptep/pts047}{{\em PTEP} {\bfseries 2013}
  (2013) 013B03}, \href{http://arxiv.org/abs/1110.2657}{{\ttfamily
  arXiv:1110.2657 [hep-th]}}.

\bibitem{Ginzburg:2018Nil}
V.~Ginzburg, {\em Nil-Hecke Algebras and Whittaker D-Modules},
  \href{http://dx.doi.org/10.1007/978-3-030-02191-7_6}{pp.~137--184}.
\newblock Springer International Publishing, Cham, 2018.
\newblock \url{https://doi.org/10.1007/978-3-030-02191-7_6}.

\bibitem{Kostant:1978}
B.~Kostant, ``On {W}hittaker vectors and representation theory,''
  \href{https://doi.org/10.1007/BF01390249}{{\em Inventiones mathematicae}
  {\bfseries 48} no.~2, (Jun, 1978) 101--184}.

\bibitem{Ginzburg:2008HC}
V.~Ginzburg, ``{Harish-Chandra bimodules for quantized Slodowy slices},''
  \href{http://dx.doi.org/10.1090/S1088-4165-09-00355-0}{{\em Represent.
  Theory} {\bfseries 13} (07, 2008) }.

\bibitem{Losev1}
I.~V. Losev, ``Symplectic slices for actions of reductive groups,''
  \href{https://doi.org/10.1070/SM2006v197n02ABEH003754}{{\em Mat. Sb.}
  {\bfseries 197} no.~2, (2006) 75--86}.

\bibitem{Losev2}
I.~{Losev}, ``{Quantized symplectic actions and W -algebras},''
  \href{http://dx.doi.org/10.1090/S0894-0347-09-00648-1}{{\em Journal of the
  American Mathematical Society} {\bfseries 23} no.~1, (Jan, 2010) 35--59}.

\bibitem{Seiberg:1996nz}
N.~Seiberg and E.~Witten, ``{Gauge dynamics and compactification to
  three-dimensions},'' in {\em {Conference on the Mathematical Beauty of
  Physics (In Memory of C. Itzykson)}}, pp.~333--366.
\newblock 6, 1996.
\newblock \href{http://arxiv.org/abs/hep-th/9607163}{{\ttfamily
  arXiv:hep-th/9607163}}.

\bibitem{Argyres:1994xh}
P.~C. Argyres and A.~E. Faraggi, ``{The vacuum structure and spectrum of N=2
  supersymmetric SU(n) gauge theory},''
  \href{http://dx.doi.org/10.1103/PhysRevLett.74.3931}{{\em Phys. Rev. Lett.}
  {\bfseries 74} (1995) 3931--3934},
  \href{http://arxiv.org/abs/hep-th/9411057}{{\ttfamily arXiv:hep-th/9411057}}.

\bibitem{Martinec:1995by}
E.~J. Martinec and N.~P. Warner, ``{Integrable systems and supersymmetric gauge
  theory},'' \href{http://dx.doi.org/10.1016/0550-3213(95)00588-9}{{\em Nucl.
  Phys. B} {\bfseries 459} (1996) 97--112},
  \href{http://arxiv.org/abs/hep-th/9509161}{{\ttfamily arXiv:hep-th/9509161}}.

\bibitem{Teleman:2014jaa}
C.~Teleman, ``{Gauge theory and mirror symmetry},''
  \href{http://arxiv.org/abs/1404.6305}{{\ttfamily arXiv:1404.6305 [math-ph]}}.

\bibitem{Nakajima:2015txa}
H.~Nakajima, ``{Towards a mathematical definition of Coulomb branches of
  $3$-dimensional $\mathcal{N}=4$ gauge theories, I},''
  \href{http://dx.doi.org/10.4310/ATMP.2016.v20.n3.a4}{{\em Adv. Theor. Math.
  Phys.} {\bfseries 20} (2016) 595--669},
  \href{http://arxiv.org/abs/1503.03676}{{\ttfamily arXiv:1503.03676
  [math-ph]}}.

\bibitem{Braverman:2016wma}
A.~Braverman, M.~Finkelberg, and H.~Nakajima, ``{Towards a mathematical
  definition of Coulomb branches of $3$-dimensional $\mathcal{N} = 4$ gauge
  theories, II},'' \href{http://dx.doi.org/10.4310/ATMP.2018.v22.n5.a1}{{\em
  Adv. Theor. Math. Phys.} {\bfseries 22} (2018) 1071--1147},
  \href{http://arxiv.org/abs/1601.03586}{{\ttfamily arXiv:1601.03586
  [math.RT]}}.

\bibitem{Braverman:2017ofm}
A.~Braverman, M.~Finkelberg, and H.~Nakajima, ``{Ring objects in the
  equivariant derived Satake category arising from Coulomb branches (with an
  appendix by Gus Lonergan)},''
  \href{http://dx.doi.org/10.4310/ATMP.2019.v23.n2.a1i}{{\em Adv. Theor. Math.
  Phys.} {\bfseries 23} (6, 2019) 253--344},
  \href{http://arxiv.org/abs/1706.02112}{{\ttfamily arXiv:1706.02112
  [math.RT]}}.

\bibitem{Braverman:2016pwk}
A.~Braverman, M.~Finkelberg, and H.~Nakajima, ``{Coulomb branches of $3d$
  $\mathcal{N}=4$ quiver gauge theories and slices in the affine
  Grassmannian},'' \href{http://dx.doi.org/10.4310/ATMP.2019.v23.n1.a3}{{\em
  Adv. Theor. Math. Phys.} {\bfseries 23} (2019) 75--166},
  \href{http://arxiv.org/abs/1604.03625}{{\ttfamily arXiv:1604.03625
  [math.RT]}}.

\bibitem{Bullimore:2015lsa}
M.~Bullimore, T.~Dimofte, and D.~Gaiotto, ``{The Coulomb Branch of 3d
  ${\mathcal{N}= 4}$ Theories},''
  \href{http://dx.doi.org/10.1007/s00220-017-2903-0}{{\em Commun. Math. Phys.}
  {\bfseries 354} no.~2, (2017) 671--751},
  \href{http://arxiv.org/abs/1503.04817}{{\ttfamily arXiv:1503.04817
  [hep-th]}}.

\bibitem{Crooks:2020}
P.~Crooks, ``{K}ostant–{T}oda lattices and the universal centralizer,''
  \href{http://dx.doi.org/10.1016/j.geomphys.2020.103595}{{\em Journal of
  Geometry and Physics} {\bfseries 150} (Apr, 2020) 103595}.

\bibitem{Ginzburg:2018}
V.~Ginzburg and D.~Kazhdan, ``Algebraic symplectic manifolds arising in
  {`Sicilian theories'},'' {\em in preparation} .

\bibitem{Frenkel:200657}
I.~B. Frenkel and K.~Styrkas, ``{Modified regular representations of affine and
  Virasoro algebras, VOA structure and semi-infinite cohomology},''
  \href{https://www.sciencedirect.com/science/article/pii/S0001870805002288}{{\em
  Advances in Mathematics} {\bfseries 206} no.~1, (2006) 57--111}.

\bibitem{half_lattice}
S.~Berman, C.~Dong, and S.~Tan, ``Representations of a class of lattice type
  vertex algebras,'' {\em Journal of Pure and Applied Algebra} {\bfseries 176}
  no.~1, (2002) 27--47.

\bibitem{Malikov:1999}
F.~Malikov, V.~Schechtman, and A.~Vaintrob, ``{Chiral de Rham Complex},''
  \href{http://dx.doi.org/10.1007/s002200050653}{{\em Communications in
  Mathematical Physics} {\bfseries 204} no.~2, (Jul, 1999) 439–473}.

\bibitem{Adamovic2019}
D.~Adamovi{\'{c}}, ``Realizations of simple affine vertex algebras and their
  modules: The cases $\widehat{\mathfrak{sl}(2)}$ and
  $\widehat{\mathfrak{osp(1,2)}}$,''
  \href{https://doi.org/10.1007/s00220-019-03328-4}{{\em Communications in
  Mathematical Physics} {\bfseries 366} no.~3, (Mar, 2019) 1025--1067}.

\bibitem{Arakawa2017introduction}
T.~Arakawa, ``{Introduction to W-algebras and their representation theory},''
  2017.

\bibitem{Arakawa_Moreau:2018}
T.~Arakawa and A.~Moreau, ``{Joseph ideals and lisse minimal $W$-algebras},''
  \href{http://dx.doi.org/10.1017/S1474748016000025}{{\em Journal of the
  Institute of Mathematics of Jussieu} {\bfseries 17} no.~2, (2018) 397–417}.

\bibitem{Arakawa_Moreau:2017157}
T.~Arakawa and A.~Moreau, ``Sheets and associated varieties of affine vertex
  algebras,''
  \href{https://www.sciencedirect.com/science/article/pii/S0001870817302311}{{\em
  Advances in Mathematics} {\bfseries 320} (2017) 157--209}.

\bibitem{Arakawa_Moreau:2018542}
T.~Arakawa and A.~Moreau, ``{On the irreducibility of associated varieties of
  W-algebras},''
  \href{https://www.sciencedirect.com/science/article/pii/S0021869317303423}{{\em
  Journal of Algebra} {\bfseries 500} (2018) 542--568}. Special Issue dedicated
  to Efim Zelmanov.

\bibitem{Arakawa_Moreau:jet}
T.~Arakawa and A.~Moreau, ``{Arc spaces and vertex algebras},'' 2021.
\newblock
  \url{https://www.imo.universite-paris-saclay.fr/~moreau/CEMPI-arc_space-vertex_algebras.pdf}.
  Draft edition.

\bibitem{Li:2005}
H.~{Li}, ``{Abelianizing Vertex Algebras},''
  \href{http://dx.doi.org/10.1007/s00220-005-1348-z}{{\em Communications in
  Mathematical Physics} {\bfseries 259} no.~2, (2005) 391--411},
  \href{http://arxiv.org/abs/math/0409140}{{\ttfamily arXiv:math/0409140
  [math.QA]}}.

\bibitem{Frenkel:2007}
E.~Frenkel, {\em {Langlands Correspondence for Loop Groups}}.
\newblock Cambridge University Press, 2007.

\bibitem{deBoer:1993iz}
J.~de~Boer and T.~Tjin, ``{The Relation between quantum W algebras and Lie
  algebras},'' \href{http://dx.doi.org/10.1007/BF02103279}{{\em Commun. Math.
  Phys.} {\bfseries 160} (1994) 317--332},
  \href{http://arxiv.org/abs/hep-th/9302006}{{\ttfamily arXiv:hep-th/9302006}}.

\bibitem{Arakawa:2015loc}
T.~Arakawa, T.~Kuwabara, and F.~Malikov, ``{Localization of Affine
  W-Algebras},'' \href{https://doi.org/10.1007/s00220-014-2183-x}{{\em
  Communications in Mathematical Physics} {\bfseries 335} no.~1, (Apr, 2015)
  143--182}.

\bibitem{Kuwabara:2021hyp}
T.~Kuwabara, ``{Vertex Algebras Associated with Hypertoric Varieties},''
  \href{https://doi.org/10.1093/imrn/rnaa031}{{\em International Mathematics
  Research Notices} {\bfseries 2021} no.~18, (02, 2020) 14316--14378},
  \href{http://arxiv.org/abs/https://academic.oup.com/imrn/article-pdf/2021/18/14316/40425307/rnaa031.pdf}{{\ttfamily
  https://academic.oup.com/imrn/article-pdf/2021/18/14316/40425307/rnaa031.pdf}}.

\bibitem{CMUnpublished}
C.~Meneghelli.
\newblock Private communication.

\bibitem{Joseph:1974min}
A.~Joseph, ``Minimal realizations and spectrum generating algebras,''
  \href{https://doi.org/10.1007/BF01646204}{{\em Communications in Mathematical
  Physics} {\bfseries 36} no.~4, (Dec, 1974) 325--338}.

\bibitem{Adamovic2003115}
D.~Adamović, ``Classification of irreducible modules of certain subalgebras of
  free boson vertex algebra,''
  \href{https://www.sciencedirect.com/science/article/pii/S0021869303005416}{{\em
  Journal of Algebra} {\bfseries 270} no.~1, (2003) 115--132}.

\bibitem{deBoer:1996ti}
J.~de~Boer and L.~Feher, ``{Wakimoto realizations of current algebras: An
  Explicit construction},'' \href{http://dx.doi.org/10.1007/s002200050228}{{\em
  Commun. Math. Phys.} {\bfseries 189} (1997) 759--793},
  \href{http://arxiv.org/abs/hep-th/9611083}{{\ttfamily arXiv:hep-th/9611083}}.

\bibitem{Frenkel:geo2008}
E.~Frenkel and D.~Gaitsgory, ``{Geometric realizations of Wakimoto modules at
  the critical level},'' \href{https://doi.org/10.1215/00127094-2008-017}{{\em
  Duke Mathematical Journal} {\bfseries 143} no.~1, (2008) 117 -- 203}.

\bibitem{Arakawa:tdo2011}
T.~Arakawa, D.~Chebotarov, and F.~Malikov, ``Algebras of twisted chiral
  differential operators and affine localization of $\mathfrak{g}$-modules,''
  \href{https://doi.org/10.1007/s00029-010-0040-0}{{\em Selecta Mathematica}
  {\bfseries 17} no.~1, (Jan, 2011) 1--46}.

\bibitem{Frenkel:lan2006}
E.~Frenkel and D.~Gaitsgory, {\em Local geometric Langlands correspondence and
  affine Kac-Moody algebras},
  \href{http://dx.doi.org/10.1007/978-0-8176-4532-8_3}{pp.~69--260}.
\newblock Birkh{\"a}user Boston, Boston, MA, 2006.
\newblock \url{https://doi.org/10.1007/978-0-8176-4532-8_3}.

\bibitem{Kato:2020jup}
S.~Kato, ``{Darboux coordinates on the BFM spaces},''
  \href{http://arxiv.org/abs/2008.01310}{{\ttfamily arXiv:2008.01310
  [math.RT]}}.

\bibitem{BBS:unpub}
C.~Beem, D.~Butson, and S.~Nair.
\newblock In progress.

\end{thebibliography}\endgroup
\end{document}